\newif\iflong
\newif\ifshort
\definecolor{my-blue}{HTML}{1E88E5}
\definecolor{my-blue-light}{HTML}{64B5F6}
\definecolor{my-blue-verylight}{HTML}{B3E5FC}
\definecolor{my-blue-dark}{HTML}{0D47A1}
\theoremstyle{plain}
\newtheorem{theorem}{Theorem}
\newtheorem{example}{Example}
\newtheorem{proposition}[theorem]{Proposition}
\newtheorem{definition}{Definition}
\newtheorem{claim}{Claim}
\newenvironment{claimproof}{\par\noindent\underline{Proof:}}{\leavevmode\unskip\penalty9999 \hbox{}\nobreak\hfill\quad\hbox{$\blacksquare$}}
\definecolor{MyOrange}{rgb}{1, 0.5, 0}
\definecolor{MyLightOrange}{rgb}{1, 0.9, 0.7}
\definecolor{MyGrey}{rgb}{0.3, 0.3, 0.3}
\definecolor{MyLightGrey}{rgb}{0.9, 0.9, 0.9}
\definecolor{MyGreen}{rgb}{0, 0.6, 0}
\definecolor{MyBlue}{rgb}{0, 0, 0.6}
\definecolor{MyLightBlue}{rgb}{0.7, 0.8, 1}
\definecolor{MyRed}{rgb}{0.7, 0, 0}
\definecolor{MyLightRed}{rgb}{1, 0.7, 0.7}
\definecolor{MyLightYellow}{HTML}{f9f6ec}
\newcommand{\cmark}{{\color{MyGreen}\ding{51}}}
\newcommand{\xmark}{{\color{MyRed}\ding{55}}}
\newcommand{\tuple}[1]{\left\langle #1 \right\rangle}
\renewcommand{\phi}{\varphi}
\newcommand{\Rplus}{\ensuremath{\mathbb{R}_{\geq 0}}}
\DeclareMathOperator*{\argmax}{arg\,max}
\DeclareMathOperator*{\argmin}{arg\,min}
\newcommand{\agentSet}{\mathcal{N}}
\newcommand{\projSet}{\mathcal{P}}
\newcommand{\allocSet}{\mathcal{A}}
\newcommand{\profile}{\boldsymbol{A}}
\newcommand{\pbRule}{F}
\newcommand{\share}{\mathit{share}}
\newcommand{\fairshare}{\mathit{fairshare}}
\newcommand{\bigO}{\mathcal{O}}
\newcommand{\complexP}{\ensuremath{\mathtt{P}}}
\newcommand{\complexNP}{\ensuremath{\mathtt{NP}}}
\newcommand{\complexcoNP}{\ensuremath{\mathtt{coNP}}}
\newcommand{\complexFPT}{\ensuremath{\mathtt{FPT}}}
\newcommand{\rulex}{MES$_{\share{}}$}
\renewcommand{\paragraph}[1]{\medskip\noindent\textbf{#1}}
\title{Fairness in Participatory Budgeting via Equality of Resources\thanks{This is the full version of a paper published in the proceedings on the 22nd International Conference on Autonomous Agents and Multiagent Systems~\citep{MREL23aamas}.}
}
\author{
	Jan Maly$^1$ \hspace{1em}
	Simon Rey$^1$ \hspace{1em}
	Ulle Endriss$^1$ \hspace{1em}
	Martin Lackner$^2$
}
\date{
	$^1$ILLC, University of Amsterdam\\
	$^2$DBAI, TU Wien\\[1em]
	\{j.f.maly, s.j.rey, u.endriss\}@uva.nl \hspace{2em}
	lackner@dbai.tuwien.ac.at
}
\begin{document}
	
	\maketitle
	
	\begin{abstract}
		\noindent
		We introduce a family of normative principles to assess fairness in the context of participatory budgeting. These principles are based on the fundamental idea that budget allocations should be fair in terms of the resources invested into meeting the wishes of individual voters. This is in contrast to earlier proposals that are based on specific assumptions regarding the satisfaction of voters with a given budget allocation. We analyse these new principles in axiomatic, algorithmic, and experimental terms.
	\end{abstract}
	
	\section{Introduction}
	
	Budgeting, \textit{i.e.}, the allocation of money or other sparse resources to specific projects,
	is one of the key decisions any political body or organisation has to take.
	Participatory budgeting (PB) was developed in the 1990s as a method for making such decisions in
	a more democratic way, by putting it to a vote~\citep{Caba04, Shah07}. It has found rapid adoption worldwide, in particular at the municipal level~\citep{WNT21}.
	
	The most common form of eliciting the views of the voters is to ask which projects they approve of~\citep{GoelKSA19}, but the question of which voting rules should be used to select the projects to be funded is not yet settled. In this paper, we advocate for the use of fairness measures based on the 
	resources spent on behalf of individual voters to guide the search for the best and most equitable voting rules. Specifically, we focus on a measure of \emph{equality of resources}~\citep{Dworkin1, Dworkin2} called the \emph{share}, recently introduced by \citet{LMR21}. It is computed by equally dividing the cost of each funded project amongst the supporters of that project.
	
	Let us briefly motivate this approach.
	Suppose 40\% of citizens of a city support funding
	more cycling infrastructure, while 60\% are in favour of more car infrastructure.
	Then, under the kind of voting rule usually employed in practice,
	where the projects with the most support get selected, only car-centric projects would get funded.
	This clearly is not desirable.
	Instead, one would hope to select a
	\emph{proportional} outcome~\citep{FGM16, ALT18, PPS21NeurIPS, AL21, LCG22},
	funding a mixture of cycling and car infrastructure projects.
	But how should one define proportionality?
	So far, the literature has focused on generalisations
	from approval-based multiwinner voting, where we
	often aim for a proportional distribution of \emph{satisfaction} amongst voters,
	assuming that each approved candidate provides the same satisfaction
	to all of their supporters~\citep{FSST17, LaSk23}.
	However, lifting this assumption to the richer framework of PB is questionable, as projects vary in cost.
	
	So, given their approval ballot, how should one infer a voter's satisfaction for a set
	of selected projects?\footnote{In principle, there is also another possibility,
		namely to directly ask voters for their satisfaction (or utility).
		But this would imposes a significant cognitive burden on them,
		and it is debatable whether it is even possible to elicit utilities in a way
		that allows for interpersonal comparisons~\citep{Hicks,Blackorby02}.}
	Most researchers assume that the satisfaction of a voter is either equal for all
	approved projects~\citep{PPS21NeurIPS, TaFa19, LCG22}
	or proportional to the cost of a project~\citep{FGM16, ALT18, LMR21, SBY22}.
	Both assumptions are problematic.
	Regarding the former, for example, in the 2019 Toulouse participatory budgeting process
	a cycling infrastructure project costing 390,000 euros (``Tous à la Ramée à vélo! A pied, en trottinette et rollers!'') and a project
	about installing birdhouses costing 2,000 euros (``Nichoirs pour mésanges'') were proposed (see Pabulib~\citep{pabulib}).
	It seems unlikely that both projects offer the same utility to their supporters.
	But full proportionality of utility and cost also seems implausible, because the cost effectiveness
	of different projects can vary widely. Consider, for example, a scenario where two parks of
	equal size could be built in different neighbourhoods. Now, it might be more expensive to
	build the park in one neighbourhood due to higher property prices. In that case,
	there is no reason to assume that the more expensive park offers more utility to its
	supporters.
	Crucially, these two examples show that higher cost sometimes implies higher utility,
	while sometimes it does not. This makes it hard to imagine a way of
	estimating utilities in a principled way that works for both examples.
	
	To circumvent these difficulties 
	we propose to develop
	fairness measures that are not based on \emph{equality of welfare}
	but that instead aim for \emph{equality of resources}, an idea first proposed by Ronald Dworkin~\citep{Dworkin1,Dworkin2}. In other words, we do not aim for a fair distribution of
	satisfaction, but instead we strive to invest the same effort into satisfying each voter.
	The advantage is that the amount of resources spent is a quantity we can measure objectively.
	This idea can be formalised through the notion of \emph{share} \citep{LMR21}.
	Ideally, we want to find a budget allocation where each voter has the same share.
	Let us emphasise that we do not interpret the share as a measure of satisfaction, but rather of a distribution of resources.
	Interestingly, fairness notions based on share also provide an explanation on
	how each voter's part of the budget was spent. In contrast to the related notion
	of \emph{priceability}~\citep{PS20}, here all supporters of a project ``contribute'' the same amount. As priceability allows for unequal contribution of voters, it does not
	qualify as a notion based on equality of resources.
	
	In this paper, we investigate the viability of the share as a basis of
	fairness notions in PB in several complementary ways.
	First, we propose several axioms that formalise what it means for an outcome
	to be fair in terms of share. We observe that it is not always possible to guarantee
	everyone their \emph{fair share}, which we define as
	the budget divided by the numbers of voters. For this reason, we consider several relaxations,
	such as the \emph{justified share}, where we only aim to allocate to a voter the resources they
	deserve by virtue of being part of a cohesive group.
	Moreover, we identify a version of MES~\citep{PPS21NeurIPS},
	that satisfies all share-based axioms known to be satisfiable by a tractable voting rule.
	Finally, using data from a large number of real-life PB exercises~\citep{pabulib}, we analyse to what extent it is possible to provide voters with their fair share in practice
	and how well established PB voting rules meet share-based desiderata.
	
	\paragraph{Roadmap.} After introducing our model in Section~\ref{sec:preliminaries}, we investigate the fair share in Section~\ref{sec:fair-share} and the justified share in Section~\ref{sec:justifiedShare}.
	Relationships between the concepts are discussed in Section~\ref{sec:taxonomy},
	while Section~\ref{sec:experiments} reports on an experimental study. 
	Some proofs only appear in the appendix.

	\section{The Model}
	\label{sec:preliminaries}
	
	A PB problem is described by an \emph{instance} $I = \tuple{\projSet, c, b}$ where $\projSet$ is the set of available \emph{projects}, $c: \projSet \rightarrow \mathbb{N}$ is the \emph{cost function}---mapping any project $p \in \projSet$ to its cost $c(p) \in \mathbb{N}$---and $b \in \mathbb{N}$ is the \emph{budget limit}. We write $c(P)$ instead of $\sum_{p \in P} c(p)$ for sets of projects $P \subseteq \projSet$.
	If $c(p)=1$ for all $p \in \projSet$, then $I$ belongs to the \emph{unit-cost setting}.
	
	Let $\agentSet = \{1, \ldots, n\}$ be a set of \emph{agents}. 
	When facing a PB instance, each agent is asked to submit a (not necessarily feasible) \emph{approval ballot} representing the subset of projects they approve of. The approval ballot of agent $i \in \agentSet$ is denoted by $A_i \subseteq \projSet$, and the resulting vector $\profile = (A_1, \ldots, A_n)$ 
	is called a \emph{profile}. We assume without loss of generality that every project is approved by at least one agent.
	
	Given an instance $I = \tuple{\projSet, c, b}$, we need to select a set of projects $\pi \subseteq \projSet$ to implement. Such a \emph{budget allocation}~$\pi$ has to be \emph{feasible}, \textit{i.e.}, we require $c(\pi) \leq b$. Let $\allocSet(I) = \{\pi \subseteq \projSet \mid c(\pi) \leq b\}$ be the set of feasible budget allocations for~$I$.
	
	Choosing allocations is done by means of (resolute) \emph{PB rules}. Such a rule $\pbRule$ is a function that maps any instance $I$ and profile $\profile$ over $I$ to a single feasible budget allocation $\pbRule(I, \profile) \in \allocSet(I)$. Whenever ties occur between several outcomes, we assume that they are broken in a fixed and consistent manner (\textit{e.g.}, lexicographically).
	
	We are going to propose several fairness properties we might want a rule to satisfy. All of these properties will be defined in terms of the fundamental notion of an agent's \emph{share}.
	
	\begin{definition}[Share]
		Given an instance $I = \tuple{\projSet, c, b}$ and a profile~$\profile$, the share of agent~$i$ for a subset of projects $P \subseteq \projSet$ is defined as follows:
		\[\share(I, \profile, P, i) = \sum_{p \in P \cap A_i} \frac{c(p)}{|\{A \in \profile \mid p \in A\}|}\]
	\end{definition}
	
	\noindent
	When clear from context, we shall omit the arguments of $I$ and~$\profile$.
	We interpret an agent's share as the amount of resources spent by the decision maker on satisfying the needs of that agent.
	It is important to note that the share cannot be captured via independent cardinal utility functions as the share of an agent depends on the ballots submitted by the other agents.
	Let us illustrate the concept of share on an example.
	
	\begin{example}
		Consider a PB instance with three projects such that $c(p_1) = 8$ and $c(p_2) = c(p_3) = 2$, and a budget limit $b = 8$. The profile $\profile$ is composed of four ballots such that $A_1 = A_2 = \{p_1, p_2\}$, $A_3 = \{p_1\}$ and $A_4 = \{p_3\}$.
		The most commonly used PB rule, which greedily picks the most approved projects,
		would select the bundle $\{p_1,p_2\}$. This gives agents $1$ and $2$ a share of
		$\nicefrac{6}{3} + \nicefrac{2}{2} = 3$, agent $3$ a share of $\nicefrac{6}{3} = 2$
		and agent $4$ a share of $0$. We claim that $\{p_1,p_3\}$ would be a fairer bundle
		as it gives agent $1$, $2$ and $3$ a share of $\nicefrac{6}{3} = 2$ and agent $4$ a 
		share of $\nicefrac{2}{1}= 2$. Hence, all agents have the same share.
	\end{example}
	
	\noindent
	In the sequel, we will introduce several properties of budget allocations. We shall extend them to properties of rules so that rule $F$ is said to satisfy fairness property $\mathcal{F}$ defined for budget allocations if, for every $I$ and $\profile$, the outcome $F(I,\profile)$ satisfies~$\mathcal{F}$.
	
	\section{Fair Share}
	\label{sec:fair-share}
	
	The first fairness property we study is based on the idea that
	each voter deserves $\nicefrac{1}{n}$ of the budget---a fundamental idea familiar, for instance, from the classical fair division (``cake cutting'') literature~\citep{RobertsonWebb1998}.
	So a perfect allocation would give each voter a share of $\nicefrac{b}{n}$ (unless they do not approve of enough projects for this to be possible).
	
	\begin{definition}[Fair Share]
		Given an instance $I = \tuple{\projSet, c, b}$ and a profile $\profile$, the fair share of agent $i \in \agentSet$ is defined as:
		\[\fairshare(i) = \min\{ \nicefrac{b}{n}, \share(A_i, i) \}.\]
		A budget allocation $\pi \in \allocSet(I)$ is said to satisfy fair share (FS) if for every agent we have $\share(\pi, i) \geq \fairshare(i)$.
	\end{definition}
	
	\noindent It is easy to see that for some instances, no budget allocation would provide a fair share, and thus no rule can possibly satisfy FS. Take for instance two projects of cost~1, a budget limit of~1 and two agents each approving of a different project. Then, both agents have a fair share of $\min\{\nicefrac{1}{2}, 1\} = \nicefrac{1}{2}$. However, whichever project is selected (at most one can be selected), the share of one agent would be~0.
	
	Even more, we can show that no polynomial-time computable rule can return an FS allocation whenever one exists. Indeed, checking whether an FS allocation exists is \complexNP-complete.
	
	\begin{proposition}\label{prop:FS-NP}
		Given an instance $I = \tuple{\projSet, c, b}$ and a profile~$\profile$, checking whether there exists a feasible budget allocation $\pi \in \allocSet(I)$ that satisfies FS is \complexNP-complete, even in the unit-cost setting.
	\end{proposition}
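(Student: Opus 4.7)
Membership in \complexNP{} is straightforward: a feasible budget allocation $\pi$ is a polynomial-size certificate, and given $\pi$ one can compute $\share(\pi,i)$ and $\share(A_i,i)$ and thus verify $\share(\pi,i) \geq \fairshare(i)$ for every $i \in \agentSet$ in polynomial time. So the real content is \complexNP-hardness in the unit-cost setting.

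For hardness, the plan is to reduce from \textsc{Exact Cover by 3-Sets} (X3C): given a universe $U=\{u_1,\dots,u_{3q}\}$ and a family $S_1,\dots,S_m \subseteq U$ with $|S_j|=3$, decide whether $q$ of the $S_j$ partition $U$. Given such an instance, build a PB instance with one unit-cost project $p_j$ for each $S_j$, one agent for each element of $U$, budget $b=q$, and let agent $u$ approve exactly the projects $p_j$ with $u \in S_j$ (which is nonempty without loss of generality, discarding uncovered elements). Since every project has exactly three supporters and $n=3q$, one computes $b/n = 1/3$ and $\share(A_u,u) = |A_u|/3 \geq 1/3$, so $\fairshare(u)=1/3$ for every agent.

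The correctness of the reduction then rests on two observations. If $\{S_{j_1},\dots,S_{j_q}\}$ is an exact cover, then $\pi=\{p_{j_1},\dots,p_{j_q}\}$ is feasible and each agent $u$ gets share exactly $1/3$, so FS holds. Conversely, suppose $\pi \in \allocSet(I)$ satisfies FS; since $b=q$, we have $|\pi|\leq q$, and for every agent $u$, the share $\share(\pi,u)$ equals $|\pi \cap A_u|/3$, which must be at least $1/3$, so every element is covered by some chosen set. Covering $3q$ elements with at most $q$ triples forces $|\pi|=q$ and forces the chosen sets to form an exact cover.

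The one step that requires a bit of care is making sure the fair share really equals $1/3$ for every agent in the constructed instance, which is why we silently delete any element not contained in any $S_j$ (such an element would make the instance a trivial no-instance of X3C and can be handled separately). Beyond that, I do not expect any serious obstacle; the potential pitfall would be if one tried to read cost-dependence into the argument, but because all costs are 1 and each project has exactly 3 supporters, the share calculation collapses to a pure covering condition and the reduction goes through cleanly.
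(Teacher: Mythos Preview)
Your argument is correct. Both your proof and the paper's use a set-cover-style reduction exploiting that unit-cost projects with exactly three supporters contribute share $\nicefrac{1}{3}$, so FS collapses to a covering condition. The difference is in the source problem: the paper reduces from general \textsc{3-Set-Cover} (with an arbitrary target $k\le |U|$), which forces them to introduce $2|U|+3$ auxiliary voters approving a single dummy project $p^\star$ in order to push $\nicefrac{b}{n}$ down to at most $\nicefrac{1}{3}$ and to absorb the extra unit of budget. You instead reduce from \textsc{X3C}, where $|U|=3q$ and the target is exactly $q$; this makes $\nicefrac{b}{n}=\nicefrac{1}{3}$ on the nose and eliminates the need for any auxiliary gadget. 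Your route is therefore a bit more elementary and the backward direction (covering $3q$ elements with at most $q$ triples forces a partition) is immediate, whereas the paper's version accommodates the looser parameter $k$ at the cost of the padding construction. Either way the hardness goes through; your version is a clean simplification of the same idea.
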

	
	\noindent
	The proof involves a reduction from \textsc{3-Set-Cover} \citep{FurerY11}.
	Due to these shortcomings of FS, we introduce two relaxations that are inspired by
	relaxations of important, satisfaction based fairness axioms,
	Extended Justified Representation up to one project (EJR-1)~\citep{PPS21NeurIPS}
	and Local-Budget Proportional Justified Representation (Local BPJR)~\citep{ALT18}.
	
	\begin{definition}[FS up to one project]
		Given an instance $I = \tuple{\projSet, c, b}$ and a profile $\profile$, a budget allocation $\pi \in \allocSet(I)$ is said to satisfy fair share up to one project (FS-1) if, for every agent~$i$, there is a project $p \in \projSet$ such that:
		\[\share(\pi \cup \{p\}, i) \geq \fairshare(i).\]
	\end{definition}
	
	\noindent
	Thus, FS-1 requires that every agent is only one project away from their fair share.
	Unfortunately, FS-1 is not always satisfiable. 
	
	\begin{proposition}
		There exist instances $I$ for which no budget allocation $\pi \in \allocSet(I)$ provides FS-1.\label{prop:FS-1-may-not-exist}
	\end{proposition}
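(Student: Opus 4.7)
The plan is to exhibit a small explicit counterexample in which every feasible allocation leaves at least one agent strictly below $\fairshare(i)$ even after adding a single project. Concretely, I would take $n = 3$ agents, three projects $p_1, p_2, p_3$ each of cost $3$, budget $b = 5$, and ballots $A_1 = \{p_1, p_2\}$, $A_2 = \{p_2, p_3\}$, $A_3 = \{p_1, p_3\}$, so that the agents correspond to the edges of a triangle on the projects.

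The first step is to compute the fair share. Each project is approved by exactly two agents, so $\share(A_i, i) = 3/2 + 3/2 = 3$ and $b/n = 5/3$, giving $\fairshare(i) = \min\{5/3, 3\} = 5/3$ for every $i$. Crucially, the contribution of any single project to one of its supporters is $3/2$, strictly below $5/3$.

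The second step is to rule out every feasible allocation. Since $c(\{p_i, p_j\}) = 6 > 5 = b$ for any pair of projects, the only feasible allocations are $\emptyset$ and the three singletons $\{p_i\}$. For $\pi = \emptyset$, every agent has share $0$ and adding any approved project only raises the share to $3/2 < 5/3$, violating FS-1. For any singleton $\pi = \{p_i\}$, there is exactly one agent whose ballot excludes $p_i$ (for instance agent~$2$ when $\pi = \{p_1\}$); this agent has share $0$, and any single project they approve contributes $3/2 < 5/3$, while projects outside their ballot contribute nothing, so no $p \in \projSet$ achieves $\share(\pi \cup \{p\}, i) \geq \fairshare(i)$ and FS-1 fails.

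There is no substantial obstacle: the construction is elementary and the only delicate step is choosing parameters so that $c(p)/\lvert\{A \in \profile \mid p \in A\}\rvert < b/n$ while the budget simultaneously forces $\lvert\pi\rvert \leq 1$. Once both inequalities hold, FS-1 must fail for every feasible allocation by the pigeonhole-style argument above on the three agents.
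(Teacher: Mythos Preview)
Your proposal is correct and uses essentially the same instance as the paper (three cost-$3$ projects, budget $b=5$, three agents whose ballots are the edges of a triangle on the projects), with the same reasoning that each supporter gets only $3/2 < 5/3$ from a single approved project while feasibility permits at most one project. The only difference is a harmless relabeling of the ballots.
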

	
	\begin{proof}
		\ifshort
		Consider an instance with three projects of cost 3 and a budget limit $b = 5$. Consider three agents, with approval ballots $\{p_1, p_2\}$, $\{p_1, p_3\}$ and $\{p_2, p_3\}$, respectively.
		\fi
		
		\iflong
		Consider the following instance with three projects, three agents and a budget limit of $b = 5$.
		\begin{center}\small
			\begin{tabular}{cccc}
				\toprule
				& $p_1$ & $p_2$ & $p_3$ \\
				\midrule
				Cost & 3 & 3 & 3 \\
				\midrule
				$A_1$ & \cmark & \cmark & \xmark \\
				$A_2$ & \cmark & \xmark & \cmark \\
				$A_3$ & \xmark & \cmark & \cmark \\
				\bottomrule
			\end{tabular}
		\end{center}
		\fi
		
		Here the fair share of each agent is $\nicefrac{5}{3} \approx 1.67$. As a single project only yields a share of $1.5$ to an agent who approves of it, for any agent to reach their fair share threshold, two projects must be selected. However, a feasible budget allocation can select at most one project, meaning that for one agent none of the projects they approve of will be selected.
	\end{proof}
	
	\noindent As for FS, we can show that deciding whether an FS-1 budget allocation exists is \complexNP-complete.
	
	\begin{proposition}
		\label{prop:FS1_NP_complete}
		Given an instance $I = \tuple{\projSet, c, b}$ and a profile $\profile$, checking whether there exists a feasible budget allocation $\pi \in \allocSet(I)$ that satisfies FS-1 is \complexNP-complete, even in the unit-cost setting.
	\end{proposition}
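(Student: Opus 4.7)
Membership in \complexNP{} is immediate: guess a feasible allocation $\pi$ and, for every agent $i$, check in polynomial time whether some $p \in \projSet$ satisfies $\share(\pi \cup \{p\}, i) \geq \fairshare(i)$.

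For hardness, I would reduce from 3-Set-Cover, analogously to Proposition~\ref{prop:FS-NP} but calibrating the parameters so that the ``up-to-one-project'' slack of FS-1 exactly matches the cover condition. Given an instance $(U, \mathcal{S}, k)$ with $|U| = m$, I first pre-process it (preserving the answer) so that $k = 2m/3$ and every element of $U$ lies in at least two triples; this is achieved by padding with small gadgets---for example, a ``paired'' gadget adding one fresh element and two triples jointly containing it (contributing $+1$ to $m$ and to the mandatory cover size), and a ``doubled triangle'' gadget adding four fresh elements and four triples each containing three of them (contributing $+4$ to $m$ and $+2$ to the cover size). From the pre-processed instance, I construct a unit-cost PB instance with one project $p_S$ per triple $S$; for each element $u$, two identical element-agents approving $\{p_S : u \in S\}$; and budget $b = k$.

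Then $n = 2m$, every agent's fair share equals $b/n = 1/3$, and each chosen project contributes $1/6$ to the share of each of its six supporters. Hence the FS-1 condition for an element-agent associated with $u$ reduces to $|\pi \cap \{p_S : u \in S\}|/6 + 1/6 \geq 1/3$, i.e., at least one triple containing $u$ is in~$\pi$. Since $|\pi| \leq b = k$, any FS-1 allocation yields a 3-set cover of size at most $k$; conversely, a cover of size $k$ yields an FS-1 allocation, where the ``one-project'' witness for each element-agent is provided by an approved but unchosen triple, which exists by the multiplicity-$\geq 2$ guarantee.

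The main obstacle is the pre-processing step: I need the exact ratio $k = 2m/3$---so that the FS-1 threshold lands at precisely twice the share of a single project---to hold simultaneously with the ``every element in at least two triples'' condition, all while preserving solvability of the 3-Set-Cover instance. Once the pre-processing is justified, the remainder of the argument is a direct share-arithmetic calculation in the spirit of Proposition~\ref{prop:FS-NP}.
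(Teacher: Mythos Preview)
Your core reduction coincides with the paper's: from \textsc{3-Set-Cover}, build a unit-cost instance with two element-agents per universe element and budget $b=k$, so that each selected project contributes share $\nicefrac{1}{6}$ to each of its six supporters; once the fair share is calibrated to lie in $(\nicefrac{1}{6},\nicefrac{1}{3}]$, FS-1 holds exactly when every element-agent has at least one approved project in~$\pi$, i.e., the chosen sets form a cover of size at most~$k$.

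The gap is in your preprocessing. Your ``paired'' gadget cannot deliver $(\Delta m,\Delta k)=(1,1)$: a triple containing one fresh element must contain two further elements; if those are also fresh then $\Delta m \geq 3$, and if they are existing then the new triple may cover them ``for free'', so the minimum cover size need not increase by exactly one. Your doubled-triangle gadget is sound ($\Delta m = 4$, $\Delta k = 2$), but by itself it only reaches targets with $3k-2m$ even and non-negative, so it cannot always land on $k' = 2m'/3$. You also over-constrain by demanding $k = 2m/3$ exactly; any $k$ with $\nicefrac{m}{3} < k \leq \nicefrac{2m}{3}$ already makes the share arithmetic work.

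The paper sidesteps both difficulties. Instead of preprocessing the cover instance to force multiplicity~$\geq 2$, it introduces \emph{two} projects $p_j^1,p_j^2$ per set $s_j$, so every agent approves at least two projects automatically and always has an unchosen witness for the ``up to one project'' clause. Instead of forcing a single ratio, it does a case split: for $\nicefrac{|U|}{3} < k \leq \nicefrac{2|U|}{3}$ the construction is as above; for $\nicefrac{2|U|}{3} < k < |U|$ it pads with $3k - 2|U|$ dummy agents, all approving one extra project~$p^\star$, so that $n = 3k$ and the fair share becomes exactly~$\nicefrac{1}{3}$. This buys a clean two-case argument with no gadget arithmetic on the \textsc{3-Set-Cover} side.
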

	
	\noindent 
	Alternatively, we can require that every project that is not part of the winning
	budget allocation should give some voter at least their fair share when that project is added.%
	\footnote{We stress that this formulation of Local-FS
		relies on our assumption that every project~$p$ is approved by at least one agent.}
	
	\begin{definition}[Local-FS]
		Given an instance $I = \tuple{\projSet, c, b}$ and a profile $\profile$, a budget allocation $\pi \in \allocSet(I)$ is said to satisfy local fair share (Local-FS) if there is no project $p \in \projSet \setminus \pi$ such that, for all agents $i \in \agentSet$ with $p \in A_i$, we have:
		\[\share(\pi \cup \{p\}, i) < \fairshare(i).\]
	\end{definition}
	
	\noindent
	Intuitively, if there exists a project $p$ that could be added to the budget allocation $\pi$ without any supporter of $p$ receiving at least their fair share, then
	every supporter of $p$ receives strictly less than their fair share and
	one of the following holds:
	
	\begin{itemize}[noitemsep]
		\item $p$ can be selected without exceeding the budget limit $b$;
		\item some voter $i^\star$ receives more than their fair share.
	\end{itemize}
	
	\noindent In the first case, it is clear that $p$ should be selected and thus $\pi$ must be deemed unfair.
	In the second case,
	it might be considered fairer to exchange one
	project supported by $i^\star$ with $p$.
	In this sense, the property can be seen as an ``upper quota'' property, as we
	have to add projects such that no voter receives more than their fair share as long as possible.
	
	In contrast to FS-1, we can always find an allocation that satisfies Local-FS.
	Indeed, an adaption of the \emph{Method of Equal Shares} (MES)\footnote{The rule used to be named Rule~X until recently. The new name---\emph{method of equal
			share}---is not related to the definition of share, first introduced by \citet{LMR21}.}
	\citep{PPS21NeurIPS}
	satisfies Local-FS.
	Our definition closely resembles the definition
	of MES for PB with additive utilities~\citep{PPS21NeurIPS}.
	We adapt it by plugging in the share.
	Note that this rule can be executed in polynomial time.
	
	\begin{definition}[\rulex{}]
		\label{def:mes_share}
		Given an instance $I=\tuple{\projSet, c, b}$ and a profile $\profile$, \rulex{} constructs a budget allocation $\pi$, initially empty,
		iteratively as follows.
		A load $\ell_i: 2^\projSet \rightarrow \Rplus$, is associated with every agent $i \in \agentSet$, initialised as
		$\ell_i(\emptyset) = 0$ for all $i \in \agentSet$.
		It represents the total contribution of the agents for a given budget allocation.
		Given $\pi$ and a scalar $\alpha \geq 0$, the contribution of agent $i \in \agentSet$ for project
		$p \in \projSet \setminus \pi$ is defined by:
		\[\gamma_i(\pi, \alpha, p) = \min\left(\nicefrac{b}{n} - \ell_i(\pi), \alpha \cdot \share(\{p\}, i)\right).\]
		For a specific budget allocation $\pi$, a project $p \in \projSet \setminus \pi$ is said to be $\alpha$-affordable,
		for $\alpha \geq 0$, if $\sum_{i \in \agentSet} \gamma_i(\pi, \alpha, p) \cdot \mathds{1}_{p \in A_i} = c(p)$, \textit{i.e.}, if $p$ can be afforded with none of its supporters contributing more than $\alpha$.
		
		At a given round with current budget allocation $\pi$, if no project is $\alpha$-affordable for any $\alpha$,
		\rulex{} terminates.
		Otherwise, it selects a project $p \in \projSet \setminus \pi$ that is $\alpha^\star$-affordable where $\alpha^\star$
		is the smallest $\alpha$ such that one project is $\alpha$-affordable ($\pi$ is updated to $\pi \cup \{p\}$).
		The agents' loads are then updated: If $p \notin A_i$, then $\ell_i(\pi \cup \{p\}) = \ell_i(\pi)$, and
		otherwise $\ell_i(\pi \cup \{p\}) = \ell_i(\pi) + \gamma_i(\pi, \alpha, p)$.
	\end{definition}
	
	\begin{theorem}
		\label{thm:Rule-X-Local-FS}
		\rulex{} satisfies Local-FS.
	\end{theorem}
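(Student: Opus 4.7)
I argue by contradiction. Assume the output $\pi$ of \rulex{} violates Local-FS, so some $p^\star \in \projSet \setminus \pi$ satisfies $\share(\pi \cup \{p^\star\}, i) < \fairshare(i) \leq \nicefrac{b}{n}$ for every supporter $i$ of $p^\star$. Writing $n_{p^\star}$ for the number of supporters of $p^\star$, this rewrites as $\share(\pi, i) < \nicefrac{b}{n} - \nicefrac{c(p^\star)}{n_{p^\star}}$ for each such $i$, and summing over supporters gives $\sum_{i : p^\star \in A_i} \share(\pi, i) < n_{p^\star} \cdot \nicefrac{b}{n} - c(p^\star)$. The objective is to contradict the termination condition of \rulex{} by showing $p^\star$ is $\alpha$-affordable at~$\pi$; by the definition of affordability this reduces to establishing $\sum_{i : p^\star \in A_i} (\nicefrac{b}{n} - \ell_i(\pi)) \geq c(p^\star)$.

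Combining the two inequalities, the natural route is to bound $\sum_{i} \ell_i(\pi) \leq \sum_{i} \share(\pi, i)$ restricted to supporters of $p^\star$. This holds ``on average'' over the whole electorate, since $\sum_{i \in \agentSet} \ell_i(\pi) = c(\pi) = \sum_{i \in \agentSet} \share(\pi, i)$, and even pointwise in the ``easy'' case where every $\alpha$ chosen by \rulex{} is at most $1$: each contribution then satisfies $\gamma_i(\pi_q, \alpha_q, q) \leq \alpha_q \cdot \nicefrac{c(q)}{n_q} \leq \nicefrac{c(q)}{n_q}$, so summing over approved selected projects yields $\ell_i(\pi) \leq \share(\pi, i)$.

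The main obstacle is the general case, where a project may be selected with $\alpha > 1$, causing individual voters to be ``overcharged'' relative to their share of that project; one can then construct executions where $\sum_{i : p^\star \in A_i} \ell_i$ strictly exceeds $\sum_{i : p^\star \in A_i} \share(\pi, i)$, so neither the pointwise nor the restricted-sum bound survives. To bypass this, I would switch to a step-by-step inductive argument along the execution of \rulex{}, maintaining the invariant that for every unselected $p \in \projSet \setminus \pi_t$, either (a) $p$ is $\alpha$-affordable at $\pi_t$, or (b) some supporter $i^\star$ of $p$ already satisfies $\share(\pi_t \cup \{p\}, i^\star) \geq \fairshare(i^\star)$.

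The base case ($\pi_0 = \emptyset$) is immediate: if $p$ fails to be $\alpha$-affordable from the start, then $c(p) > n_p \cdot \nicefrac{b}{n}$, hence $\share(\{p\}, i) = \nicefrac{c(p)}{n_p} > \nicefrac{b}{n} \geq \fairshare(i)$ for every supporter, so (b) holds. In the inductive step, if (b) held for $p$ at $\pi_t$ it persists under the monotonicity $\share(\pi_{t+1} \cup \{p\}, i^\star) \geq \share(\pi_t \cup \{p\}, i^\star)$; the only delicate case is when a newly-selected $q_{t+1}$ turns a previously-affordable $p$ into a non-affordable one. Here I would argue that supporters in the overlap $\{i : p \in A_i\} \cap \{i : q_{t+1} \in A_i\}$ see an equal increase of $\nicefrac{c(q_{t+1})}{n_{q_{t+1}}}$ in their share, and exploit the minimality of $\alpha_{q_{t+1}}$ among projects affordable at $\pi_t$ (which controls by how much a single supporter can be overloaded at this step relative to the slack $p$ had) to conclude that one such supporter meets the Local-FS threshold for~$p$ at $\pi_{t+1}$. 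Applying the invariant at termination, when (a) is ruled out for every $p \notin \pi$ by the termination condition, yields Local-FS for the final allocation.
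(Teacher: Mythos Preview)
Your proposal has a genuine gap in the inductive step for the ``general case''. The argument you sketch---examining the overlap of supporters of $p$ and $q_{t+1}$ and exploiting the minimality of $\alpha_{q_{t+1}}$ to force (b) for $p$---is too vague to be a proof, and it is not clear it can be completed as stated: once a round with $\alpha>1$ has occurred, loads $\ell_i$ and shares $\share(\pi_t,i)$ have diverged, so the event ``$p$ ceases to be $\alpha$-affordable'' (a statement about $\sum_i(\nicefrac{b}{n}-\ell_i)$) no longer translates into the pointwise share inequality you need for (b). Your hand-wave about ``slack'' and ``overloading'' does not bridge this.

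The observation you are missing, and which the paper's proof hinges on, is that the minimal $\alpha^\star$ is \emph{always} at least $1$: since $\gamma_i(\pi,\alpha,p)\leq\alpha\cdot\share(\{p\},i)$, summing over supporters gives $\sum_i\gamma_i\leq\alpha\cdot c(p)$, so no project is $\alpha$-affordable for $\alpha<1$. Hence, as long as \emph{some} project is $1$-affordable, \rulex{} selects one and each supporter pays \emph{exactly} their share of it. Let $k$ be the first round at which no project is $1$-affordable; then $\ell_i(\pi_k)=\share(\pi_k,i)$ for every $i$. Now ``not $1$-affordable'' is a pointwise condition: for each unselected $p$ there is a supporter $i$ with $\nicefrac{b}{n}-\ell_i(\pi_k)<\share(\{p\},i)$, which via $\ell_i=\share$ becomes $\share(\pi_k\cup\{p\},i)>\nicefrac{b}{n}\geq\fairshare(i)$. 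So $\pi_k$ already satisfies Local-FS, and monotonicity carries this to the final $\pi\supseteq\pi_k$. In the language of your invariant, this shows that (b) holds for \emph{every} unselected $p$ already at round $k$, so the delicate inductive step you worry about for rounds with $\alpha>1$ is never needed.
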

	
	\begin{proof}
		Given a budget allocation $\pi$ and a scalar $\alpha > 0$, we say that agent $i \in \agentSet$ contributes in full to project $p \in A_i$ if we have: $\gamma_i(\pi, \alpha, p) = \alpha \cdot \share(\{p\}, i)$.
		
		During a run of \rulex{}, all the supporters of a project $p \in \projSet$ contribute in full to $p$ if and only if $p$ is $1$-affordable.
		In this case, for all supporters $i$ of $p$, we have $\ell_i(\{p\}) = \share(\{p\}, i)$.
		Moreover, if a project $p$ is $\alpha$-affordable but at least one voter cannot contribute in full to $p$,
		then $\alpha > 1$. \rulex{} only terminates when no project is $\alpha$-affordable, for any $\alpha$.
		Therefore, there is a round where no project $p$ is $1$-affordable. Let $k$ be
		the first such round and let $\pi_k$ be the budget allocation before round $k$.
		It follows that every project in $\pi_k$ was $1$-affordable and hence
		$\ell_i(\pi_k) = \share(\pi_k, i)$ for all $i \in \agentSet$.
		As no project $p$ is $1$-affordable in round $k$, for no projects in $\projSet \setminus \pi_k$ can all the supporters contribute in full to. Thus, for every $p\in \projSet \setminus \pi_k$,
		there is a voter $i \in \agentSet$ such that
		$\nicefrac{b}{n} - \ell_i(\pi_k) < \share(\{p\}, i)$.
		Using the fact that $\ell_i(\pi_k) = \share(\pi_k, i)$ and the additivity of share, it follows that
		$\share(\pi_k \cup \{p\}, i) > \nicefrac{b}{n}$.
		So $\pi_k$ satisfies Local-FS. As \rulex{} returns an allocation $\pi$ with $\pi_k \subseteq \pi$, it satisfies Local-FS.
	\end{proof}
	
	\noindent
	In fact, the proof of Theorem~\ref{thm:Rule-X-Local-FS} establishes a slightly stronger statement:
	there is no project $p \in \projSet \setminus \pi$ such that for all agents $i \in \agentSet$ with $p \in A_i$ we have $\share(\pi \cup \{p\}, i) \leq \nicefrac{b}{n}$.
	In other words, any project added to $\pi$ gives at least one voter \emph{more} than
	their fair share.
	
	\iflong
	\begin{figure}
		\centering
		\noindent\fbox{%
			\parbox{0.8 \textwidth}{%
				\begin{align}
					&\text{\textbf{maximise} \ } \sum_{i=1}^n s_i &\\
					&\text{\textbf{subject to}: \ }\notag\\
					& y_p \in \{0,1\}  & \text{for }p \in \projSet\\
					& s_i \leq \sum_{p \in A_i}  y_p \cdot \frac{c(p)}{|\{A \in \profile \mid p \in A\}|}
					& \text{for }i\in\agentSet\\
					& s_i \leq \nicefrac{b}{n}
					& \text{for }i\in\agentSet\\
					& \sum_{p \in \projSet} y_p \cdot c(p) \leq b
			\end{align}}\qquad}
		\caption{An ILP for maximizing the total capped share of all voters.}\label{fig:ilp-fso-rule}
	\end{figure}
	
	We conclude this section by presenting an ILP for finding a budget allocation~$Y\subseteq \projSet$ that maximises
	\begin{align}
		\sum_{i\in\agentSet}\min\{ \nicefrac{b}{n}, \share(Y, i) \},\label{eq:ilp-opt}
	\end{align}
	\textit{i.e.}, we maximise the total share of all voters but cap the share of individual voters at their fair share ($\nicefrac{b}{n}$).
	Note that budget allocations satisfying fair share achieve an optimal value for \eqref{eq:ilp-opt} (which is $b$).
	Thus, this ILP finds budget allocations satisfying FS if they exist.
	The ILP is shown in Figure~\ref{fig:ilp-fso-rule}. Variable $s_i$ is the share of voter~$i$ bounded by $\frac b n$.
	Variable $y_p$ indicates whether project~$p$ is selected in the winning budget allocation.
	\fi
	
	\section{Justified Share}
	\label{sec:justifiedShare}
	
	Local-FS and FS-1 require the outcome to be, in some sense, close to satisfying FS. 
	Another idea for weakening FS is to spend on a voter
	only the resources 
	they 
	can claim to deserve by virtue of being part of a cohesive group. 
	This idea is inspired by the well-known axioms of justified representation
	extensively studied in the literature on approval-based committee elections \citep{ABCEFW17,AEHLSS18,PS20,LaSk23}.
	Before exploring this idea further, let us define what we mean by cohesive groups.
	
	\begin{definition}[$P$-cohesive groups]
		Given an instance $I = \tuple{\projSet, c, b}$ and a profile $\profile$, for a set of projects $P \subseteq \projSet$ we say that a non-empty group of agents $N \subseteq \agentSet$
		is $P$-cohesive, if $P \subseteq \bigcap_{i \in N} A_i$ and $\frac{|N|}{n} \geq \frac{c(P)}{b}$.
	\end{definition}
	
	\noindent
	So a group~$N$ is cohesive relative to a set~$P$ of projects if, first, everyone in $N$ approves of all the projects in $P$ and, second, $N$ is large enough---relative to the size~$n$ of the society and the budget~$b$---so as to ``deserve'' the resources needed for the projects in~$P$.
	
	In the unit-cost setting, one of the strongest proportionality properties known to 
	be satisfiable by a polynomial-time-computable rule is called Extended Justified Representation (EJR) \citep{AEHLSS18,PS20}. 
	It ensures that one member of every cohesive group receives the satisfaction deserved due to their cohesiveness.
	\citet{PPS21NeurIPS} generalised EJR to the 
	setting of PB with additive utilities. 
	This generalisation will be our blue-print for modifying EJR
	to deal with share. 
	Ideally, we would want to satisfy the following property.
	
	\begin{definition}[Strong Extended Justified Share]
		Given an instance $I = \tuple{\projSet, c, b}$ and a profile $\profile$, a budget allocation $\pi \in \allocSet(I)$ is said to satisfy strong extended justified share (Strong-EJS) if for all $P \subseteq \projSet$ and all $P$-cohesive groups $N$, we have $\share(\pi, i) \geq \share(P, i)$ for all agents $i \in N$.
	\end{definition}
	
	\noindent
	The idea behind Strong-EJS is the following: since every $P$-cohesive group $S$ controls enough budget to fund $P$, every agent in $S$ deserves to enjoy at least as much share as what they would have gotten if $P$ had been the outcome.
	Intuitively, this is very similar to Strong-EJR, a property that is known not to be always 
	satisfiable~\citep{ABCEFW17}. The same holds for Strong-EJS: there exist instances for which no budget allocation can satisfy this axiom.

	\begin{example}\label{Exp:PJS-impos}
		Consider the following instance and profile with three projects $p_1, p_2$ and $p_3$ of cost $1$, a budget limit $b = 2$, and four agents $1, \ldots, 4$ such that $1$ approves project $p_1$, $2$ approves project $p_1$ and $p_2$, $3$ approves $p_1$ and $p_3$ and $4$ approves $p_2$ and $p_3$.
		
		Note that $\{1, 2, 3\}$ is $\{p_1\}$-cohesive, $\{2, 4\}$ is $\{p_2\}$-cohesive and $\{3, 4\}$ is $\{p_3\}$-cohesive. Hence, to satisfy Strong-EJS, one needs to select all three projects which is not possible within the given budget limit.
	\end{example}
	
	\noindent
	Observe that in this scenario it is not even possible to guarantee each $P$-cohesive
	group the same \emph{average} share as they receive from~$P$.
	We thus weaken Strong-EJS and introduce (simple) EJS.
	
	\begin{definition}[Extended Justified Share]
		Given an instance $I = \tuple{\projSet, c, b}$ and a profile $\profile$, a budget allocation $\pi \in \allocSet(I)$ is said to satisfy extended justified share (EJS), if for all $P \subseteq \projSet$ and all $P$-cohesive groups $N$, there is an agent $i \in N$ such that $\share(\pi, i) \geq \share(P, i)$.
	\end{definition}
	
	\noindent The difference between Strong-EJS and EJS is the switch from a universal to an existential quantifier: for the former, we impose a lower bound on the share of \emph{every} agent in a cohesive group, while for the latter this lower bound only applies to one agent of each cohesive group.
	Therefore, in Example~\ref{Exp:PJS-impos} both $\{p_1,p_3\}$ and $\{p_2,p_3\}$ satisfy 
	EJS, as either agent~$3$ or agent~$4$ satisfies the share requirement for every cohesive group.
	
	We observe that EJR and EJS, while similar in spirit, do not coincide, not even in the unit-cost case.

	\begin{example}\label{EJS-EJR}
		Consider an instance with four voters and six projects with unit cost and $b = 4$,
		where the approvals are as follows:
		$A_1 =\{p_1,p_2,p_3\}$, $A_2 =\{p_1,p_2,p_4\}$, $A_3 = A_4 = \{p_4,p_5,p_6\}$.
		
		It is now easy to check that $\{p_3,p_4,p_5,p_6\}$ satisfies EJS but not EJR, while $\{p_1,p_4,p_5,p_6\}$ satisfies EJR but not EJS.
	\end{example}
	
	\noindent
	The first question that presents itself 
	is whether EJS is always achievable. This is indeed the case. To see this, one just needs
	to adapt the well-known greedy cohesive procedure for satisfying EJR, which was first introduced
	by \citet{ABCEFW17} and extended to PB by \citet{PPS21NeurIPS}, to the share setting. 
	
	\begin{proposition}\label{Prop:Sat-EJS}
		For every instance $I = \tuple{\projSet, c, b}$ and every profile $\profile$, there exists a budget allocation $\pi \in \allocSet(I)$ that satisfies EJS.
	\end{proposition}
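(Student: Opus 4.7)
The plan is to adapt the standard greedy cohesive procedure to our share setting, relying on one key observation: whenever $N$ is $P$-cohesive and $i \in N$, we have $P \subseteq A_i$, so
\[
\share(P, i) = \sum_{p \in P} \frac{c(p)}{|\{A \in \profile \mid p \in A\}|},
\]
a quantity that depends only on $P$ and not on the particular $i \in N$. Write $s(P)$ for this common value. EJS then simply requires that for every cohesive pair $(P, N)$ there exists $i \in N$ with $\share(\pi, i) \geq s(P)$.

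I would enumerate all pairs $(P, N)$ with $N$ being $P$-cohesive in non-increasing order of $s(P)$, breaking ties arbitrarily. Starting from $\pi := \emptyset$ and $U := \agentSet$, I iterate through the pairs in this order and, for each pair $(P, N)$, check whether $N \subseteq U$ and $\share(\pi, i) < s(P)$ for every $i \in N$; if so, set $\pi := \pi \cup P$ and $U := U \setminus N$. Since the groups $N$ used in the course of the procedure are pairwise disjoint subsets of $\agentSet$, $P$-cohesiveness gives $c(P) \leq |N| \cdot \nicefrac{b}{n}$ per iteration, so the total cost added is at most $\sum |N| \cdot \nicefrac{b}{n} \leq b$, establishing feasibility.

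For the EJS property, suppose for contradiction that some cohesive pair $(P^*, N^*)$ witnesses a violation: every $i \in N^*$ has $\share(\pi, i) < s(P^*)$ in the final allocation. Consider the moment when $(P^*, N^*)$ is examined. If $N^* \subseteq U$ at that time, then by monotonicity of share the triggering condition also held at that moment, so the procedure would have added $P^*$ to $\pi$, giving $\share(\pi, i) \geq s(P^*)$ for every $i \in N^*$ from then on---contradiction. Otherwise, some $i^* \in N^*$ was removed from $U$ in an earlier iteration that processed a cohesive pair $(P_j, N_j)$ with $i^* \in N_j$ and with $s(P_j) \geq s(P^*)$ by the ordering; that iteration put $P_j \subseteq \pi$, so $\share(\pi, i^*) \geq s(P_j) \geq s(P^*)$ from that point on, again contradicting the supposed violation. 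The main subtlety is recognising that $\share(P, \cdot)$ is constant across any $P$-cohesive group, which is precisely what makes ordering by $s(P)$ meaningful across different cohesive pairs and lets the standard greedy cohesive argument go through with essentially no further modification.
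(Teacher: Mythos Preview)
Your proposal is correct and follows essentially the same greedy cohesive procedure as the paper: both select, in non-increasing order of the common share value $s(P)$, cohesive pairs whose agent set is disjoint from those already used, add the corresponding projects, and argue feasibility via disjointness of the selected groups and EJS via the key observation that $\share(P,\cdot)$ is constant across any $P$-cohesive group. The only cosmetic differences are that the paper restricts to $P\subseteq\projSet\setminus\pi$ at each step (making the selected $P_j$'s disjoint rather than merely bounding their union) and omits your extra check $\share(\pi,i)<s(P)$, neither of which affects the argument.
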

	
	\noindent However, the greedy approach in general needs exponential time.
	This turns out to be unavoidable, unless $\complexP = \complexNP$,
	as can be shown by a standard reduction from \textsc{Subset Sum}.
	
	\begin{theorem}\label{Thm:EJS-hard}
		There is no polynomial-time algorithm that, given an instance $I$ and a profile $\profile$ as input, always computes a budget allocation satisfying EJS, unless $\complexP = \complexNP$.
	\end{theorem}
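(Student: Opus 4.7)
The plan is to reduce from \textsc{Subset Sum}, which is \complexNP-complete. Given an instance $(a_1, \ldots, a_k; T)$ of \textsc{Subset Sum}, I would construct in polynomial time a PB instance $I = \tuple{\projSet, c, T}$ with $\projSet = \{p_1, \ldots, p_k\}$ and $c(p_i) = a_i$, together with a profile $\profile$ on $n$ agents (any $n \geq 1$, say $n = 1$) where every agent approves every project. The reduction then concludes as follows: run the hypothetical polynomial-time EJS algorithm on $(I, \profile)$ to obtain an allocation $\pi$, and answer ``yes'' to \textsc{Subset Sum} iff $c(\pi) = T$.

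Correctness rests on analysing EJS in this symmetric instance. Because every project is approved by all $n$ agents, $\share(\pi, i) = c(\pi)/n$ and $\share(P, i) = c(P)/n$ for every agent $i$ and every $P \subseteq \projSet$. Moreover, every $P \subseteq \projSet$ with $c(P) \leq T$ admits the cohesive group $N = \agentSet$, since $|N|/n = 1 \geq c(P)/b$. Hence the EJS requirement for the cohesive group $\agentSet$ with witness $P$ collapses to the single inequality $c(\pi) \geq c(P)$, and $\pi$ satisfies EJS on this instance iff $c(\pi) \geq \max\{c(P) : P \subseteq \projSet,\, c(P) \leq T\}$.

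From this equivalence, both directions fall out immediately. If the \textsc{Subset Sum} instance is a yes-instance, there exists $P^\star \subseteq \projSet$ with $c(P^\star) = T$, so EJS forces $c(\pi) \geq T$ and feasibility forces $c(\pi) \leq T$; thus $c(\pi) = T$ exactly, and $\{i : p_i \in \pi\}$ is a valid witness. If the instance is a no-instance, no feasible allocation has cost $T$, so any EJS allocation (which exists by Proposition~\ref{Prop:Sat-EJS}) satisfies $c(\pi) < T$. Consequently the test $c(\pi) = T$ correctly decides \textsc{Subset Sum} in polynomial time, implying $\complexP = \complexNP$.

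I expect the main obstacle to be not the reduction itself, which is conceptually light, but arguing that the existential quantifier in EJS (``\emph{some} agent $i \in N$'') really pins down $c(\pi)$ and cannot be circumvented by clever choice of the witness agent. The symmetry trick of having all agents approve all projects is what handles this: every agent's share is identical, so the existential quantifier is equivalent to the universal one, collapsing each cohesive group's demand to a single allocation-level cost inequality. One should also verify that the construction remains polynomial under the binary encoding of \textsc{Subset Sum}, which is clear since the project costs $a_i$ are copied directly and the number of projects equals $k$.
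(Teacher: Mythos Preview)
Your proposal is correct and follows essentially the same reduction as the paper: the paper also reduces from \textsc{Subset Sum} via an instance with one project per number, costs equal to the numbers, budget equal to the target, and a single agent approving everything, and then tests whether the returned EJS allocation has cost exactly $b$. Your write-up is in fact more detailed than the paper's, explicitly justifying why symmetry collapses the existential quantifier and why the full-group cohesiveness constraint is the binding one.
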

	
	\noindent
	On the other hand, we recall that the greedy approach generally runs in
	\complexFPT{}-time, when parameterized by the number of projects~\citep{ABCEFW17}.
	This is also the case in the share setting.
	
	\begin{proposition}
		\label{prop:runningTimeGreedyEJS}
		For every instance $I = \tuple{\projSet, c, b}$ and every profile $\profile$,
		we can compute a budget allocation $\pi \in \allocSet(I)$ that satisfied EJS
		in time $\bigO(n \cdot 2^{|\projSet|})$.
	\end{proposition}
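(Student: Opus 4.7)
The plan is to implement the greedy cohesive procedure underlying Proposition~\ref{Prop:Sat-EJS} with careful bookkeeping. The algorithm maintains an allocation $\pi \subseteq \projSet$ (initially empty) and a set $\agentSet' \subseteq \agentSet$ of ``active'' agents (initially $\agentSet$). It iterates over every subset $P \subseteq \projSet$ and checks whether the restricted supporter set $N_P := \{i \in \agentSet' : P \subseteq A_i\}$ satisfies the cohesion inequality $|N_P|/n \geq c(P)/b$; if so, it adds $P$ to $\pi$ and removes $N_P$ from $\agentSet'$. Feasibility of $\pi$ is inherited from Proposition~\ref{Prop:Sat-EJS}: the successively removed groups $N_j$ are pairwise disjoint and each contributes $c(P_j) \leq (b/n)|N_j|$ to the total cost, so $c(\pi) \leq (b/n)\sum_j |N_j| \leq b$.

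For the running-time analysis, the key is that there are $2^{|\projSet|}$ candidate subsets. I would precompute, via dynamic programming on the lattice of subsets, the supporter set $\mathit{supp}(P) := \{i \in \agentSet : P \subseteq A_i\}$ for every $P$: starting from $\mathit{supp}(\emptyset) = \agentSet$ and extending $\mathit{supp}(P)$ to $\mathit{supp}(P \cup \{p\})$ by intersection with the support set of $p$ takes $O(n)$ per new subset, for $O(n \cdot 2^{|\projSet|})$ total. During the main sweep, restricting to the current $\agentSet'$ and testing the cohesion inequality take $O(n)$ per subset via bit-vector operations, giving an overall running time of $O(n \cdot 2^{|\projSet|})$.

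The main obstacle is ensuring that a single pass through the subsets suffices to yield EJS, rather than requiring the whole scan to be restarted after every removal. To guarantee this, I would process subsets in a carefully chosen order, for example by decreasing cardinality (or by decreasing share contribution $\share(P, i)$ aggregated over $N_P$), so that whenever some $P^*$-cohesive group $N^*$ has all of its members removed before $P^*$ is reached, the earlier removals are caused by supersets $P_j \supseteq P^*$ (or subsets giving higher share). By monotonicity and additivity of $\share$ in its set argument, this yields $\share(\pi, i^*) \geq \share(P_j, i^*) \geq \share(P^*, i^*)$ for at least one $i^* \in N^*$, establishing EJS by an exchange argument paralleling the proof of Proposition~\ref{Prop:Sat-EJS}.
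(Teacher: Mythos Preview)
Your single-pass idea is close in spirit to the paper's approach, but the proposed orderings do not work, and this is the crux of the EJS argument. Processing by decreasing \emph{cardinality} fails: take unit costs, $n=3$, $b=3$, agent~$1$ approving $\{p_1,p_2,p_3\}$ and agents~$2,3$ approving only $\{p_1\}$. Then $\{p_1\}$ is $\{1,2,3\}$-cohesive with per-agent share $\nicefrac{1}{3}$, while $\{p_2\}$ is $\{1\}$-cohesive with share~$1$; cardinality cannot distinguish them, and if you handle $\{p_1\}$ first you output $\pi=\{p_1\}$, violating EJS for the $\{p_2\}$-cohesive group $\{1\}$. The same example breaks ``share aggregated over $N_P$'': both singletons have aggregate~$1$, so bad tie-breaking again yields $\pi=\{p_1\}$. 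More generally, the assertion that earlier removals are caused by \emph{supersets} $P_j\supseteq P^*$ is simply false---a set with fewer but less-popular projects can carry strictly higher per-agent share.

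The order that does make the single pass work is decreasing $\share(P,\cdot)$, \textit{i.e.}\ the per-agent value $\sum_{p\in P} c(p)/|\{A\in\profile:p\in A\}|$ (this is the same for every full supporter of $P$). With that order, if any member $i$ of a $P^*$-cohesive group was removed earlier by some $P_j$, then $P_j\subseteq A_i$ and $\share(P_j,\cdot)\geq\share(P^*,\cdot)$, so $\share(\pi,i)\geq\share(P_j,i)\geq\share(P^*,i)$; and if no member was removed, the group is still present and $P^*$ gets selected. Note that this is exactly the selection criterion of the greedy procedure in Algorithm~\ref{algo:greedyEJS}.

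The paper's proof sidesteps the whole ordering issue by \emph{not} attempting a single pass. It simply runs Algorithm~\ref{algo:greedyEJS} as-is and observes two things: (i)~the while-loop executes at most $n$ times, since every round adds at least one agent to~$N^\star$; and (ii)~within a round, the maximisation over pairs $(N',P')$ can be reduced to an enumeration over the $2^{|\projSet|}$ subsets~$P'$ alone, because the only $P'$-cohesive group that matters is the full set of approvers of~$P'$ (every $P'$-cohesive group is contained in it, and all members share the same $\share(P',\cdot)$). Correctness is then inherited verbatim from Proposition~\ref{Prop:Sat-EJS}, with no fresh EJS argument needed. Your precomputation of supporter sets is a nice touch for the per-subset cost, but the correctness gap above should be closed first.
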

	
	\iflong
	\begin{figure*}
		\centering
		\noindent\fbox{%
			\parbox{0.8 \textwidth}{%
				\begin{align}
					&\text{\textbf{maximise} \ } \epsilon &\\
					&\text{\textbf{subject to}: \ }\notag\\
					& x_i \in \{0,1\}  & \text{for }i\in\agentSet\\
					& z_p \in \{0,1\}  & \text{for } p \in \projSet\\
					& z_p + x_i - 1 \leq \mathbb{I}_{p\in A_i}  & \text{for }i\in\agentSet\text{, }p \in \projSet\label{line:coh1}\\
					& \frac 1 n \cdot \sum_{i\in\agentSet} x_i \geq \frac{1}{b} \cdot \sum_{p\in \projSet} z_p\cdot c(p) \label{line:coh2}\\
					& x_i \cdot \share(\pi, i) +\epsilon  \leq\sum_{p \in \projSet} z_p \cdot \frac{c(p)}{|\{A \in \profile \mid p \in A\}|}
					& \text{for }i\in\agentSet\label{line:contra}
			\end{align}}\qquad}
		\caption{An ILP for verifying whether a budget allocation $\pi$ satisfies EJS.}\label{fig:ilp-ejs}
	\end{figure*}
	
	To conclude this section we investigate the problem of verifying whether a given budget allocation satisfies EJS. It is easy to prove that, as is the case for EJR \citep{ABCEFW17}, this problem is in \complexcoNP{}. However, we can define an ILP solving it. A suitable one is presented in Figure~\ref{fig:ilp-ejs}.
	It searches for a set~$P\subseteq \projSet$ and a set $N\subseteq \agentSet$ that certifies a violation of the EJS property, \textit{i.e.}, $N$ is $P$-cohesive and all voters receive a strictly larger share from $P$ than from $\pi$.
	Here, variable~$x_i$ indicates whether $i\in N$ and variable~$z_p$ indicates whether $p\in P$.
	Conditions~\eqref{line:coh1} and~\eqref{line:coh2} enforce that $N$ is indeed $P$-cohesive.
	Condition~\eqref{line:contra} implies that $\share(\pi, i) < \share(P, i)$ for all $i\in N$.
	The inequality in Condition~\eqref{line:contra} is only strict for $\epsilon>0$.
	Consequently, $\pi$ fails EJS if and only if this ILP yields a solution with $\epsilon>0$.
	\fi
	
	\ifshort\noindent\fi We have seen that EJS can always be satisfied. However, this is not entirely satisfactory, given that
	no tractable rule can satisfy it. Unfortunately, in many PB applications, the
	use of intractable rules is not practical due to the large instance sizes.
	Therefore, we try to find fairness notions that can be satisfied in polynomial time
	by relaxing EJS.
	
	First, similar to the property of EJR up to one project (EJR-1) proposed by \citet{PPS21NeurIPS},
	we can define EJS up to one project, requiring that at least one agent in every cohesive
	group is at most one project away from being satisfied.\footnote{We note that in Definition~\ref{def:ejs-1} we require that $\share(\pi \cup \{p\}, i) \geq \share(P, i)$ instead of a strict inequality as used in the definition of EJR-1. Our rationale is that adding one project guarantees to satisfy the EJS condition (but not more than that).}
	
	\begin{definition}[EJS-1]\label{def:ejs-1}
		Given an instance $I = \tuple{\projSet, c, b}$ and a profile $\profile$, a budget allocation $\pi \in \allocSet(I)$ is said to satisfy extended justified share up to one project (EJS-1) if for all $P \subseteq \projSet$ and all $P$-cohesive groups $N$ there is an agent $i \in N$ for which there exists a project $p \in \projSet$ such that $\share(\pi \cup \{p\}, i) \geq \share(P, i)$.
	\end{definition}
	
	\noindent It is straightforward to adapt the proof of \citet{PPS21NeurIPS}
	that MES satisfies EJR up to one project to our setting to prove that
	\rulex{} satisfies EJS up to one project.
	
	\begin{proposition}\label{prop:Rule-X-EJS-1}
		\rulex{} satisfies EJS-1.
	\end{proposition}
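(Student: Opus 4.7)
The plan is to recognize that \rulex{} is the standard Method of Equal Shares applied to the additive pseudo-utility $u_i(p) := \share(\{p\}, i)$, which equals $c(p)/|\{A \in \profile \mid p \in A\}|$ when $p \in A_i$ and is $0$ otherwise. Under this identification, the contribution formula $\gamma_i(\pi, \alpha, p) = \min(\nicefrac{b}{n} - \ell_i(\pi), \alpha \cdot \share(\{p\}, i))$ and the $\alpha$-affordability condition in Definition~\ref{def:mes_share} coincide term by term with those of MES; the indicator $\mathds{1}_{p \in A_i}$ in the affordability equation is redundant, since $\share(\{p\}, i)$ vanishes outside $A_i$.

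Next, I would verify that EJS-1 translates precisely into EJR-1 under this utility. For any $P$-cohesive group $N$, the condition $P \subseteq A_i$ for every $i \in N$ implies that $u_i(P) = \share(P, i)$ takes a common value $\alpha$ on $N$. Combined with the identical quota condition $|N|/n \geq c(P)/b$, the group $N$ becomes cohesive at level $\alpha$ in the PPS21 sense, and the EJS-1 conclusion $\share(\pi \cup \{p\}, i) \geq \share(P, i)$ for some $i \in N$ and some $p \in \projSet$ is exactly the EJR-1 conclusion $u_i(\pi \cup \{p\}) \geq \alpha$. Invoking the result of \citet{PPS21NeurIPS} that MES satisfies EJR-1 then closes the proof.

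The only item that deserves careful attention is checking that the PPS21 proof of EJR-1 nowhere relies on structural features of the utility function beyond additivity and non-negativity, both of which are immediate for $\share(\cdot, i)$. I do not anticipate any real obstacle here, since that proof is entirely combinatorial in the contributions $\gamma_i$ and the loads $\ell_i$; it is nevertheless worth flagging this dependence explicitly in the write-up, so that a reader can validate the translation without having to revisit PPS21 in detail.
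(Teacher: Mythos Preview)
Your proposal is correct and matches the paper's approach exactly: the paper's proof consists solely of the remark that it is ``straightforward to adapt the proof of \citet{PPS21NeurIPS} that MES satisfies EJR up to one project to our setting,'' and you have spelled out precisely that adaptation by instantiating MES with the additive utility $u_i(p) = \share(\{p\}, i)$. The only minor caveat is that EJR-1 in \citet{PPS21NeurIPS} is stated with a strict inequality while EJS-1 (Definition~\ref{def:ejs-1}) uses a weak one, so the two conclusions are not literally identical as you write---but since strict implies weak, this only strengthens your argument.
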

	
	\noindent In particular, this implies, together with Example~\ref{EJS-EJR}, that already in the unit-cost case MES and \rulex{} are indeed
	different rules.
	
	Finally, note that we can define a local variant of EJS, based on a similar motivation as Local-FS.
	
	\begin{definition}[Local-EJS]
		Given an instance $I = \tuple{\projSet, c, b}$ and a profile $\profile$, a budget allocation $\pi \in \allocSet(I)$ is said to satisfy local extended justified share (Local-EJS), if there is no $P$-cohesive group $N$, where $P \subseteq \projSet$, for which there exists a project $p \in P \setminus \pi$ for which it holds for all agents $i \in N$ that $\share(\pi \cup \{p\}, i) < \share(P, i)$.
	\end{definition}
	
	\noindent The idea behind Local-EJS is that there is no $P$-cohesive group $N$ that can claim that they could ``afford''
	another project $p$ without a single voter in $N$ receiving more share than they deserve due to their $P$-cohesiveness.
	In this sense, any allocation that satisfies Local-EJS is a local optimum for any $P$-cohesive group.
	Now, in our setting we observe that Local-EJS is equivalent to a notion that could be called ``EJS up to any project''.
	
	\begin{proposition}
		\label{prop:LocalEJS<=>EJSX}
		Let $I = \tuple{\projSet, c, b}$ be an instance and $\profile$ a profile.
		An allocation $\pi$ satisfies Local-EJS if and only if for every $P \subseteq \projSet$ and
		$P$-cohesive group $N$ there exists an agent~$i$ such that for all projects
		$p \in P \setminus \pi$ we have $\share(\pi \cup \{p\}, i) \geq \share(P, i)$.
	\end{proposition}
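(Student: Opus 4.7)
The plan is to prove both directions separately. The ``if'' direction is immediate: given any $P$-cohesive group $N$ and any $p \in P \setminus \pi$, the uniform agent $i \in N$ guaranteed by the stronger condition on the right-hand side of the proposition also witnesses the per-project existential in the definition of Local-EJS.

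For the nontrivial ``only if'' direction, the first step is the key algebraic observation that, since every agent in a $P$-cohesive group $N$ approves all of $P$, the quantity $\share(P, i)$ does not depend on the choice of $i \in N$; writing $n_p$ as shorthand for $|\{A \in \profile \mid p \in A\}|$, it equals $\sum_{p' \in P} c(p')/n_{p'}$. Moreover, for $p \in P \setminus \pi$ we have $p \in A_i \setminus \pi$, so $\share(\pi \cup \{p\}, i) = \share(\pi, i) + c(p)/n_p$. The Local-EJS-style inequality $\share(\pi \cup \{p\}, i) \geq \share(P, i)$ therefore rewrites as $\share(\pi, i) \geq \share(P, i) - c(p)/n_p$, in which only the left-hand side depends on the choice of $i$ within $N$.

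The second step is to fix $P$ and a $P$-cohesive $N$ (with $P \setminus \pi = \emptyset$ handled separately as vacuous) and pick a ``worst-case'' project $q^* \in \argmin_{p \in P \setminus \pi} c(p)/n_p$, which \emph{maximises} the $p$-dependent threshold on the right-hand side above. Applying Local-EJS to $q^*$ yields some $i^* \in N$ with $\share(\pi, i^*) \geq \share(P, i^*) - c(q^*)/n_{q^*}$, and by the choice of $q^*$ this lower bound on $\share(\pi, i^*)$ dominates the corresponding threshold for every other $p \in P \setminus \pi$. Hence the same $i^*$ witnesses the condition for every such $p$ simultaneously, which is what the right-hand side of the proposition demands. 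I do not anticipate a genuine obstacle: the only subtle point is recognising that a uniform $i^*$ can be extracted from a per-$p$ existential precisely because the $i$-dependence of the relevant inequality collapses to $\share(\pi, i)$ once one restricts attention to agents in a cohesive group, which reduces the argument to a single ``worst-case $p$'' selection.
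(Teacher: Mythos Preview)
Your proof is correct, but it takes a somewhat different route from the paper's. Both arguments rest on the same key observation: for $i \in N$ and $p \in P$, the quantities $\share(P,i)$ and $\share(\{p\},i)$ are constant across $N$, so the Local-EJS inequality for a given $p$ reduces to a lower bound on $\share(\pi,i)$ alone. From there the two proofs diverge. The paper simply picks $i^*$ to be any agent in $N$ with maximal $\share(\pi,\cdot)$, and then, for each $p \in P \setminus \pi$, invokes Local-EJS to get some witness $i_p$; since $\share(\pi,i^*) \geq \share(\pi,i_p)$ and the increment $c(p)/n_p$ and the target $\share(P,\cdot)$ are the same for both agents, $i^*$ also witnesses the inequality for $p$. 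You instead fix a hardest project $q^* \in \argmin_{p \in P \setminus \pi} c(p)/n_p$, apply Local-EJS once to $q^*$ to obtain $i^*$, and then use the extremality of $q^*$ to conclude that the same $i^*$ clears the (no harder) threshold for every other $p$. The paper's version is marginally slicker in that it never needs to single out a particular project, while yours has the small aesthetic advantage of invoking Local-EJS only once rather than once per $p$; substantively the two are equivalent.
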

	
	\begin{proof}
		It is clear that the statement above implies Local-EJS. Now, let $\pi$ be an allocation
		that satisfies Local-EJS, let $P \subseteq \projSet$ be a set of projects and $N$
		a $P$-cohesive group. Let $i^* \in N$ be an agent with maximal share from $\pi$ in $N$.
		Consider $p \in P \setminus \pi$. By Local-EJS there is an agent $i_p$ such that
		$\share(\pi\cup\{p\}, i_p) > \share(P, i_p)$. By the choice of $i^*$ we have
		$\share(\pi, i^*) \geq \share(\pi, i_p)$. By the definition of share, it follows that
		$\share(\pi\cup\{p\}, i^*) > \share(P, i^*)$.
	\end{proof}
	
	\noindent
	From this equivalence, it is easy to see that Local-EJS implies EJS-1.
	Unfortunately, \rulex{} fails Local-EJS, as the next example shows.

	\begin{example}
		Consider an instance with five projects, a budget limit $b = 20$, and four agents where the costs are as follows:
		\[c(p_1) = 8, c(p_2) = 5, c(p_3) = c(p_4) = 2, c(p_5) = 10.\]
		Moreover, voters $1$ and $2$ approve projects $p_1$, $p_2$, $p_3$ and $p_4$ and
		voters $3$ and $4$ prove $p_3,p_4$ and $p_5$.
		With a suitable tie-breaking rule, \rulex{} will return the budget allocation $\pi = \{p_2, p_3, p_5\}$. Note that voters 1 and 2 are $\{p_1, p_4\}$-cohesive and would thus deserve to enjoy a share of 4.5. However, if we add $p_4$ to $\pi$, voters 1 and 2 would only have a share of 3.5, showing that $\pi$ fails Local-EJS.
	\end{example}
	
	\noindent
	Whether Local-EJS can always be satisfied in polynomial time remains an
	important open question.
	
	Finally, we observe a crucial difference between EJR and EJS:
	\rulex{} does not satisfy EJS in the unit cost setting, while MES satisfies EJR in the unit-cost setting \citep{PS20}.
	
	\begin{example}
		Assume that there are two voters 1 and 2, and three projects $p_1$, $p_2$ and $p_3$, all of cost 1. The budget limit is $b = 2$. Voter~1 approves of $p_1$ and $p_3$ and voter 2 of $p_2$ and $p_3$. Then voter 1 is $\{p_1\}$-cohesive and hence deserves a share of 1, the same applies to voter 2 and $\{p_2\}$. Nevertheless, with a suitable tie-breaking rule, \rulex{} would first select $p_3$. In that case. neither $\{p_1, p_2\}$, nor $\{p_2, p_3\}$ would satisfy EJS, as at least one voter will have only a share of $\nicefrac{1}{2}$.
	\end{example}
	
	\noindent
	However, it does satisfy Local-EJS in the unit-cost setting.
	
	\begin{theorem}\label{Thm:Rule-X-Local-EJS}
		\rulex{} satisfies Local-EJS in the unit-cost case.
	\end{theorem}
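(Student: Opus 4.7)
The plan is to argue by contradiction. Suppose \rulex{} outputs $\pi$ that violates Local-EJS: there exist a $P$-cohesive group $N$ and a project $p^\star\in P\setminus\pi$ with $\share(\pi\cup\{p^\star\},i)<\share(P,i)$ for every $i\in N$. Because $P\subseteq A_i$ for every $i\in N$, the quantity $S:=\share(P,i)=\sum_{p\in P}1/s(p)$, where $s(p):=|\{A\in\profile\mid p\in A\}|$, is the same for all $i\in N$, and in the unit-cost setting the violation rewrites as $\share(\pi,i)<S-1/s(p^\star)$ for every $i\in N$. My first move is to use $P$-cohesiveness to bound $S$: since $|N|/n\geq|P|/b$ we get $b/n\geq|P|/|N|$, and since every $p\in P$ is approved by all of $N$ we have $s(p)\geq|N|$ and thus $S\leq|P|/|N|\leq b/n$. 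The violation therefore strengthens to $\share(\pi,i)<b/n-1/s(p^\star)$ for every $i\in N$.

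Next, I would split the execution of \rulex{} into phase~1, consisting of the rounds in which $\alpha^\star\leq 1$, and phase~2, consisting of those with $\alpha^\star>1$, and let $\pi_1$ denote the allocation at the end of phase~1. By the argument used in the proof of Theorem~\ref{thm:Rule-X-Local-FS}, during phase~1 each voter's load equals her share, so $\ell_i(\pi_1)=\share(\pi_1,i)\leq\share(\pi,i)<b/n-1/s(p^\star)$ for every $i\in N$: each such voter still has remaining budget strictly above $1/s(p^\star)$ at the end of phase~1. On the other hand, because $p^\star$ is never selected, the termination condition of \rulex{} applied to $p^\star$ gives $\sum_{i\,:\,p^\star\in A_i}(b/n-\ell_i(\pi))<c(p^\star)=1$, and since $N$ is contained in the supporter set of $p^\star$ this yields $\sum_{i\in N}\ell_i(\pi)>|N|\cdot b/n-1$. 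Thus the voters of $N$ leave phase~1 comfortably below the threshold $b/n-1/s(p^\star)$ but collectively exhaust nearly all of their budgets during phase~2.

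The main obstacle is converting this phase~2 spending into a matching lower bound on the share of at least one voter in $N$, in order to contradict $\share(\pi,i)<b/n-1/s(p^\star)$. The subtlety is that during phase~2 loads and shares diverge: for each selected project $p$ with $\alpha>1$, a voter whose cap does not bind contributes $\alpha/s(p)>1/s(p)$ (overpaying), while a cap-bound voter's contribution is capped at her remaining budget and can fall short of her share increment $1/s(p)$. My plan at this point is to track the ``overpayment'' of voters inside and outside $N$ round by round, using the global identity $\sum_{i\in\agentSet}(\ell_i(\pi)-\share(\pi,i))=\sum_{i\in\agentSet}\ell_i(\pi)-c(\pi)=0$, the unit-cost bound that any single contribution $\gamma_i^p$ is at most $c(p)=1$, and the fact that in phase~2 no voter in $N$ can cap-bind before her share has itself grown close to $b/n$. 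Combining these ingredients with $\sum_{i\in N}\ell_i(\pi)>|N|\cdot b/n-1$ should force at least one voter $i^\star\in N$ to satisfy $\share(\pi,i^\star)\geq S-1/s(p^\star)$, contradicting the violation. Carrying out this bookkeeping precisely, while handling cap-binding voters within $N$ carefully, is the delicate part of the proof.
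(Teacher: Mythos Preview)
Your setup through the first two paragraphs is sound: the rewriting of the violation, the bound $S\le |P|/|N|\le b/n$, the phase-1 identity $\ell_i(\pi_1)=\share(\pi_1,i)$, and the termination inequality $\sum_{i\in N}\ell_i(\pi)>|N|\cdot b/n-1$ are all correct. The gap is the final step, which you yourself flag as ``delicate'': you need to turn a bound on the \emph{loads} of voters in $N$ into a bound on the \emph{share} of at least one voter in $N$, and the ingredients you list do not do this. The global identity $\sum_i(\ell_i-\share_i)=0$ only controls total overpayment across all voters, not overpayment restricted to $N$; the unit-cost bound $\gamma_i^p\le 1$ is far too coarse; and your cap-binding claim (``no voter in $N$ can cap-bind before her share is close to $b/n$'') is not true in general, since a voter in $N$ can repeatedly overpay on phase-2 projects with large $\alpha$ and exhaust her budget while her share stays well below $b/n$. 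More fundamentally, the phase-1/phase-2 split (based on whether $\alpha^\star\le 1$) is not the structurally relevant cut.

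The paper's proof uses a different decomposition that supplies exactly the leverage you are missing. It defines $k^\star$ as the first round after which some voter $i^\star\in N$ has load exceeding $b/n-1/|N|$ (note: $1/|N|$, not $1/s(p^\star)$), and works with $\pi^\star=\pi_{k^\star}$. The point of this threshold is that in every round up to $k^\star$ \emph{all} voters in $N$ still have at least $1/|N|$ left, so any project $p'\in P$ is $\bigl(s(p')/|N|\bigr)$-affordable using $N$ alone. Hence every project $p$ chosen in rounds $1,\dots,k^\star$ satisfies $\alpha(p)\le s(p')/|N|$ for every $p'\in P\setminus\pi^\star$, which gives a uniform lower bound on the \emph{share-per-unit-load} $q(p)=1/\alpha(p)$ of projects in $\pi^\star$ by the \emph{share-per-unit-load} $q^\star(p')=|N|/s(p')$ of projects in $P$. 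Combining this with $\ell_{i^\star}(\pi^\star)>b/n-1/|N|\ge(|P|-1)/|N|$ yields $\share(\pi^\star\setminus P,i^\star)>\share(P\setminus(\pi^\star\cup\{p^\star\}),i^\star)$ for every $p^\star$, which is exactly the Local-EJS witness. Your approach lacks any analogue of this share-per-load comparison; without it, the bookkeeping you sketch cannot be closed.
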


	\begin{figure}
		\centering
		\begin{tikzpicture}[shorten < = 3pt, shorten > = 3pt, line width = 1pt]
			\node[draw = MyRed, dotted] (FS) at (0, -1) {FS};
			
			\node[draw = MyRed, dotted, right = 2.5em of FS] (Strong-EJS) {Strong-EJS};
			\node[draw = MyOrange, dashed, right = 2.5em of Strong-EJS] (EJS) {EJS};
			
			\node[anchor = west] (Local-EJS) at (5.5, -2) {Local-EJS};
			\node[draw = MyGreen, anchor = west] (EJS-1) at (5.5, 0) {EJS-1};
			
			\path[->] (Strong-EJS) edge (EJS);
			\path[-] (EJS) edge[bend left = 27] (5.45, 0.001);
			\path[->] (EJS) edge[bend right = 27] (5.45, -1.99);
			\path[<-] (EJS-1) edge (5.99, -1.7);
			
			\node[draw = MyGreen, anchor = west] (Local-FS) at (0.85, -2) {Local-FS};
			\node[draw = MyRed, dotted, anchor = west] (FS-1) at (0.85, 0) {FS-1};
			
			\path[->] (FS) edge (Strong-EJS);
			\path[->] (FS) edge[bend left = 30] (FS-1);
			\path[->] (FS) edge[bend right = 30] (Local-FS);
			\path[->] (FS-1) edge (EJS-1);
		\end{tikzpicture}
		\caption{Taxonomy of criteria. An arrow from one criterion to another indicates that any budget allocation satisfying the former also satisfies the latter. \rulex{} satisfies the criteria boxed in green solid lines. For the criterion boxed in orange dashed lines, no efficient algorithms computing them exist (unless $\complexP = \complexNP$). Criteria boxed in red dotted lines are not always satisfiable. The status of Local-EJS is unknown.}
		\label{fig:taxonomy}
	\end{figure}
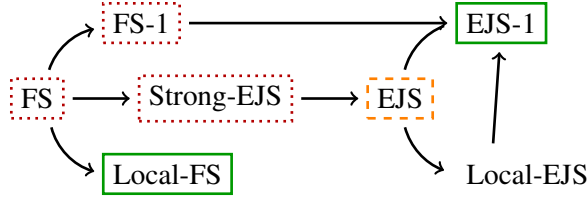

	\section{Relationships between Criteria}
	\label{sec:taxonomy}
	
	We now analyse the relationships between different fairness criteria. We start with links within the space of criteria we introduced earlier and then compare them to the notion of priceability~\citep{PS20}.
	
	\subsection{Share-Based Fairness Criteria}\label{sec:taxonomy-share}
	
	The following theorem establishes the relations between share-based fairness concepts. These relations are visualised in Figure~\ref{fig:taxonomy}.
	
	\begin{theorem}
		\label{thm:share-all-relations}
		Given an instance $I$ and a profile $\profile$, for every budget allocation $\pi \in \allocSet(I)$ the following statements hold:
		\begin{enumerate}
			\item If $\pi$ satisfies FS, it also satisfies FS-1, Local-FS, and Strong-EJS.\label{relations-fs-implies}
			\item If $\pi$ satisfies FS-1, it also satisfies EJS-1.\label{relations-fs1-implies-ejs1}
			\item If $\pi$ satisfies Strong-EJS, it also satisfies EJS.\label{relations-strongejs-implies-es1}
			\item If $\pi$ satisfies EJS, it also satisfies Local-EJS.	\label{relations-ejs-implies-localejs}
			\item If $\pi$ satisfies Local-EJS, it also satisfies EJS-1.\label{relations-localejs-implies-ejs1}
		\end{enumerate}
		This list of implications is exhaustive when closed under transitivity.
	\end{theorem}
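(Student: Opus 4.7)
The plan is to verify the five implications first, then establish exhaustiveness by combining the satisfiability results already proved in Sections~\ref{sec:fair-share} and~\ref{sec:justifiedShare} with a few targeted instances.

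Two basic observations drive almost all of the implications. First, share is monotone in its first argument: $\share(\pi \cup \{p\}, i) \geq \share(\pi, i)$ for every $p$, since each summand in the definition is non-negative. Second, if $N$ is $P$-cohesive and $i \in N$, then $\share(P, i) \leq \nicefrac{c(P)}{|N|} \leq \nicefrac{b}{n}$: every $p \in P$ is approved by all of $N$, so $|\{A \in \profile \mid p \in A\}| \geq |N|$, and by $P$-cohesiveness $\nicefrac{|N|}{n} \geq \nicefrac{c(P)}{b}$. Since also $P \subseteq A_i$, we furthermore have $\share(P, i) \leq \share(A_i, i)$. Together this gives the bound $\share(P, i) \leq \fairshare(i)$ for every $i \in N$, a fact I will reuse repeatedly.

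For item~\ref{relations-fs-implies}, FS already gives $\share(\pi, i) \geq \fairshare(i)$ for every $i$, so FS-1 holds with any choice of $p$, and Local-FS holds vacuously by monotonicity. Strong-EJS then follows from the key bound $\share(P, i) \leq \fairshare(i) \leq \share(\pi, i)$ for any $i$ in a $P$-cohesive group. Item~\ref{relations-fs1-implies-ejs1} is the same bound combined with FS-1 applied to any $i \in N$. Item~\ref{relations-strongejs-implies-es1} holds because $N$ is non-empty, so the universal quantifier of Strong-EJS implies the existential quantifier of EJS. For item~\ref{relations-ejs-implies-localejs}, if $i^\star \in N$ witnesses EJS then monotonicity gives $\share(\pi \cup \{p\}, i^\star) \geq \share(\pi, i^\star) \geq \share(P, i^\star)$ for any $p$, so $i^\star$ also witnesses Local-EJS at that $p$. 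For item~\ref{relations-localejs-implies-ejs1} I appeal to Proposition~\ref{prop:LocalEJS<=>EJSX}: if $P \setminus \pi$ is non-empty, pick any $p$ in it and the same witness works for EJS-1; if $P \subseteq \pi$ then for every $i \in N$ (since $P \subseteq A_i$) we already have $\share(\pi, i) \geq \share(P, i)$, so EJS-1 holds trivially.

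For exhaustiveness, the main lever is that FS, FS-1, and Strong-EJS are \emph{not} always satisfiable (Propositions~\ref{prop:FS-NP},~\ref{prop:FS-1-may-not-exist} and Example~\ref{Exp:PJS-impos}), while Local-FS, EJS, Local-EJS, and EJS-1 are (Theorem~\ref{thm:Rule-X-Local-FS}, Proposition~\ref{Prop:Sat-EJS}, together with the implications just proved). This immediately rules out every candidate implication from an always-satisfiable axiom to a sometimes-unsatisfiable one. The remaining pairs—for instance FS-1 vs.\ Local-FS, Strong-EJS vs.\ FS, or EJS vs.\ Strong-EJS—are separated by small explicit instances, most of which can be obtained by minor variations of the examples already given in the paper (in particular, Example~\ref{EJS-EJR} separates EJS from the stronger share notions one might hope to prove it implies).

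The hard part is the exhaustiveness bookkeeping rather than any single implication: all five implications reduce to the monotonicity of share and the cohesiveness bound $\share(P, i) \leq \fairshare(i)$, while the non-implications mostly fall out of the satisfiability dichotomy, leaving only a handful of separations that require purpose-built profiles.
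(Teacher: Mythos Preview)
Your argument for the five implications is correct and in fact tidier than the paper's. The paper proves item~\ref{relations-fs-implies} (and similarly item~\ref{relations-fs1-implies-ejs1}) by a case distinction on whether $\share(A_i,i)<\nicefrac{b}{n}$, arguing separately that $A_i\subseteq\pi$ in the first case and that $\share(P,i)\leq\nicefrac{b}{n}$ in the second. Your single inequality $\share(P,i)\leq\min\{\nicefrac{b}{n},\share(A_i,i)\}=\fairshare(i)$ subsumes both cases at once, which is a genuine simplification. Items~\ref{relations-strongejs-implies-es1}--\ref{relations-localejs-implies-ejs1} match the paper's reasoning (immediate from the definitions, respectively from Proposition~\ref{prop:LocalEJS<=>EJSX}).

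For exhaustiveness your high-level plan---use the satisfiable/unsatisfiable dichotomy to kill all arrows from $\{\text{Local-FS},\text{EJS},\text{Local-EJS},\text{EJS-1}\}$ into $\{\text{FS},\text{FS-1},\text{Strong-EJS}\}$, then handle the rest by explicit profiles---is exactly what the paper does, but two points need tightening. First, Proposition~\ref{prop:FS-NP} is the \complexNP-hardness result, not the unsatisfiability of FS; the latter is the two-agent two-project example given just before it. Second, Example~\ref{EJS-EJR} is about EJR versus EJS and does not separate any pair of share-based axioms; the separations you still need (e.g.\ FS-1 $\not\Rightarrow$ Local-FS, Strong-EJS $\not\Rightarrow$ Local-FS, EJS-1 $\not\Rightarrow$ Local-EJS, Local-FS $\not\Rightarrow$ EJS-1, FS-1 $\not\Rightarrow$ Local-EJS) each require their own small instance, and the paper supplies these explicitly. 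Your sketch is right that these are routine, but ``minor variations of the examples already given'' does not yet discharge them.
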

	
	\begin{proof}[Proof of \eqref{relations-fs-implies}.]
		
		It is easy to verify that every budget allocation satisfying FS also satisfy FS-1 and Local-FS. So let us show that FS also implies Strong-EJS.
		Let $i \in \agentSet$. We distinguish two cases. 
		
		First, assume $\share(A_i, i) < \nicefrac{b}{n}$. For FS to be satisfied, we must have $\share(\pi, i) \geq \share(A_i, i)$. This entails that $A_i \subseteq \pi$. Hence, the conditions for Strong-EJS are trivially satisfied for agent~$i$.
		
		Second, assume $\share(A_i, i) \geq \nicefrac{b}{n}$. Since $\pi$ satisfies FS, we know that $\share(\pi, i) \geq \nicefrac{b}{n}$. Let $N \subseteq \agentSet$ be a $P$-cohesive group, for some $P \subseteq \projSet$, such that $i \in N$. By definition of a cohesive group, we know that $c(P) \leq \nicefrac{b}{n} \cdot |N|$. Hence, $\share(P, i) \leq \nicefrac{b}{n}$. Overall, we have
		$\share(\pi, i) \geq \nicefrac{b}{n} \geq \share(P, i)$ and thus
		$\pi$ satisfies Strong-EJS.
	\end{proof}
	
	\noindent
	The proof of claim~\eqref{relations-fs1-implies-ejs1} is similar to the proof of claim~\eqref{relations-fs-implies} and can be found in the appendix. The proofs of claims~\eqref{relations-strongejs-implies-es1} and~\eqref{relations-ejs-implies-localejs} are immediately derived from the respective definitions. The proof of claim~\eqref{relations-localejs-implies-ejs1} is a direct consequences of Proposition~\ref{prop:LocalEJS<=>EJSX}.
	
	The absence of any further implications between fairness criteria can be established by counterexamples. We include here a representative sample of such counterexamples.
	The remaining counterexamples can be found in the appendix.
	
	\begin{example}[FS-1 not implying Local-FS]
		Consider the following instance with four projects and a budget limit of $b = 6$.
		\begin{center}\small
			\begin{tabular}{ccccc}
				\toprule
				& $p_1$ & $p_2$ & $p_3$ & $p_4$ \\
				\midrule
				Cost & 3 & 3 & 6 & 1 \\
				\midrule
				$A_1$ & \cmark & \cmark & \xmark & \xmark \\
				$A_2$ & \xmark & \xmark & \cmark & \cmark \\
				$A_3$ & \xmark & \xmark & \cmark & \cmark \\
				\bottomrule
			\end{tabular}
		\end{center}
		Then $\{p_1, p_2\}$ satisfies FS-1 as agent $1$ already receives (more than)
		their
		fair share, while $2$ and $3$ receive their fair share from $\{p_1,p_2\}\cup \{p_3\}$.
		However, no supporter of $p_4$ receives their fair share from $\{p_1,p_2\}\cup \{p_4\}$.
		Therefore, Local-FS is violated.
	\end{example}
	
	\noindent
	The other direction (Local-FS does not imply FS-1) follows from the fact that a Local-FS allocation always exists (as a consequence of Theorem~\ref{thm:Rule-X-Local-FS}) while FS-1 is not always satisfiable (Proposition~\ref{prop:FS-1-may-not-exist}).
	
	\begin{example}[EJS-1 not implying Local-EJS, even in the unit-cost setting]
		Consider an instance with two voters, 1 and 2, and six projects $p_1$, \ldots, $p_6$ all of cost 1. Voter 1 approves of $\{p_1, p_2, p_3, p_4, p_5\}$ and voter~2 approves of $\{p_4, p_5 , p_6\}$. The budget limit is $b = 4$. We claim that $\pi = \{p_1, p_2, p_3, p_4\}$ satisfies EJS-1 but not Local-EJS.
		
		The share of 1 in $\pi$ is 3.5 so every cohesive group containing them will satisfy the conditions for EJS-1 and Local-EJS. Consider now voter 2. Their share in $\pi$ is $\nicefrac{1}{2}$. Note that they are $\{p_5, p_6\}$-cohesive and deserve thus a share of $\nicefrac{3}{2}$. Since $\pi \cup \{p_6\}$ would provide them a share of $\nicefrac{3}{2}$, $\pi$ satisfies EJS-1. However, $\pi \cup \{p_5\}$ would only provide 2 a share of $1$, showing that $\pi$ fails Local-EJS.
	\end{example}
	
	\noindent We now turn to Local-FS and show that it does not imply EJS-1. Due to the implications shown in Theorem~\ref{thm:share-all-relations}, Local-FS also does not imply any of Local-EJS, EJS, Strong-EJS, FS-1 and FS.
	
	\begin{example}[Local-FS not implying EJS-1]\label{ex:Local-FS does not imply EJS-1}
		Consider the following instance with three projects, a budget limit of $b = 6$, and two agents.
		Agent~1 approves of $\{p_1,p_2,p_3\}$ and agent~2 of $\{p_2,p_3\}$.
		Allocation $\pi = \{p_1\}$ satisfies Local-FS: for both $p_2$ and $p_3$, if we were to add them to $\pi$, agent~1 would have a fair share. However, it does not satisfy EJS-1: $\{2\}$ is a $\{p_2, p_3\}$-cohesive group but neither project is selected.
	\end{example}

	\subsection{Comparison with Priceability}
	
	Priceability is a fairness criterion requiring that the budget allocation can be obtained through a market-based approach~\citep{PS20}. It is similar in spirit to share-based criteria as it also measures the amount of money spent on each agent. However, priceability does not require the cost of a project to be equally distributed between its supporters. Instead it requires there to be some distribution of the costs of the selected projects to their supporters that satisfies certain conditions. 
	
	\begin{definition}[Priceability]
		Given an instance $I = \tuple{\projSet, c, b}$ and a profile $\profile$, a budget allocation $\pi$ satisfies priceability if there exists an allowance $\alpha \in \Rplus$ and a collection $(\gamma_i)_{i \in \agentSet}$ of contribution functions, $p_i: \projSet \to [0, \alpha]$ such that all of the following conditions are satisfied:
		\begin{itemize}
			\item[\textbf{C1}:] If $\gamma_i(p) > 0$ then $p \in A_i$ for all $p \in \projSet$ and $i \in \agentSet$.
			\item[\textbf{C2}:] If $\gamma_i(p) > 0$ then $p \in \pi$ for all $p \in \projSet$ and $i \in \agentSet$.
			\item[\textbf{C3}:] $\sum_{p \in \projSet} \gamma_i(p) \leq \alpha$ for all $i \in \agentSet$.
			\item[\textbf{C4}:] $\sum_{i \in \agentSet} \gamma_i(p) = c(p) $ for all $p \in \pi$.
			\item[\textbf{C5}:] $\sum_{i \in \agentSet \mid p \in A_i} \alpha^\star_i \le c(p)$ for all $p \in \projSet \setminus \pi$,
			where for any $i \in \agentSet$, $\alpha^\star_i = \alpha - \sum_{p \in \projSet} \gamma_i(p)$ is their unspent allowance.
		\end{itemize}
	\end{definition}
	
	\noindent
	Due to the similar motivation of share based fairness concepts and priceablility, it is interesting
	to understand the relationships between them. Let us first consider FS.
	
	\begin{proposition}
		There exists instances $I = \tuple{\projSet, c, b}$ and profiles $\profile$ such that there exists $\pi \in \allocSet(I)$ satisfying FS, but such that no FS budget allocation $\pi$ is priceable.
	\end{proposition}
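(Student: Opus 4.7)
The plan is to exhibit a concrete small PB instance $I$ and profile $\profile$ for which some allocation satisfies FS but no FS allocation satisfies priceability. The structural idea I would aim for is that every FS allocation must contain a relatively high-cost project $p$ approved by a single voter $i^\star$: this forces $\gamma_{i^\star}(p) = c(p)$ in any priceability certificate, and hence $\alpha \geq c(p)$. I would then balance the rest of the FS allocation so that the remaining voters receive share exactly $\nicefrac{b}{n}$, and include in the instance an unselected project $p^\star$ approved by several of those voters. Their unspent allowance is then at least $\alpha - \nicefrac{b}{n} \geq c(p) - \nicefrac{b}{n}$ each, and if $c(p^\star)$ is chosen so that $|\supp(p^\star)| \cdot (c(p) - \nicefrac{b}{n}) > c(p^\star)$, then condition C5 fails irrespective of the precise contributions, yielding non-priceability.

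Having proposed such an instance, I would verify FS by computing each voter's share in the candidate allocation $\pi$ and comparing it with their fair-share threshold, and verify non-priceability either by solving the linear feasibility problem arising from conditions C1--C5 (with $\alpha$ and the $\gamma_i(p)$ as variables) or by the direct argument sketched above: the lower bound $\alpha \geq c(p)$ coming from the single-supporter project is incompatible with the upper bound on $\alpha$ imposed by C5 at $p^\star$. I would also argue that no other feasible allocation satisfies FS, so that $\pi$ is the unique candidate against which priceability has to be tested.

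The main obstacle is keeping FS intact once $p^\star$ is added to the profile: since $p^\star$ now lies in the approval ballots of its supporters, $\share(A_i, i)$ increases for those voters, and because $\fairshare(i) = \min\{\nicefrac{b}{n}, \share(A_i, i)\}$ their fair-share threshold can rise. To prevent $\pi$ from losing FS, the construction has to ensure that $\share(A_i, i) \geq \nicefrac{b}{n}$ already held before $p^\star$ was added, so that the cap remains binding and the threshold does not change; alternatively, the supporters of $p^\star$ should be chosen among voters whose share in $\pi$ already strictly exceeds the revised threshold. Managing this tension together with integer cost constraints and the feasibility bound $c(\pi) \leq b$ is the delicate part of the construction, and in a small concrete example pinning down the numbers typically requires careful case analysis.
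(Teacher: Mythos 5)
Your construction template is the same one the paper uses (its witness: four projects with costs $1,5,3,1$, budget $b=9$, three agents each the sole supporter of ``their'' projects, with the cost-$5$ project forcing $\alpha \geq 5$ and the unselected cost-$1$ project triggering a C5 violation for the agent whose share is exactly $\nicefrac{b}{n}=3$). But as written the proposal has two gaps. First, the statement is existential, so the proof \emph{is} the concrete instance; you describe the shape of a witness and explicitly defer ``pinning down the numbers'' to a case analysis you do not carry out. Until the numbers are produced and the claims ``$\pi$ satisfies FS'', ``no other feasible allocation satisfies FS'', and ``C5 fails'' are actually checked, there is no proof.

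Second, and more substantively, the step ``the remaining voters receive share exactly $\nicefrac{b}{n}$, \ldots\ their unspent allowance is then at least $\alpha - \nicefrac{b}{n}$'' does not follow. Priceability does not tie a voter's total contribution $\sum_{p} \gamma_i(p)$ to their share: when a selected project has several supporters, the contributions may be split unequally, so a supporter of $p^\star$ could contribute far more than $\share(\pi,i)=\nicefrac{b}{n}$ to the selected projects, shrinking their unspent allowance and rescuing C5. To make the lower bound on $\alpha^\star_i$ valid you must pin the contributions down, e.g.\ by making each relevant selected project approved by a single voter, so that C1 and C4 force $\gamma_i(p)=c(p)$ exactly --- which is precisely how the paper's example works. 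Your fallback of ``solving the linear feasibility problem arising from C1--C5'' would of course be rigorous for a fixed instance, but that again presupposes having the instance in hand. On the positive side, you correctly identify the subtlety that adding $p^\star$ to its supporters' ballots can raise $\share(A_i,i)$ and hence $\fairshare(i)$, and the fix you propose (ensure the cap $\nicefrac{b}{n}$ is already binding) is the right one.
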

	
	\begin{proof}
		Consider the following instance with four projects, a budget limit of $b = 9$, and three agents.
		\begin{center}\small
			\begin{tabular}{ccccc}
				\toprule
				& $p_1$ & $p_2$ & $p_3$ & $p_4$ \\
				\midrule
				Cost & 1 & 5 & 3 & 1 \\
				\midrule
				$A_1$ & \cmark & \xmark & \xmark & \xmark \\
				$A_2$ & \xmark & \cmark & \xmark & \xmark \\
				$A_3$ & \xmark & \xmark & \cmark & \cmark \\
				\bottomrule
			\end{tabular}
		\end{center}
		
		\noindent In this instance, only $\pi = \{p_1, p_2, p_3\}$ satisfies FS. For the sake of contradiction, suppose that $\pi$ is priceable with allowance $\alpha \in \Rplus$ and contribution functions $\gamma_1, \gamma_2$ and $\gamma_3$. Since only agent 2 approves of $p_2$, from conditions \textbf{C1} and \textbf{C4} we must have $\gamma_2(p_2) = 5$. Condition \textbf{C3} then implies that $\alpha \geq 5$. For similar reasons we should have $\gamma_3(p_3) = 3$ and $\gamma_3(p_1) = \gamma_3(p_2) = 0$. Condition \textbf{C2} also imposes $\gamma_3(p_4) = 0$. Overall this means that $\alpha_3^\star = \alpha - \gamma_3(p_3) \geq 2$. This is a violation of condition \textbf{C5} for agent 3 and project~$p_4$.
	\end{proof}
	
	\noindent
	Interestingly, the intuitive connection between fair share and priceablility does hold when ballots are large enough.
	
	\begin{proposition}
		For every instance $I = \tuple{\projSet, c, b}$ and profile $\profile$ such that for every agent $i \in \agentSet$, $\fairshare(i) = \nicefrac{b}{n}$, every budget allocation $\pi \in \allocSet(I)$ that satisfies FS is also priceable.\label{prop:fs->priceability}
	\end{proposition}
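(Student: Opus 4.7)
The plan is to first use the hypothesis to pin down $\share(\pi,i)$ and $c(\pi)$ exactly, and then read off the priceability witness from the equal cost-splitting already encoded in the definition of share.

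First, I would note that since $\fairshare(i) = \nicefrac{b}{n}$ for all $i$, the FS condition says $\share(\pi,i) \geq \nicefrac{b}{n}$ for every agent. Summing this inequality over $i$ and swapping the order of summation in the double sum gives
\[\sum_{i \in \agentSet} \share(\pi,i) \;=\; \sum_{p \in \pi} \sum_{i \in \agentSet : p \in A_i} \frac{c(p)}{|\{A \in \profile \mid p \in A\}|} \;=\; \sum_{p \in \pi} c(p) \;=\; c(\pi) \;\leq\; b.\]
Combining this with $\share(\pi,i) \geq \nicefrac{b}{n}$ for all $n$ agents forces every inequality to be an equality: $\share(\pi,i) = \nicefrac{b}{n}$ for all $i$, and $c(\pi) = b$. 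This collapse is the key observation; once it is in place, the rest is almost automatic.

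Next I would define the priceability certificate by taking the allowance $\alpha = \nicefrac{b}{n}$, and setting contributions by equal split over supporters:
\[\gamma_i(p) \;=\; \begin{cases} c(p)/|\{A \in \profile \mid p \in A\}| & \text{if } p \in \pi \cap A_i,\\ 0 & \text{otherwise.}\end{cases}\]
Conditions \textbf{C1} and \textbf{C2} hold by construction. Condition \textbf{C4} holds because for every $p \in \pi$, summing the equal shares over its supporters recovers $c(p)$. For \textbf{C3}, the total contribution of agent $i$ is exactly $\share(\pi,i)$, which by the previous paragraph equals $\nicefrac{b}{n} = \alpha$.

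Finally, for \textbf{C5}, the same equality $\sum_{p} \gamma_i(p) = \alpha$ implies that every agent's unspent allowance $\alpha_i^\star$ is zero, so the sum on the left of \textbf{C5} is $0$ for every $p \in \projSet \setminus \pi$, trivially bounded by $c(p)$. I do not anticipate any real obstacle here; the only subtle point is recognising that the FS hypothesis, combined with $\fairshare(i) = \nicefrac{b}{n}$ for all $i$, is actually tight rather than a genuine inequality, so that equal cost splitting already respects the allowance cap.
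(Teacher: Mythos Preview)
Your proof is correct and follows essentially the same approach as the paper: you use the same allowance $\alpha = \nicefrac{b}{n}$ and the same equal-split contribution functions, and you verify \textbf{C1}--\textbf{C5} by the same arguments, including the key observation that the FS inequalities must all be tight so that every agent's unspent allowance is zero. Your presentation is arguably slightly cleaner in that you isolate this tightness observation up front before checking the conditions.
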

	
	\begin{proof}
		
		Consider a suitable instance $I = \tuple{\projSet, c, b}$ and profile $\profile$. Let $\pi \in \allocSet(I)$ be a budget allocation that satisfies FS. We claim that $\pi$ is priceable for the allowance $\alpha = \nicefrac{b}{n}$ and the contribution functions $(\gamma_i)_{i \in \agentSet}$ defined for every $i \in \agentSet$ and $p \in \projSet$ as:
		\[\gamma_i(p) =
		\begin{cases}
			\share(\{p\}, i) & \text{if } p \in A_i \cap \pi,\\
			0 & \text{otherwise}.
		\end{cases}\]
		First note that conditions \textbf{C1} and \textbf{C2} of priceability are trivially satisfied for all $i \in \agentSet$ and $p \in \projSet$.
		Now, we know that for every agent, we have $\share(\pi, i) \geq \fairshare(i) = \nicefrac{b}{n}$. Since $\sum_{i \in \agentSet} \share(\pi, i) = c(\pi)$ and $\pi$ is feasible, we must have $\share(\pi, i) = \nicefrac{b}{n}$ for all $i \in \agentSet$. Overall, we have $\sum_{p \in \projSet}\gamma_i(p) = \share(\pi, i) = \nicefrac{b}{n} \leq \alpha$, so condition \textbf{C3} also is satisfied.
		In addition, we have $\sum_{i \in \agentSet} \gamma_i(p) = \sum_{i \in \agentSet} \share(\{p\}, i) = c(p)$. Condition \textbf{C4} is thus immediately satisfied.
		Finally, as we have for every agent $\sum_{p \in \projSet} \gamma_i(p) = \share(\pi, i) = \nicefrac{b}{n} = \alpha$, 
		condition \textbf{C5} is vacuously satisfied.
	\end{proof}
	
	\noindent
	Next, we consider the relation between the weaker share-based notions and priceability. 
	The following shows that there are no other implications between priceability and
	share-based fairness, even if we assume that agents approve of enough projects.
	
	\begin{proposition}
		Local-FS, FS-1, and EJS do not imply priceability, even if $\fairshare(i) =
		\nicefrac{b}{n}$ for every agent $i \in \agentSet$.
		Vice versa, priceability does not imply Local-FS or EJS-1, even if
		$\fairshare(i) = \nicefrac{b}{n}$ for every agent $i \in \agentSet$
		and the agents have an allowance of at least~$\nicefrac{b}{n}$.
	\end{proposition}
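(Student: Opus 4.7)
The plan is to establish each of the five non-implications independently by constructing explicit counterexample instances, in the spirit of the counterexamples already used in Section~\ref{sec:taxonomy-share}. For every instance, care must be taken to ensure the side condition $\fairshare(i) = \nicefrac{b}{n}$ for all $i$ (and, for the reverse direction, that the allowance in the priceability certificate is at least $\nicefrac{b}{n}$), so each example requires some padding with cheap approved projects without destroying the targeted fairness property.

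For the forward direction (Local-FS, FS-1, and EJS not implying priceability), the strategy is to arrange instances so that the share-based property can be met while one agent's share is strictly larger than $\nicefrac{b}{n}$ and another agent has an unselected project $p$ whose supporters would collectively retain unspent allowance of at least $c(p)$. This forces any priceability certificate to violate \textbf{C5}: since \textbf{C4} demands that selected projects be fully paid for, and \textbf{C3} caps per-voter spending at $\alpha$, channelling the required payments through the ``overpaid'' voters leaves the ``underpaid'' voters with too much unspent allowance. A clean pattern is to build on the FS-not-priceable instance used in the proof of Proposition~\ref{prop:fs->priceability}'s companion result and then weaken the satisfied property: for Local-FS, adding enough projects so that only the agents already at their fair share would block new additions; for FS-1, tuning project costs so that one extra project would restore FS for every agent while leaving a priceability gap; for EJS, ensuring a cohesive group is represented by a single agent with sufficient share, while the remaining structure prevents a valid payment scheme.

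For the reverse direction (priceability not implying Local-FS or EJS-1), I exploit the fact that priceability allows unequal contributions. The plan is to construct an instance where one project is very expensive and approved by a small but sufficiently large group, so that a priceable allocation can concentrate essentially the full allowance of those supporters on that project. The remaining voters then have full unspent allowance, but it is insufficient (per \textbf{C5}) to purchase any unselected project they approve; yet their share can remain well below $\nicefrac{b}{n}$, and one further cheap project can still be added with every supporter below their fair share, violating Local-FS. For EJS-1, the construction additionally arranges a $P$-cohesive group whose members are exactly the supporters of the expensive project, so their deserved share from $P$ is larger than what one additional project can bring them to. In both cases, padding each agent's ballot with cheap approved-but-never-selected projects ensures $\fairshare(i) = \nicefrac{b}{n}$.

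The main obstacle is the bookkeeping around the two side conditions. Ensuring $\fairshare(i) = \nicefrac{b}{n}$ forces every agent to approve projects totalling at least $\nicefrac{b}{n}$ in ``weighted'' share, and such padding can easily create unintended cohesive groups that rescue EJS-1 in the reverse direction, or create unintended priceability certificates in the forward direction. The delicate part is therefore choosing the padded projects to be either sufficiently expensive (so no small group is cohesive for them) or sufficiently widely approved (so their share per voter is too small to matter), while keeping the core counterexample intact.
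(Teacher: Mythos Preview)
Your plan is sound in principle---counterexamples are indeed the right tool---but it is considerably more elaborate than what the paper does, and in one place you overlook a shortcut that removes most of the bookkeeping you are worried about.

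The paper uses just \emph{two} instances, one per direction, each simultaneously witnessing all the non-implications on that side. For the forward direction it takes two agents, two projects with $c(p_1)=3$, $c(p_2)=2$, $b=3$, $A_1=\{p_1\}$, $A_2=\{p_2\}$, and $\pi=\{p_1\}$. This single instance handles Local-FS, FS-1, and EJS at once, and---crucially---EJS holds \emph{vacuously} because there are no cohesive groups at all (each singleton would need $c(p_j)/b\le 1/2$, which fails). You instead propose to satisfy EJS non-trivially (``a cohesive group is represented by a single agent with sufficient share''), which is doable but unnecessary and is exactly what generates the delicate bookkeeping you flag. For the reverse direction the paper again uses a single two-agent, four-project instance with $\pi=\{p_1,p_2\}$ priceable at allowance $b/n=10$, yet failing both Local-FS and EJS-1 via the same unselected projects $p_3,p_4$.

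The second simplification you miss is that no padding is needed: in both of the paper's instances, each agent already approves projects whose total share exceeds $b/n$, so $\fairshare(i)=b/n$ holds automatically. Your plan to pad ballots with cheap projects to force this condition is what creates the risk of ``unintended cohesive groups'' or ``unintended priceability certificates'' that you yourself identify as the main obstacle; choosing minimal instances from the start sidesteps the issue entirely. In short, your approach would work, but the paper's route is shorter and avoids precisely the pitfalls you anticipate.
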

	
	\begin{proof}
		Consider an instance with two projects with $c(p_1) = 3$ and $c(p_2) = 2$, $b=3$ and two agents such that $A_1 = \{p_1\}$ and $A_2 = \{p_2\}$
		Then $\{p_1\}$ satisfies FS-1 and Local-FS as we have 
		$\share(\{p_1,p_2\},i) > \fairshare(i)$ for both agents $i$.
		Moreover, EJS is trivially satisfied, as there are no cohesive groups.
		On the other hand, for $\{p_1\}$ to be priceable, each agent must receive an allowance of $3$.
		In this case, the fact that $p_2$ is not selected is a contradiction to \textbf{C5} as $p_2 \in A_2$ and $2$ has more than $c(p_2)$ unspent allowance.
		
		Now consider the instance with four projects such that $c(p_1) = c(p_2) = 8$ and $c(p_3) = c(p_4) = 5$ and $b = 20$. There are two agents with ballots $A_1 = \{p_1, p_2\}$ and $A_2 = \{p_2, p_3, p_4\}$.
		The bundle $\{p_1,p_2\}$ is priceable: consider the following contributions
		with an allowance of $10$ per agent: $\gamma_1(p_1) = 8$ and $\gamma_1(p) = 0$ for $p \in \{p_2, p_3, p_4\}$;
		$\gamma_2(p_2) = 8$ and $\gamma_2(p) = 0$ for $p \in \{p_1, p_3, p_4\}$. 
		However, $\{p_1,p_2\}$ does not satisfy
		Local-FS as $\share(\{p_1,p_2\}\cup\{p_3\},2) = 9 < 10 = \fairshare(2)$. Moreover,
		$2$ is $\{p_3, p_4\}$-cohesive but $\share(\{p_1,p_2\}\cup\{p\},2) = 9 < 10 = \share(\{p_3,p_4\},2)$
		for any $p \in \{p_3, p_4\}$. Hence, $\{p_1,p_2\}$ also does not satisfy EJS-1.
	\end{proof}
	
	\noindent
	However, Local-FS, EJS-1, and priceability are compatible in the sense that there 
	always exists a bundle satisfies all three, namely the output of \rulex.
	This follows directly from Theorem~\ref{thm:Rule-X-Local-FS}, Proposition~\ref{prop:Rule-X-EJS-1},
	and the fact that MES is priceable for every utility function, as was shown by 
	\citet{PPS21NeurIPS}. It remains open whether FS-1, EJS, and Local-EJS
	are compatible with priceability in this sense.
	
	\section{Approaching Fair Share in Practice}
	\label{sec:experiments}
	
	As we saw in Section~\ref{sec:fair-share}, there exist PB instances for which it is impossible to give every agent
	their fair share.
	In this section we report on an experimental study aimed at understanding how serious a problem this is.
	Our study is twofold.
	We first investigate how close to fair share we can get.
	In a second experiment, we quantify how close to this optimal value certain known PB rules get.
	
	For these experiments we use data from Pabulib~\citep{pabulib}, an online collection of real-world PB datasets.
	To be more precise, we used all instances from Pabulib with up to 65 projects, except for trivial instances, where either no project or the set of all projects are affordable.
	Three instances have been additionally omitted for the first experiment due to very high compute time.
	A total of 353 PB instances are covered by our analysis.
	
	\begin{figure*}
		\centering
		\includegraphics[width=\linewidth]{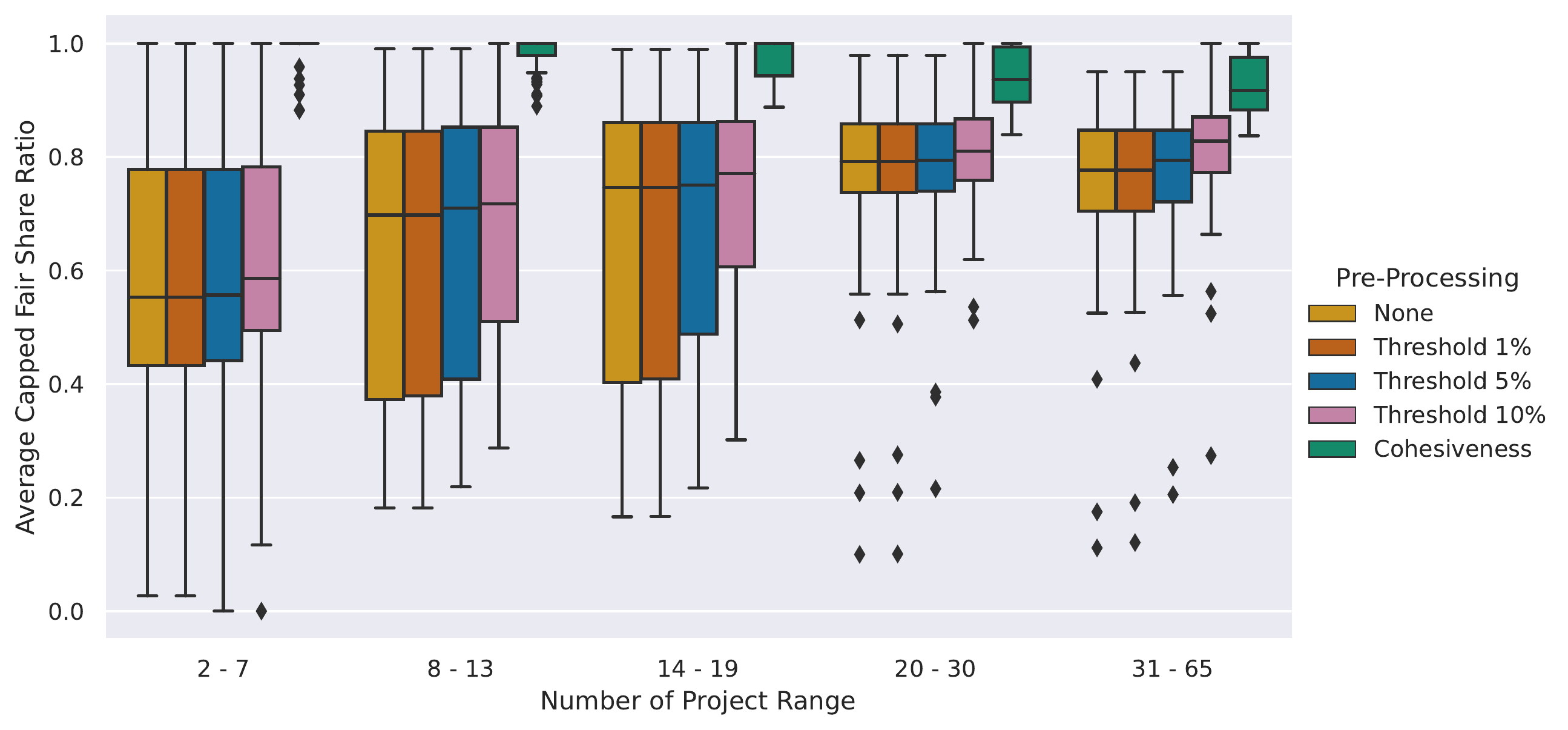}
		
		\includegraphics[width=\linewidth]{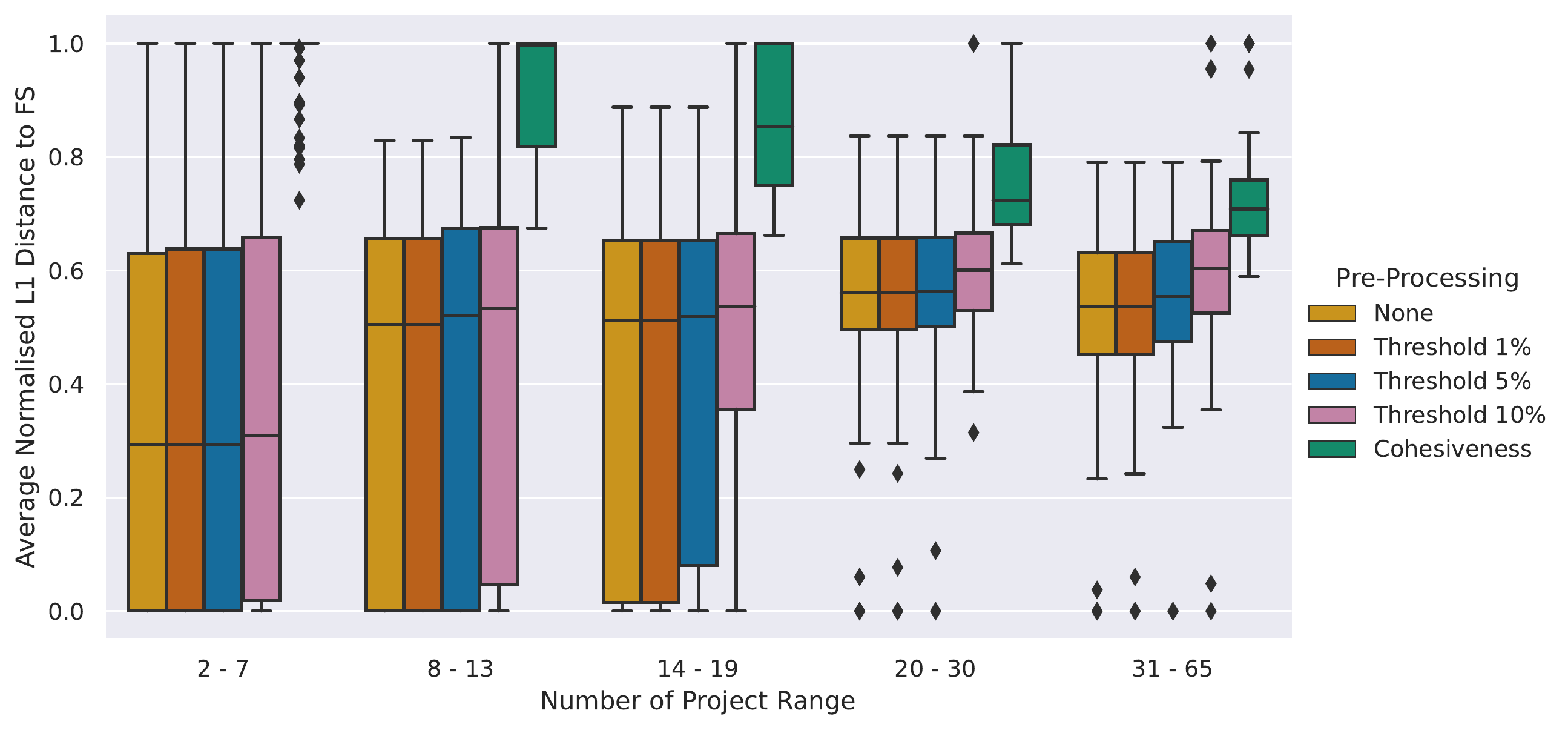}
		
		\vspace*{-5pt}
		\caption{Average capped fair share ratio (left) and $L_1$ distance to FS (right) for Pabulib instances.
			For the latter we actually plot $1 - \nicefrac{1}{n} \cdot \sum_{i \in \agentSet}\frac{|\share(\pi, i) - \fairshare
				(i)|}{\fairshare(i)}$ to obtain a normalised value for which 1 is the best.\textsuperscript{$\star$}
			Each range (for a number of projects) shown on the $x$-axis contains between 60 and 80 instances.}
		\begin{flushleft}
			\small \textsuperscript{$\star$}Note that the empty budget allocation provides an $L_1$ distance to FS of $\fairshare(i)$ for all $i \in \agentSet$. Normalising the $L_1$ distance with $\fairshare(i)$, thus ensures that we display the optimal $L_1$ distance to FS achieved with respect to the worst case.
		\end{flushleft}
		\vspace{1em}
		\label{fig:approx_fs}
	\end{figure*}
	
	\begin{figure*}
		\centering
		\includegraphics[width=\linewidth]{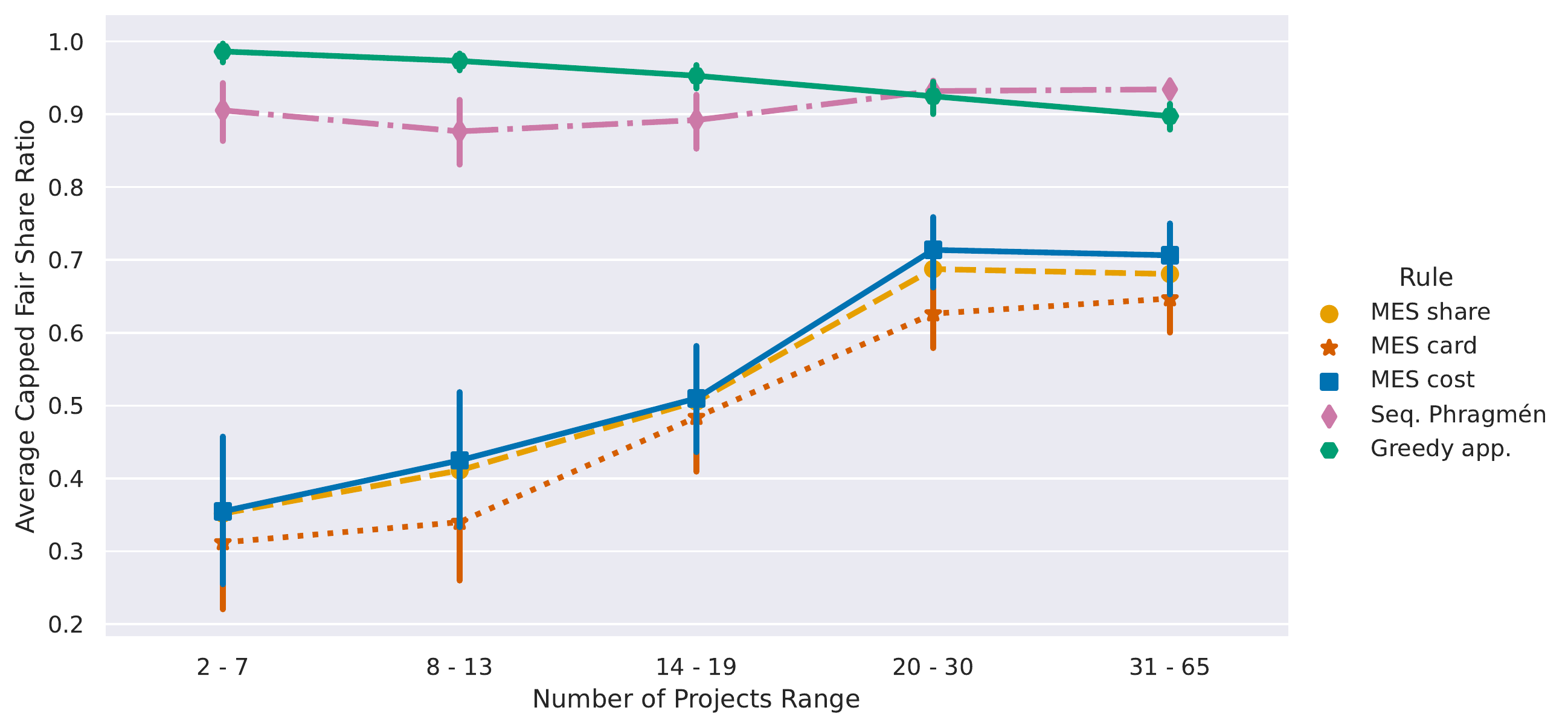}
		\hfill
		\includegraphics[width=\linewidth]{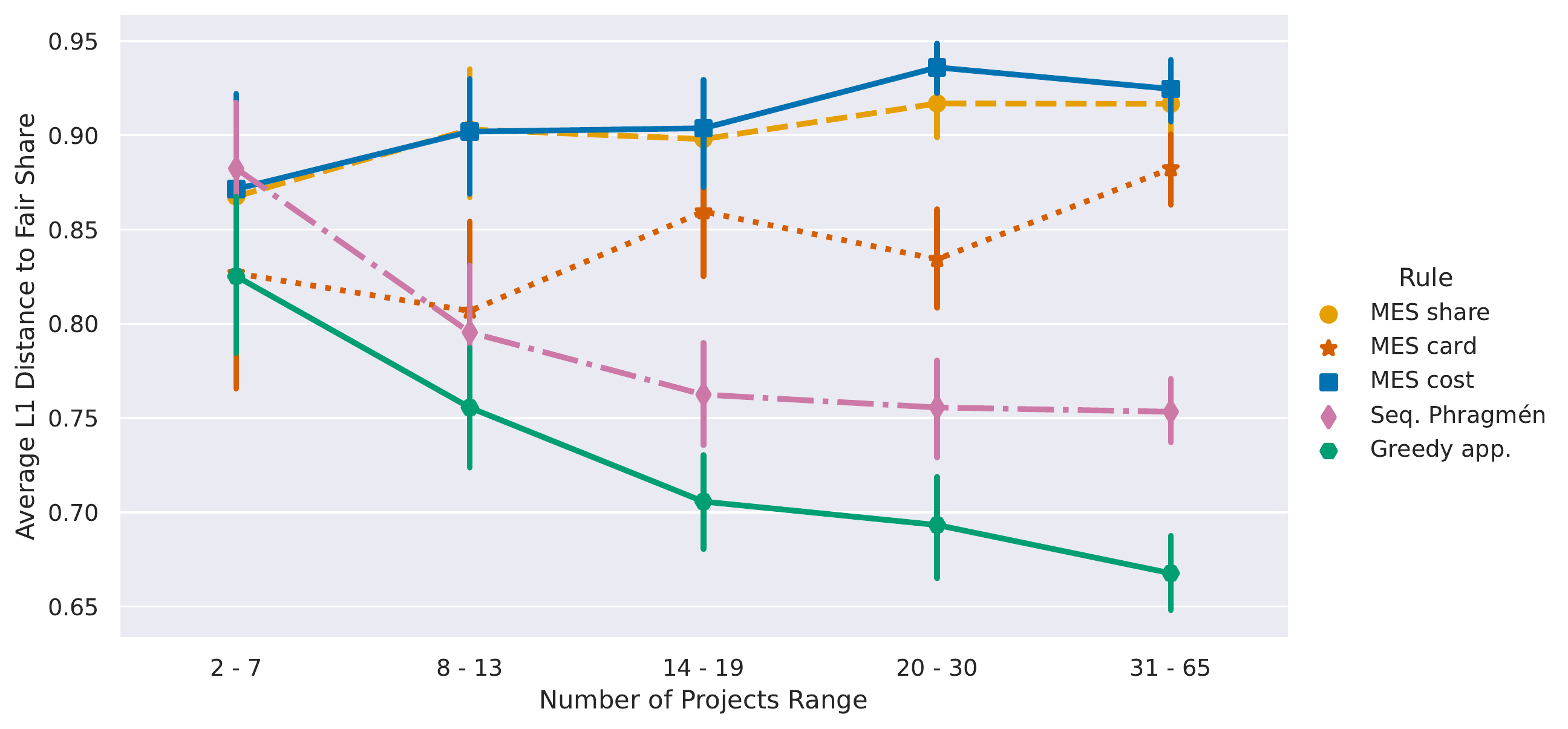}
		\vspace*{-5pt}
		\caption{Average capped fair share ratio (left) and average $L_1$ distance to FS (right) for different rules
			on Pabulib instances. Results are normalised by the optimum value achievable in each instance, giving a score
			between 0 and 1 where 1 is the best.}
		\vspace{1em}
		\label{fig:distance_fs_rules}
	\end{figure*}
	
	\subsection{Optimal Distance to Fair Share}
	
	We propose two ways to measure how close to FS a given budget allocation is.
	The first one is the \emph{average capped fair share ratio}: For every agent~$i$ with approval
	ballot~$A_i$ we divide their actual share by their fair share, capped at~1 in case they get more
	than their fair share, and take the average of this ratio over all agents:
	\[\frac{1}{n} \cdot \sum_{i \in \agentSet} \min \left( \frac{\share(\pi, i)}{\fairshare(i)}, 1 \right). \]
	Our second measure is the \emph{average $L_1$ distance to FS}, measuring, for every agent $i$, the absolute
	value of the difference between their actual share and their fair share:
	\[\frac{1}{n} \cdot \sum_{i \in \agentSet} |\share(\pi, i) - \fairshare(i)|. \]
		
	\noindent
	For each PB instance we computed via \emph{integer linear programs} budget allocations yielding the optimal average
	capped fair share ratio and $L_1$ distance to FS.
	Moreover, to better understand what might cause an instance not to admit a good solution, we also considered
	different ways of preprocessing the instances by removing ``problematic'' projects:
	\begin{itemize}[itemsep=0pt]
		\item Threshold: We remove any project that is not approved by at least~$x$\% of agents. We considered $x=1\%$, 5\%, and 10\%.
		\item Cohesiveness: We remove any project $p$ such that its supporters do not deserve enough money to buy the project, \textit{i.e.}, such that $\frac{|\{i \in \agentSet \mid p \in A_i\}|}{n} b < c(p)$.
	\end{itemize}
	
	\noindent Threshold preprocessing removes under 10\% of projects for a threshold of 1\%, around 10--20\% for a
	threshold of 5\%, and around 20--30\% for a threshold of 10\%.
	Cohesiveness preprocessing removes between 30\% (for the largest instances) and 70\% of projects
	(for the smallest instances)
	
	Let us now turn to our results, presented in Figure~\ref{fig:approx_fs}.
	We draw the following conclusions.
	Without preprocessing, we can provide agents on average between 45\% (for small instances) and 75\%
	(for larger instances) of their fair share, albeit with a lot of variation.
	Furthermore, we can typically guarantee an $L_1$ distance to FS of 50\% of the worst case distance.
	Interestingly, preprocessing helps when using the cohesiveness condition, but not with the threshold condition.
	Note that we do not wish to advocate preprocessing as a method to make budget decisions in practice. Rather, we use it as
	a way of checking whether the failure to guarantee fair share is due to the specific structure of real-life PB instances
	and whether similar instances `nearby' might be significantly better behaved.
	Our experimental findings suggest that this is not the case, and that guaranteeing fair share simply is very hard
	across a wide range of instances.
	Note that, across all instances, for only one instance---with 3 projects and 198 voters---we were able to satisfy FS.
	
	We also investigated approximations of the average capped fair share ratio.
	Specifically, for a number of different given approximation ratios $\alpha\in(0,1]$, we replaced the fair share by
	$\alpha \cdot \fairshare(i)$ in the definition.
	Results indicates that moving from $\alpha = 1$ to $\alpha = 0.2$, has a very small effect on the optimum value
	(around 10\% better for $\alpha = 0.2$).
	We also interpret this result as stating that FS is structurally hard to satisfy.

	\subsection{Distance to Fair Share of Common PB Rules}
	
	We now turn to our second experiment: how close to fair share are the outputs of known PB rules in practice.
	We will consider the following rules: \rulex{}, MES$_{\mathit{card}}$~\citep{PPS21NeurIPS},
	MES$_{\mathit{cost}}$~\citep{PPS21NeurIPS},\footnote{We write MES$_{\mathit{card}}$ and 	MES$_{\mathit{cost}}$ for the rule MES~\citep{PPS21NeurIPS} used with utility functions $u_i(p) := 1$ and $u_i(p) := c(p)$ for all $i \in \agentSet$ and $p \in \projSet$, respectively.} sequential Phragmén~\citep{LCG22}, and greedy approval~\citep{AzSh20}.
	The definitions of the rules here are given in the appendix.
	
	For every PB instance and every PB rule, we compute the outcome returned by the rule and assess how close to the
	optimal value it is in terms of both the average capped fair share ratio and average $L_1$ distance to FS.
	Results are presented in Figure~\ref{fig:distance_fs_rules}.
	
	The first striking observation is that greedy approval is performing extremely well under the capped fair share
	ratio measure.
	This is particularly surprising given how oblivious to the structure of the profile greedy approval is.
	We postulate that this result is due to the high difference in the percentage of the budget used by the different rules: MES rules use
	around 40\% of the budget on average, while greedy approval and sequential Phragmén use around 90\% of the budget.
	Since using more budget can only improve the average capped fair share ratio, this is the most likely explanation for the good performance of greedy approval compared to MES.
	There are no standard ways to extend MES budget allocation in the literature (for PB). It is thus hard
	to compare rules based on the average capped fair share ratio the achieve.
	
	Interestingly, the average $L_1$ distance to FS does not suffer this drawback.
	Indeed, since it also penalises rules that provide agents more than their fair share, spending more is not always
	better.
	Interpreting the results of Figure~\ref{fig:distance_fs_rules} in this light, we conclude that MES rules
	perform better than sequential Phragmén in terms of equality of resources.
	Interestingly, MES$_{\mathit{cost}}$ performs slightly better than \rulex{}.
	It thus provides both good experimental results in terms of equality of resources and
	strong representation guarantees~\citep{PPS21NeurIPS}.
	
	\section{Conclusion}
	\label{sec:conclusion}
	
	In this paper, we have proposed to evaluate the fairness of a participatory budgeting decision
	by using the share of a voter as a measure of the \emph{resources} spent on
	satisfying the needs of voters rather than using the (assumed) \emph{satisfaction}
	each voter might derive from an allocation. Our results suggest that this is an
	interesting measure of fairness that deserves further attention.
	
	In summary, we have seen that perfect fairness in the sense of fair share
	is not always achievable, as is usually the case
	in PB, due to the discrete nature of the process. More surprisingly, our experiments show that,
	in practice, it is often even impossible to achieve outcomes that are close to fair share.
	Nevertheless, we were able to relax the requirements of fair share to define
	several desiderata that can always be satisfied. Using these criteria, we are able to identify 
	\rulex{} as the polynomial-time-computable PB rule that is most equitable in terms
	of resources, as it satisfies both Local-FS as well as EJS-1.
	This result is strengthened by our experimental evaluation that shows that \rulex{}
	selects bundles that are close to optimal with regards to distance to FS.
	
	It is worth noting, however, that MES$_{\mathit{cost}}$
	performs slightly better than \rulex{} in our experiments, hinting at an interesting
	connection between share- and satisfaction-based fairness notions.
	Exploring whether a meaningful compromise
	between these two types of fairness can be achieved is important future work,
	even though, such a compromise would have to be significantly weaker than EJS and EJR
	(in view of, \textit{e.g.}, Example~\ref{EJS-EJR}).
	
	Another important open question is, whether a (natural) PB rule exists
	that satisfies EJS as well as FS and FS-1 whenever they are satisfiable.
	Such a rule would necessarily be intractable, but it would 
	provide strong axiomatic fairness guarantees for PB instances which are small enough
	to allow for the computation of rules with exponential runtime.
	
	\paragraph{Acknowledgments.} We would like to thank the anonymous reviewers, both at this conference and at MPREF-2022, for their valuable and insightful comments.	
	This research was funded in part by the Austrian Science Fund (FWF) under grant numbers
	P3189 and J4581 as well as the Dutch Research Council (NWO) under grant number 639.023.811.
		
	\bibliographystyle{ACM-Reference-Format}
	\bibliography{bib}
	
	\appendix

	\clearpage 

	\section{Full Proofs}
	\subsection{Proof of Proposition~\ref{prop:FS-NP}}
	
	\begin{proof}
		It is clear that checking if a fair share allocation exists is in \complexNP.
		We show \complexNP-hardness by a reduction from \textsc{3-Set-Cover} \citep{FurerY11}, which is the problem of deciding, given a universe $U = \{u_1,\ldots,u_{|U|}\}$, a set $S$ of 3-element subsets of $U$,
		and an integer $k$, whether there exists a subset $S'$ of $S$ such that $\bigcup S' = U$ and $|S|\leq k$.
		Let $(U,S,k)$ be an instance of \textsc{3-Set-Cover}.
		We can assume without loss of generality that $k \leq |U|$.
		
		We build a PB instance as follows: For every element in $U$ there is a voter $1, \dots, |U|$.
		Moreover, there are $2|U|+3$ many auxiliary voters $|U| + 1, \dots, 3|U|+3$ , \textit{i.e.}, $\agentSet = \{1, \dots, 3|U|+3\}$.
		Furthermore, $\projSet = \{p_1, \dots, p_{|S|},p^*\}$, \textit{i.e.}, for every set $s_j \in S$ there is a project
		$p_j$, and there is one auxiliary project $p^*$. We assume unit costs
		and $b = k+1$. The ballot for each voter $i\leq |U|$ is given by $p_j \in A_i$
		if and only if $u_i \in s_j$, \textit{i.e.}, $i$ approves the project representing the set $s_j$ if and only if $u_i$
		is in $s_j$. The auxiliary voters all approve only of $p^*$.
		We claim that there is an allocation $\pi$ that satisfies FS
		if and only if $(U,S,k)$ is a positive instance of \textsc{3-Set-Cover}.
		
		Assume first that there is no set cover of size $k$, so for any set $S'\subseteq S$ of size $k$
		there is an element $u_i$ that is not contained in any set in $S'$.
		It follows that for every allocation $\pi$ of $k$ or fewer projects there is one voter $i \leq |U|$
		with $\share(\pi,i) = 0$. Moreover, for any allocation that does not contain $p^*$, all voters $i > |U|$
		have share $0$. Hence, no allocation with at most $k+1$ projects can satisfy~FS.
		
		Now assume that $S'$ is a set cover of size $k$. We claim that $\pi := \{p_j \mid s_j \in S'\} \cup \{p^*\}$
		satisfies FS. By assumption,
		\[\frac{b}{|\agentSet|} = \frac{k+1}{3|U|+3}\leq \frac{|U|+1}{3|U|+3} =\frac{1}{3}.\]
		Moreover, for every project $p_j$ we have $|\{i \mid p_j \in A_i\}| = 3$ because $|s_j| = 3$ for
		all $s_j \in S$. Now, because $S'$ is a set cover, for each voter $i \leq |U|$ there is a project
		$p_j \in \pi$ such that $p_j \in A_i$. It follows that $\share(\pi,i) \geq \nicefrac{1}{3}$
		for all $i \leq |U|$. For every $i > |U|$ we have $A_i = \{p^*\}$. Now as $p^* \in \pi$
		we have $\share(\pi,i) = \share(A_i,i)$ for all $i > |U|$. It follows that $\pi$ satisfies~FS.
	\end{proof}

	\subsection{Proof of Proposition~\ref{prop:FS1_NP_complete}}
	\begin{proof}
		Checking whether an FS-1 allocation exists clearly is in \complexNP. We show \complexNP-hardness
		using \textsc{3-Set-Cover}~\citep{FurerY11}.
		Consider an instance of the \textsc{3-Set-Cover} problem, \textit{i.e.}, a universe $U = \{u_1, \ldots, u_{|U|}\}$, a set $S = \{s_1, \ldots, s_{|S|}\}$ of 3-element subsets of $U$, and an integer $k \in \mathbb{N}$. Without loss of generality, we make two assumptions: $(i)$ every $u_i \in U$ appears in at least one $s_j \in S$; $(ii)$ $\nicefrac{|U|}{3} \leq k < |U|$. Note that whenever one of these assumptions is violated, the answer of the \textsc{3-Set-Cover} problem can easily be found in polynomial time.
		
		We furthermore make the assumption that $k > \nicefrac{|U|}{3}$. This assumption is justified by the following reduction: Consider an instance $\tuple{U^1, S^1, k^1}$ for which $k^1 = \nicefrac{|U^1|}{3}$. We reduce this to an instance $\tuple{U^2, S^2, k^2}$ for which $k^2 > \nicefrac{|U^2|}{3}$ holds. This instance is constructed as follows: $U^2 = U^1 \cup \{u^2_1, u^2_2, u^2_3, u^2_4\}$, $S^2 = S^1 \cup \{\{u^2_1, u^2_2, u^2_3\}, \{u^2_2, u^2_3, u^2_4\}\}$, and $k^2 = k^1 + 2$. It is clear that a subset $S'_1 \subseteq S^1$ is a solution of the \textsc{3-Set-Cover} problem for $\tuple{U^1, S^1, k^1}$ if and only if $S'_2 = S'_1 \cup \{\{u^2_1, u^2_2, u^2_3\}, \{u^2_2, u^2_3, u^2_4\}\}$ is a suitable solution for $\tuple{U^2, S^2, k^2}$. Moreover, note $k^2 = \nicefrac{|U^1|}{3} + 2$ and that $\nicefrac{|U^2|}{3} = \nicefrac{|U^1|}{3} + \nicefrac{2}{3}$. We thus have $k^2 > \nicefrac{|U^2|}{3}$. This shows that \textsc{3-Set-Cover} remains \complexNP-complete even if we require that $k > \nicefrac{|U|}{3}$.
		
		Let us now get to the reduction. We distinguish between two cases based the relative value of $k$ and $|U|$, and construct different instances in both cases.
		
		\underline{\smash{Whenever $\nicefrac{1}{3}|U| < k \leq \nicefrac{2}{3}|U|$}}, we construct a PB instance $I$ as follows. The set of voters is $\agentSet = \{1, \ldots, 2|U|\}$, \textit{i.e.}, there are two voters per element of $U$. The set of projects is $\projSet = \{p_j^1 \mid s_j \in S\} \cup \{p_j^2 \mid s_j \in S\}$, \textit{i.e.}, there are two projects per element of $S$. All projects have cost 1 and the budget limit is $b = k$. The approval ballots are such that $A_i = \{p_j^1 \mid u_i \in s_j\} \cup \{p_j^2 \mid u_i \in s_j\}$ for all $1 \leq i \leq |U|$, and $A_i = \{p_j^1 \mid u_{i - |U|} \in s_j\} \cup \{p_j^2 \mid u_{i - |U|} \in s_j\}$ for all $|U| + 1 \leq i \leq 2|U|$, \textit{i.e.}, agent $i$ approves of the two projects representing the set $s_j$ if and only if $u_i \in s_j$. We know want to prove that there exists a suitable $S' \subseteq S$ to answer the \textsc{3-Set-Cover} problem, if and only if, there exists an FS-1 budget allocation in the PB instance previously described.
		
		Observe that in $I$ and $\profile$, to reach their fair share, every agent needs to have two projects from the ones they approve of selected. Indeed, remember that we assumed $\nicefrac{1}{3}|U| < k \leq \nicefrac{2}{3}|U|$. Since $\nicefrac{b}{n} = \nicefrac{k}{2|U|}$, we thus have
		$\nicefrac{1}{6} < \nicefrac{b}{n} \leq \nicefrac{1}{3}$.
		
		Moreover, for every project $p \in \projSet$, there are exactly six agents approving of it, so for any agent $i \in \agentSet$ approving of $p$, we have $\share(\{p\}, i) = \nicefrac{1}{6}$. Given that every agent $i \in \agentSet$ approves of at least 2 projects, we thus have $\share(A_i, i) \geq \nicefrac{1}{3}$. Overall, we know that for every $i \in \agentSet$, $\nicefrac{1}{6} < \fairshare(i) \leq \nicefrac{1}{3}$, so every agent needs two projects to reach their fair share.
		
		Now, an allocation $\pi \in \allocSet(I)$ satisfies FS-1 if and only if every agent has a non-zero share in $\pi$: According to the above, if an agent has a non-zero share then adding an extra project will always grant them their fair share; and if an agent has a zero share, one would need to add two extra projects to $\pi$ which is not allowed. This is possible if and only if there exists a set $S' \subseteq S$ of size at most $k$ such that every elements of $U$ appears in at least one element of $S'$. This concludes the proof for this case.
		
		\underline{\smash{Whenever $\nicefrac{2}{3}|U| < k \leq |U|$}}, we construct a PB instance $I$ as follows. The set of voters is $\agentSet = \{1, \ldots, 2|U|\} \cup \{2|U| + 1, \ldots, 3k\}$, \textit{i.e.}, there are two voters per element of $U$, together with a certain number of additional voters. The set of projects is $\projSet = \{p_j^1 \mid s_j \in S\} \cup \{p_j^2 \mid s_j \in S\} \cup \{p^\star\}$, \textit{i.e.}, there are two projects per element of $S$, and an additional one. All projects have cost 1 and the budget limit is $b = k$. The approval ballots are such that$A_i = \{p_j^1 \mid u_i \in s_j\} \cup \{p_j^2 \mid u_i \in s_j\}$ for all $1 \leq i \leq |U|$, $A_i = \{p_j^1 \mid u_{i - |U|} \in s_j\} \cup \{p_j^2 \mid u_{i - |U|} \in s_j\}$ for all $|U| + 1 \leq i \leq 2|U|$, and $A_i = \{p^\star\}$ for all $2|U| < i \leq 3k$. In total, agent $i$, for $i \leq 2|U|$, approves of the two projects representing the set $s_j$ if and only if $u_i \in s_j$; while the additional agents all approve only of $p^\star$.
		
		Let us discuss few facts about this construction. Remember that $\nicefrac{2}{3}|U| < k \leq |U|$. This implies $3k > 2|U|$, \textit{i.e.}, $3k \geq 2|U| + 1$ since $k \in \mathbb{N}$. The set of agent $\agentSet$ is thus well defined. Moreover, by construction we have $\nicefrac{b}{n} = \nicefrac{k}{3k} = \nicefrac{1}{3}$. As for the previous case, for every agent $i \leq 2|U|$, the marginal share of every project they approve of is $\nicefrac{1}{6}$. They thus deserve a share of at least $\nicefrac{1}{3}$ to get their fair share, which can only be done by selecting at least two projects. Observe in addition that every budget allocation $\pi \in \allocSet(I)$ satisfies the FS-1 condition for agents $2|U| + 1 \leq i \leq 3k$ (the project $p^\star$ can always be added, if it is not already in $\pi$, by virtue of FS-1).
		
		Overall, an allocation $\pi \in \allocSet(I)$ satisfies FS-1 if and only if every agent $i \leq 2|U|$ has a non-zero share in $\pi$. Such a $\pi$ exists if and only if there is a set $S' \subseteq S$ of size at most $k$ such that every elements of $U$ appears in at least one element of $S'$. This concludes the proof.
	\end{proof}
	
	\subsection{Proof of Proposition~\ref{Prop:Sat-EJS}}
	
	\begin{algorithm}[t]
		\DontPrintSemicolon
		
		\caption{Greedy EJS}
		\label{algo:greedyEJS}
		
		\KwIn{An instance $I = \tuple{\projSet, c, b}$ and a profile $\profile$}
		\KwOut{A budget allocation $\pi \in \allocSet(I)$ satisfying EJS}
		
		Intialise $\pi$ and $N^\star$ as the empty set: $\pi \gets \emptyset$, $N^\star \gets \emptyset$\;
		\While{there exists an $N \subseteq \agentSet \setminus N^\star$ with $N \neq \emptyset$ and a $P \subseteq \projSet \setminus \pi$ with $P \neq \emptyset$, such that $N$ is $P$-cohesive}{
			Let $N \subseteq \agentSet \setminus N^\star$ and $P \subseteq \projSet \setminus \pi$ be such that:
			\begin{align*}
				(N, P) \in \argmax_{\substack{(N', P') \ \in\  2^{\agentSet \setminus N^\star} \times 2^{\projSet \setminus \pi} \\ N' \text{ is }P'\text{-cohesive}}} \ \max_{i \in N'} \ \share(P', i)
			\end{align*}\;
			\vspace{-1em} Select the projects in $P$: $\pi \gets \pi \cup P$\;
			Agents in $N$ have been satisfied: $N^\star \gets N^\star \cup N$\;
		}
		\KwRet{the budget allocation $\pi$}
		
	\end{algorithm}
	
	\noindent
	
	\begin{proof}
		We show that Algorithm \ref{algo:greedyEJS} computes a feasible budget allocation that satisfies EJS. Let us consider an arbitrary instance $I = \tuple{\projSet, c, b}$ and profile $\profile$.
		
		We first show that the budget allocation returned by the algorithm indeed is feasible.
		
		\begin{claim}
			The budget allocation $\pi$ returned by Algorithm \ref{algo:greedyEJS} on $I$ and $\profile$ is feasible.
		\end{claim}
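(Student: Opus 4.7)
The plan is to prove feasibility $c(\pi) \leq b$ by charging the cost of projects added in each iteration to the agents moved into $N^\star$ in that same iteration. The key ingredient is the defining inequality of $P$-cohesiveness: whenever $N$ is $P$-cohesive, $c(P) \leq \frac{|N|}{n} \cdot b$.

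First I would index the iterations of the while-loop by $t = 1, 2, \dots$ and let $(N_t, P_t)$ denote the pair selected in iteration $t$. Two disjointness facts fall straight out of the pseudocode: the groups $N_t$ are pairwise disjoint because $N_t \subseteq \agentSet \setminus N^\star$ at the moment of selection and all previously chosen $N_{t'}$ have already been added to $N^\star$; similarly, the bundles $P_t$ are pairwise disjoint because $P_t \subseteq \projSet \setminus \pi$ and $\pi$ already contains $\bigcup_{t' < t} P_{t'}$. Consequently, the final allocation is the disjoint union $\pi = \bigsqcup_t P_t$, so $c(\pi) = \sum_t c(P_t)$.

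With disjointness in hand, the calculation is a one-liner. Applying the cohesiveness inequality to each chosen pair and summing gives
\[c(\pi) \;=\; \sum_t c(P_t) \;\leq\; \frac{b}{n} \sum_t |N_t| \;\leq\; \frac{b}{n} \cdot |\agentSet| \;=\; b,\]
which is precisely the required feasibility bound. Termination of the loop is automatic: $|N^\star|$ strictly grows in every iteration, so there are at most $n$ rounds.

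I do not anticipate a real obstacle here; the argument is essentially a telescoping of the cohesiveness condition. The only conceivable pitfall would be overlapping groups or bundles across iterations, but the explicit restrictions to $\agentSet \setminus N^\star$ and $\projSet \setminus \pi$ rule that out by construction. Note that the more delicate maximality built into the $\argmax$ plays no role in this claim — it will be needed in the next step to establish that the returned allocation actually satisfies EJS.
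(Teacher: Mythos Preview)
Your argument is correct and essentially identical to the paper's own proof: both index the iterations, use the disjointness of the $P_t$ (from $P_t \subseteq \projSet \setminus \pi$) to write $c(\pi) = \sum_t c(P_t)$, bound each summand via cohesiveness, and then use disjointness of the $N_t$ to bound $\sum_t |N_t| \leq n$. You are slightly more explicit about why the disjointness holds, but the structure is the same.
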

		
		\begin{claimproof}
			Consider the run of the algorithm on $I$ and $\profile$ and assume that the while-loop is run $k$ times. Let us call $(N_j, P_j)$ the sets of agents and projects that are selected during the $j$-ith run of the while-loop, for all $j \in \{1, \ldots, k\}$. We then have:
			\[c(\pi) = \sum_{j = 1}^k c(P_j) \leq \sum_{j = 1}^k \frac{|N_j| \times b}{n} \leq b.\]
			The first equality comes from the fact that $P_1, \ldots, P_k$ is a partition of $\pi$. The inequality is derived from the fact that $N_j$ is a $P_j$-cohesive group, for all $j \in \{1, \ldots, k\}$ (it is an inequality because for any of the projects $p \in P_j$, some agents outside of $N_j$ may approve of it; $c(p)$ can thus be split among more than $|N_j|$ agents). The final inequality is linked to the fact all the $N_1, \ldots, N_k$ are pairwise disjoint. Overall, the outcome of Algorithm \ref{algo:greedyEJS} is a feasible budget allocation.
		\end{claimproof}
		
		\medskip\noindent Let us now prove that the algorithm does compute an EJS budget allocation.
		
		\begin{claim}
			The budget allocation $\pi$ returned by Algorithm \ref{algo:greedyEJS} on $I$ and $\profile$ satisfies EJS.
		\end{claim}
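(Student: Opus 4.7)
The plan is to take an arbitrary $P$-cohesive group $N$ and exhibit an agent $i \in N$ with $\share(\pi, i) \geq \share(P, i)$. Let $(N_j, P_j)$ denote the pair picked in iteration $j$ of the while-loop, and write $\pi_j = \bigcup_{j' \leq j} P_{j'}$ with $\pi_0 = \emptyset$ and $\pi_k = \pi$. The argument splits into two cases depending on whether any agent of $N$ is ever added to $N^\star$.

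In the easy case, no agent of $N$ is ever selected, so $N \subseteq \agentSet \setminus N^\star$ at termination. I would observe that if $N$ is $P$-cohesive then $N$ is also $(P \setminus \pi)$-cohesive, since every member of $N$ still approves all projects in $P \setminus \pi$ and $\frac{c(P \setminus \pi)}{b} \leq \frac{c(P)}{b} \leq \frac{|N|}{n}$. Consequently, the algorithm could only have terminated if $P \setminus \pi = \emptyset$, i.e., $P \subseteq \pi$, from which $\share(\pi, i) \geq \share(P, i)$ for every $i \in N$.

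In the main case, let $j$ be the first iteration at which some agent of $N$ enters $N^\star$; fix $i \in N \cap N_j$. At the start of iteration $j$ we still have $N \subseteq \agentSet \setminus N^\star$, so if $P \not\subseteq \pi_{j-1}$ then $(N, P \setminus \pi_{j-1})$ is a valid candidate by the same cohesion argument as above. The key observation I would leverage is that because every agent of a cohesive group approves every project of the corresponding set, $\share(P_j, i')$ takes a value $v_j$ that does not depend on $i' \in N_j$, and likewise $\share(P \setminus \pi_{j-1}, i')$ has a common value $w$ across $i' \in N$. The $\argmax$ rule of Algorithm~\ref{algo:greedyEJS} then yields $v_j \geq w$.

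The main obstacle is turning this $v_j \geq w$ inequality into the desired comparison between $\share(\pi, i)$ and $\share(P, i)$, which I would handle via monotonicity of share in the project set. Since $i \in N_j$, we have $\share(\pi, i) \geq \share(\pi_j, i) = \share(\pi_{j-1}, i) + v_j \geq \share(\pi_{j-1}, i) + w$. On the other hand, $\share(P, i) = \share(P \cap \pi_{j-1}, i) + w$, and because $P \cap \pi_{j-1} \subseteq \pi_{j-1}$ and share is monotone in its set argument we get $\share(\pi_{j-1}, i) \geq \share(P \cap \pi_{j-1}, i)$. Chaining these inequalities yields $\share(\pi, i) \geq \share(P, i)$, completing the case. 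The remaining possibility $P \subseteq \pi_{j-1}$ is immediate, as it gives $\share(\pi, i) \geq \share(P, i)$ outright.
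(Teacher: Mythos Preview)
Your proof is correct and shares the paper's core idea: locate the first iteration $j$ at which some member of $N$ enters $N^\star$, and exploit the $\argmax$ choice of Algorithm~\ref{algo:greedyEJS} together with the fact that all members of a cohesive group receive the same share from their common project set. The difference is mainly structural. The paper argues by contradiction and, when $P$ overlaps with already-selected projects, reduces to the pair $(N, P \setminus \pi)$; its presentation leaves somewhat implicit how the contradiction for the original $P$ is recovered from the conclusion for $P \setminus \pi$. Your direct argument instead uses the candidate $(N, P \setminus \pi_{j-1})$ at iteration $j$ and then explicitly accounts for the overlap via the monotonicity step $\share(\pi_{j-1}, i) \geq \share(P \cap \pi_{j-1}, i)$, which chains cleanly to $\share(\pi, i) \geq \share(P, i)$. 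This buys you a tighter and more transparent bound without the reduction step, at the cost of tracking $\pi_{j-1}$ rather than just~$\pi$.
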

		
		\begin{claimproof}
			Assume towards a contradiction that $\pi$ violates EJS. Then, there must exist some $N \subseteq \agentSet$ and $P \subseteq \projSet$ such that $N$ is $P$-cohesive but also such that, for all agents $i \in N$, we have $\share(\pi, i) < \share(P, i)$. Note that, if $P \nsubseteq \pi$, this means that at the end of the algorithm either one agent $i \in N$ has been satisfied ($i \in N^\star$ when the algorithm returns) or that one project $p \in P$ has been selected ($p \in \pi$ when the algorithm returns). We distinguish these two cases.
			
			First, consider the case where one agent has been satisfied by the end of the algorithm. Using the same notation as for the previous claim, there exists then a smallest $j \in \{1, \ldots, k\}$ such that there exist $i^\star \in N \cap N_j$. Given that $(N, P)$ has not been selecting during that run of the while loop, it means that:
			\[\max_{i' \in N_j} \share(P_j, i') \geq \max_{i \in N} \share(P, i).\]
			Since the cost of a project is split equality among its supporters, it is easy to observe that for any $P$-cohesive group $N$, and for every two agents $i, i' \in N$, we have $\share(P, i) = \share(P, i')$. Moreover, we also have that $\share(P', i) \leq \share(P, i)$ for any $P' \subseteq P \subseteq \projSet$ and $i \in \agentSet$. Overall, for our specific agent $i^\star \in N \cap N_j$, we have:
			\begin{align*}
				\share(\pi, i^\star) & \geq \max_{i' \in N_j} \share(P_j, i') \\
				& \geq \max_{i \in N} \share(P, i) \\
				& \geq \share(P, i^\star),
			\end{align*}
			which contradicts the fact that $\pi$ fails EJS.
			
			Let us now consider the second case, \textit{i.e.}, when $P \cap \pi \neq \emptyset$ but $P \nsubseteq \pi$. In this case, it is important to see that if $N$ is $P$-cohesive, then it is also $P'$-cohesive for all $P' \subseteq P$. Then, we can run the same proof considering the $(P \setminus \pi)$-cohesive group $N$. Iterating this argument, would either lead to the conclusion that $P \subseteq \pi$, a contradiction, or to another contradiction due to the first case we considered (when some agent of $N$ is already satisfied).
		\end{claimproof}
		
		\medskip\noindent Finally, it should be noted that Algorithm \ref{algo:greedyEJS} always terminates. Indeed, after each run of the while-loop, at least one agent is added to the set $N^\star$. Moreover, if $N^\star = \agentSet$, the condition of the while-loop would be violated and the algorithm would terminate. Overall at most $n$ runs through the while-loop can occur. This concludes the proof.
	\end{proof}
	
	\subsection{Proof of Theorem~\ref{Thm:EJS-hard}}
	
	\begin{proof}
		Assume, that there is an algorithm $\mathbb{A}$ that always computes an allocation satisfying EJS.

		We will make use of the \textsc{Subset-Sum} problem, known to be \complexNP-hard. In this problem, we are given as input a set $S = \{s_1, \ldots, s_m\}$ of integers and a target $t \in \mathbb{N}$ and we wonder whether there exists an $X \subseteq S$ such that $\sum_{x \in X} x = t$.
		
		Given $S$ and $t$ as described above, we construct $I = \tuple{\projSet, c, b}$ and $\profile$ as follows. We have $m$ projects $\projSet = \{p_1, \ldots, p_m\}$ with the following cost function $c(p_j) = s_j$ for all $j \in \{1, \ldots, m\}$ and a budget limit $b = t$. There is moreover only one agent, who approves of all the projects.
		
		Now, $(S,t)$ is a positive instance of \textsc{Subset-Sum} if and only if there
		is a budget allocation $\pi \in \allocSet(I)$ that cost is exactly $b$. If such an allocation $\pi$
		exists, then the one voter $1$ is $\pi$-cohesive. Therefore, any allocation $\pi'$ that satisfies
		EJS must give that voter $\share(1, \pi') \geq \share(1,\pi) = c(\pi)$.
		Hence, $(S,t)$ is a positive instance of \textsc{Subset-Sum} if and only if
		$c(\mathbb A(I,\profile)) = b$. This way, we can
		use $\mathbb{A}$ to solve \textsc{Subset-Sum} in polynomial time.
	\end{proof}
	
	\subsection{Proof of Proposition~\ref{prop:runningTimeGreedyEJS}}

	\begin{proof}
		Consider an instance $I = \tuple{\projSet, c, b}$ and a profile $\profile$ on which Algorithm \ref{algo:greedyEJS} is run.
		
		The first thing to note is that at least one agent is added to $N^\star$ during each run through the while-loop and that, if ever $N^\star = \agentSet$, the condition of the while-loop is trivially satisfied. Overall, the while-loop can only be run $n$ times.
		
		Let us have a closer look at what is happening inside the while-loop. The main computational task here is the maximisation that goes through all subsets of agents and of projects. We will show that we can avoid going through all subsets of agents. Indeed, consider a subset of projects $P \subseteq \projSet$ and let $N \subseteq \agentSet$ be the largest set of agents such that for all $i \in N$, $P \subseteq A_i$. Note such a set $N$ can be efficiently computed (by going through all the approval ballots). Now, if some group of agents is $P$-cohesive, then for sure $N$ should be $P$-cohesive. Moreover, note that for any $P$-cohesive group $N'$, and for every two agents $i \in N$ and $i' \in N \cup N'$, we have $\share(P, i) = \share(P, i')$. Overall, one can, without loss of generality, only consider the group of agents $N$ when considering the subset of projects $P$. This implies that the maximisation step can be computed by going through all the subsets of projects and, for each of them, only considering a single subset of agents (that is efficiently computable).
	\end{proof}

	\subsection{Proof of Theorem~\ref{Thm:Rule-X-Local-EJS}}
	
	\begin{proof}
		
		Let $\pi = \{p_1, \dots, p_k\}$ be the budget allocation output by \rulex{} on instance $I$ and profile $\profile$,
		where $p_1$ was selected first, $p_2$ second etc. For any $1 \leq j \leq k$, set $\pi_j := \{p_1, \ldots, p_j\}$. Consider $N \subseteq \agentSet$, a $P$-cohesive group, for some $P \subseteq \projSet$.
		We show that $\pi$ satisfies Local-EJS for $N$.
		If $P \subseteq \pi$ then Local-EJS is satisfied by definition.
		We will thus assume that $P \not \subseteq \pi$.
		
		Let $k^*$ be the first round after which there exists a voter $i^* \in N$ whose load
		is larger than $\nicefrac{b}{n} - \nicefrac{1}{|N|}$. Such a round must exist
		as otherwise the voters in $N$ could afford another project from $P$.
		As we assumed $P \not \subseteq \pi$, this would mean that \rulex{} cannot have stopped.
		Let $\pi^* = \pi_{k^*}$ and consider an arbitrary project $p^* \in P \setminus \pi^*$.
		Our goal is to prove that $\pi^*$ satisfies Local-EJS for $N$, \textit{i.e.}:
		\begin{align}
			& \share(\pi^* \cup \{p^*\}, i^*) > \share(P, i^*) \nonumber \\
			\Leftrightarrow \quad & \share(\pi^*, i^*) > \share(P \setminus \{p^*\}, i^*) \nonumber \\
			\Leftrightarrow \quad & \share(\pi^* \cap P, i^*) + \share(\pi^* \setminus P, i^*) > \nonumber \\
			& ~~\share(P \cap \pi^*, i^*) + \share(P \setminus (\pi^* \cup \{p^*\}), i^*) \nonumber \\
			\Leftrightarrow \quad & share(\pi^* \setminus P, i^*) > \share(P \setminus (\pi^* \cup \{p^*\}), i^*). \label{eq:RuleXLocalEJS_mainClaim}
		\end{align}
		We will now work on each side of inequality \eqref{eq:RuleXLocalEJS_mainClaim} to eventually prove that it is indeed satisfied.
		
		\medskip\noindent	
		We start by the left-hand side of \eqref{eq:RuleXLocalEJS_mainClaim}. Let us first introduce some notation that will allow us to eason in terms of share per unit of load. For a project $p \in \pi$, we denote by $\alpha(p)$ the smallest $\alpha \in \mathbb{R}_{> 0}$ such that $p$ was $\alpha$-affordable when \rulex{} selected it.
		Moreover, we define $q(p)$---the share that a voter that contributes fully to $p$
		gets per unit of load---as $q(p) := \nicefrac{1}{\alpha(p)}$.
		
		Since before round $k^*$, agent $i^*$ contributed in full for all projects in $\pi^*$ (as $\ell_{i^*} < \nicefrac{b}{|N|}$ after each round $1, \dots, k^*$),
		we know that $\alpha(p) \cdot \share(\{p\}, i^*)$
		equals the contribution of $i^*$ for $p$, and that for any $p \in \pi^*$. We thus have:
		\begin{align}
			& \share(\pi^* \setminus P, i^*) \nonumber \\
			= \quad & \sum_{p \in \pi^* \setminus P} \share(\{p\}, i^*) \nonumber \\
			= \quad & \sum_{p \in \pi^* \setminus P} \alpha(p) \cdot \share(\{p\}, i^*) \cdot \frac{1}{\alpha(p)} \nonumber \\
			= \quad & \sum_{p \in \pi^* \setminus P} \gamma_{i^*}(p) \cdot q(p), \label{eq:RuleXLocalEJS_LeftHand1}
		\end{align}
		where $\gamma_{i^*}(p)$ denotes the contribution of $i^*$ to any $p \in \pi$, defined such that if $p$ has been selected at round $j$, \textit{i.e.}, $p = p_j$, then
		$\gamma_{i^*}(p) = \gamma_{i^*}(\pi_j, \alpha(p_j), p_j)$.
		
		Now, let us denote by $q_{\min}$ the smallest $q(p)$ for any $p \in \pi^* \setminus P$. From \eqref{eq:RuleXLocalEJS_LeftHand1}, we get:
		\begin{equation}
			\share(\pi^* \setminus P, i^*) \geq q_{\min} \sum_{p \in \pi^* \setminus P} \gamma_{i^*}(p). \label{eq:RuleXLocalEJS_LeftHand2}
		\end{equation}
		
		\noindent
		We now turn to the right-hand side of \eqref{eq:RuleXLocalEJS_mainClaim}. We introduce some additional notation for that. For every project $p \in P$, we denote by $q^*(p)$ the share per load that a voter
		in $N$ receives if only voters in $N$ contribute to $p$, and they all contribute in full to $p$, defined as:
		\[q^*(p) = \frac{\share(\{p\}, i)}{\nicefrac{1}{|N|}} = \frac{|N|}{|\{A \in \profile \mid p \in A\}|},\]
		where $i$ is any agent in $N$.
		
		We have then:
		\begin{align}
			& \share(P \setminus (\pi^* \cup \{p^*\}), i^*) \nonumber \\
			= \quad & \sum_{p \in P \setminus (\pi^* \cup \{p^*\})} \share(\{p\}, i^*) \nonumber \\
			= \quad & \sum_{p \in P \setminus (\pi^* \cup \{p^*\})} \frac{\share(\{p\}, i^*)}{\nicefrac{1}{|N|}} \cdot \frac{1}{|N|} \nonumber \\
			= \quad & \sum_{p \in P \setminus (\pi^* \cup \{p^*\})} q^*(p) \cdot \frac{1}{|N|} \label{eq:RuleXLocalEJS_RightHand1}
		\end{align}
		Setting $q^*_{\max}$ to be the largest $q^*(p)$ for all $p \in P \setminus (\pi^* \cup \{p^*\})$, \eqref{eq:RuleXLocalEJS_RightHand1} gives us:		
		\begin{equation}\share(P \setminus (\pi^* \cup \{p^*\}), i^*) \leq q^*_{\max} \cdot \frac{|P \setminus (\pi^* \cup \{p^*\})|}{|N|}. \label{eq:RuleXLocalEJS_RightHand2}
		\end{equation}
		
		\noindent
		With the aim of proving inequality \eqref{eq:RuleXLocalEJS_mainClaim}, we want to show that
		\begin{equation}
			q_{\min} \cdot \sum_{p \in \pi^* \setminus P} \gamma_{i^*}(p) > q^*_{\max} \cdot \frac{|P \setminus (\pi^* \cup \{p^*\})|}{|N|}.
			\label{eq:RuleXLocalEJS_mainClaim2}
		\end{equation}
		Note that proving that this inequality holds, would in turn prove \eqref{eq:RuleXLocalEJS_mainClaim} thanks to
		\eqref{eq:RuleXLocalEJS_LeftHand2} and \eqref{eq:RuleXLocalEJS_RightHand2}. We divide the proof of \eqref{eq:RuleXLocalEJS_mainClaim2} into two claims.
		
		\begin{claim}
			$q_{\min} \geq q^*_{\max}$.
		\end{claim}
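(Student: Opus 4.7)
The plan is to prove the pointwise inequality $q(p) \geq q^*(p')$ for every choice of $p \in \pi^* \setminus P$ and $p' \in P \setminus (\pi^* \cup \{p^*\})$; the claim then follows by taking the appropriate min/max (the case where $\pi^* \setminus P$ is empty being vacuous). I would fix such a pair, let $j \leq k^*$ be the round in which $p$ was selected, and set $s_{p'} := |\{A \in \profile \mid p' \in A\}|$.

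The heart of the argument is to show that $p'$ is $\alpha$-affordable in round $j$ for some $\alpha \leq s_{p'}/|N|$. By the defining property of $k^*$, every voter $i \in N$ still has residual budget $\nicefrac{b}{n} - \ell_i(\pi_{j-1}) \geq \nicefrac{1}{|N|}$ at the start of round $j$. I would introduce the total-contribution function
\[ T(\alpha) \;=\; \sum_{i : p' \in A_i} \min\!\bigl(\nicefrac{b}{n} - \ell_i(\pi_{j-1}),\; \alpha / s_{p'}\bigr), \]
observe that $T$ is continuous and non-decreasing in $\alpha$ with $T(0) = 0$, and then check that $T(s_{p'}/|N|) \geq 1$. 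The latter uses unit costs together with $P$-cohesiveness: at $\alpha = s_{p'}/|N|$ the in-full contribution is $\alpha/s_{p'} = \nicefrac{1}{|N|}$, which each of the $|N|$ voters of $N$ (all of whom approve $p'$ because $p' \in P \subseteq \bigcap_{i \in N} A_i$) can afford by the residual-budget bound, and these $|N|$ in-full contributions alone sum to $1 = c(p')$. The intermediate-value theorem then supplies $\alpha' \leq s_{p'}/|N|$ with $T(\alpha') = c(p')$, establishing $\alpha'$-affordability of $p'$ at round $j$.

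Since \rulex{} in round $j$ selects a project of minimum affordability threshold among all affordable projects, I conclude $\alpha(p) \leq \alpha' \leq s_{p'}/|N|$; taking reciprocals yields $q(p) \geq |N|/s_{p'} = q^*(p')$. I expect the main subtlety to lie in the $T(s_{p'}/|N|) \geq 1$ step, where one has to handle the potential additional contributions of supporters outside $N$ correctly: they can only push $T$ upwards and hence do not break the argument, but spelling out the monotonicity-plus-IVT reasoning carefully (so that the existence of $\alpha' \leq s_{p'}/|N|$ with exact equality $T(\alpha') = c(p')$ is justified, rather than a possibly strict overshoot) is the delicate bookkeeping point. Everything else is a direct computation from the definitions of $q$, $q^*$, and $\alpha$-affordability.
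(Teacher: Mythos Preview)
Your proposal is correct and follows essentially the same approach as the paper: both argue pointwise that, in any round $j \leq k^*$, the project $p'$ is $\alpha$-affordable for some $\alpha \leq s_{p'}/|N| = 1/q^*(p')$ (because every voter in $N$ retains residual budget at least $1/|N|$), and hence the project $p$ actually selected in that round satisfies $\alpha(p) \leq 1/q^*(p')$, i.e., $q(p) \geq q^*(p')$. The paper simply asserts the affordability step, whereas you spell out the continuity/IVT argument needed to turn ``contributions sum to at least $c(p')$'' into the exact equality required by the definition of $\alpha$-affordable; this is a genuine clarification rather than a different route.
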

		
		\begin{claimproof}
			Consider any project $p' \in P \setminus (\pi^* \cup \{p^*\})$. It must be the case that $p'$ was at least $\nicefrac{1}{q^*(p)}$-affordable in round $1, \dots, k^*$, for all $p \in \pi^*$,
			as all voters in $N$ could have fully contributed to it based on how we defined $k^*$.
			
			As no $p' \in P \setminus (\pi^* \cup \{p^*\})$ was selected by \rulex{}, we know that all projects
			that have been selected must have been at least as affordable, \textit{i.e.}, for all $p \in \pi^*$ and $p' \in P \setminus (\pi^* \cup \{p^*\})$ we have:
			\begin{align*}
				& \alpha(p) \leq \frac{1}{q^*(p')} \\
				\Leftrightarrow \quad & q(p) \geq q^*(p') \\
				\Leftrightarrow \quad & q_{\min} \geq q^*_{\max}.
			\end{align*}
			This concludes the proof of our first claim.
		\end{claimproof}

		\begin{claim}
			$\displaystyle \sum_{p \in \pi^* \setminus P} \gamma_{i^*}(p) > \frac{|P \setminus (\pi^* \cup \{p^*\})|}{|N|}$.
		\end{claim}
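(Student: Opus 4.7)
The plan is to decompose $\ell_{i^*}(\pi^*) = \sum_{p \in \pi^* \cap A_{i^*}} \gamma_{i^*}(p)$ into the contributions coming from $\pi^* \cap P$ and from $\pi^* \setminus P$, upper-bound the former by $\nicefrac{|\pi^* \cap P|}{|N|}$, and then combine with the lower bound $\ell_{i^*}(\pi^*) > \nicefrac{b}{n} - \nicefrac{1}{|N|}$ provided by the definition of $k^*$, strengthened via $P$-cohesiveness of $N$.

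The key step is to establish that $\gamma_{i^*}(p) \leq \nicefrac{1}{|N|}$ for every $p \in \pi^* \cap P$. Because $N$ is $P$-cohesive, every voter $i \in N$ approves of $p$ and, in the unit-cost setting, has the same share $\nicefrac{1}{|\{A \in \profile \mid p \in A\}|}$ from $\{p\}$. Furthermore, by the choice of $k^*$, no voter in $N$ can ever have been capped by the $\nicefrac{b}{n}$-cap when processing any project in $\pi^*$: at rounds strictly before $k^*$, being capped would immediately push a voter's load up to $\nicefrac{b}{n} > \nicefrac{b}{n} - \nicefrac{1}{|N|}$, contradicting the minimality of $k^*$, and at round $k^*$ itself I would choose the witness $i^*$ among the first-exceeding voters as one that still contributes in full to $p_{k^*}$. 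Hence each $i \in N$ contributes exactly $\alpha(p) \cdot \share(\{p\}, i^*) = \gamma_{i^*}(p)$, so $|N| \cdot \gamma_{i^*}(p) \leq \sum_{i \in \agentSet} \gamma_i(p) = c(p) = 1$, yielding $\gamma_{i^*}(p) \leq \nicefrac{1}{|N|}$. Summing over $p \in \pi^* \cap P$ gives $\sum_{p \in \pi^* \cap P} \gamma_{i^*}(p) \leq \nicefrac{|\pi^* \cap P|}{|N|}$.

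To conclude, $P$-cohesiveness of $N$ says $\nicefrac{|N|}{n} \geq \nicefrac{|P|}{b}$, hence $\nicefrac{b}{n} \geq \nicefrac{|P|}{|N|}$, refining the lower bound on $\ell_{i^*}(\pi^*)$ to $\ell_{i^*}(\pi^*) > \nicefrac{(|P|-1)}{|N|}$. Combining the two estimates,
\[\sum_{p \in \pi^* \setminus P} \gamma_{i^*}(p) = \ell_{i^*}(\pi^*) - \sum_{p \in \pi^* \cap P} \gamma_{i^*}(p) > \frac{|P| - 1 - |\pi^* \cap P|}{|N|} = \frac{|P \setminus (\pi^* \cup \{p^*\})|}{|N|},\]
where the last equality uses $p^* \in P \setminus \pi^*$. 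The main obstacle is the careful treatment of round $k^*$: at earlier rounds capping demonstrably cannot occur (it would by itself trigger the threshold), but at round $k^*$ one needs the flexibility in selecting the witness $i^*$ to ensure the per-project bound $\gamma_{i^*}(p_{k^*}) \leq \nicefrac{1}{|N|}$ is preserved for the project picked in that critical round.
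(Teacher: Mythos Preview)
Your proof is correct and follows the same route as the paper: split $\ell_{i^*}(\pi^*)$ into contributions on $\pi^*\cap P$ and $\pi^*\setminus P$, bound each $\gamma_{i^*}(p)$ for $p\in\pi^*\cap P$ by $1/|N|$ via the fact that every voter in $N$ contributes in full, and combine with $P$-cohesiveness to get $\ell_{i^*}(\pi^*) > (|P|-1)/|N|$.

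One small point: the ``flexibility in selecting $i^*$'' you invoke at round $k^*$ does not by itself justify that \emph{every} $i\in N$ contributes in full to $p_{k^*}$, which is precisely what your bound $|N|\cdot\gamma_{i^*}(p)\le c(p)$ relies on. The cleaner argument (which the paper tacitly uses) is that every $i\in N$ enters round $k^*$ with remaining budget at least $1/|N|$; if $p_{k^*}\in P$ then, with $m$ denoting its number of approvers, $p_{k^*}$ is already $(m/|N|)$-affordable, so $\alpha(p_{k^*})\le m/|N|$ and the full contribution $\alpha(p_{k^*})/m\le 1/|N|$ is within everyone's remaining budget---hence no voter in $N$ is capped at round $k^*$ either, and the choice of $i^*$ is immaterial.
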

		
		\begin{claimproof}
			From the choice of $k^*$, we know that the load of agent $i^*$ at round $k^*$ is such that:
			\[\ell_{i^*}(\pi^*) + \frac{1}{|N|} > \frac{b}{n}.\]
			On the other hand, since $N$ is a $P$-cohesive group, we know that:
			\[\frac{|P|}{|N|} = \frac{|P \setminus \{p^*\}|}{|N|}+ \frac{1}{|N|} \leq \frac{b}{n}.\]
			Linking these two facts together, we get:
			\[\ell_{i^*}(\pi^*) > \frac{|P \setminus \{p^*\}|}{|N|}.\]
			By the definition of the load, we thus have:
			\[\ell_{i^*}(\pi^*) = \sum_{p_j \in \pi^*} \gamma_{i^*}(p_j) > \frac{|P \setminus \{p^*\}|}{|N|}.\]
			This is equivalent to:
			\begin{multline}
				\sum_{p_j \in P \cap \pi^*} \gamma_{i^*}(p_j) + \sum_{p_j \in P \setminus \pi^*} \gamma_{i^*}(p_j) > \\ \frac{|P \cap \pi^*|}{|N|} + \frac{|P \setminus (\pi^*\cup \{p^*\})|}{|N|} \label{eq:RuleXLocalEJS_SecondClaim}
			\end{multline}
			Now, we observe that every voter in $N$ contributed in full for every
			project in $\pi^*$. It follows that the contribution of every voter in $N$ for a
			project $p_j \in P \cap \pi^*$ is smaller or equal the contribution needed if
			the voters in $N$ would fund the project by themselves. In other words for all
			$p \in P \cap \pi^*$ we have:
			\[\gamma_{i^*}(p) \leq \frac{1}{|N|}.\]
			It follows then that:
			\[\sum_{p_j \in P \cap \pi^*} \gamma_{i^*}(p_j) \leq \frac{|P \cap \pi^*|}{|N|}.\]
			For \eqref{eq:RuleXLocalEJS_SecondClaim} to be satisfied, we must have that:
			\[\sum_{p_j \in \pi^* \setminus P} \gamma_{i^*}(p_j) > \frac{|P \setminus (\pi^*\cup \{p^*\})|}{|N|}\]
			This concludes the proof of our second claim.
		\end{claimproof}
		
		\medskip\noindent Putting together these two claims immediately shows that inequality \eqref{eq:RuleXLocalEJS_mainClaim2} is satisfied, which in turn shows that \eqref{eq:RuleXLocalEJS_mainClaim} also is. Since $P$, $N$ and $p^*$ were chosen arbitrarily, this shows that \rulex{} satisfied Local-EJS in the unit-cost setting.
	\end{proof}
	
	\subsection{Proof of Theorem \ref{thm:share-all-relations}, Claim \eqref{relations-fs1-implies-ejs1}}

	\begin{proof}[Proof of \eqref{relations-fs1-implies-ejs1}.]
		Let $\pi$ satisfy FS-1.
		First, consider an agent $i \in \agentSet$ such that $\share(A_i, i) < \nicefrac{b}{n}$.
		For FS-1 to be satisfied, there must be a project $p \in \projSet$ such that $\share(\pi \cup \{p\}, i) \geq \share(A_i, i)$.
		This entails that $|A_i \setminus \pi| \leq 1$ should be the case. Hence, the conditions for EJS-1 are trivially
	satisfied for agent~$i$.
		
		Consider now an agent $i \in \agentSet$ such that $\share(A_i, i) \geq \nicefrac{b}{n}$.
		Since $\pi$ satisfies FS-1, we know that there must be a project $p \in \projSet$ such that $\share(\pi \cup \{p\}, i) \geq \nicefrac{b}{n}$.
		Let $N \subseteq \agentSet$ be a $P$-cohesive group, for some $P \subseteq \projSet$, such that $i \in N$.
		By definition of a cohesive group, we know that $c(P) \leq \nicefrac{b}{n} \cdot |N|$. Hence, $\share(P, i) \leq \nicefrac{b}{n}$.
		Overall, we get $\share(\pi \cup \{p\}, i) \geq \nicefrac{b}{n} \geq \share(P, i)$ and  $\pi$ satisfies EJS-1.
	\end{proof}

	\subsection{Counterexamples from Section~\ref{sec:taxonomy-share}}
	
	\begin{example}[FS-1 not implying Local-EJS]
		Consider the following instance with four projects, a budget limit of $b = 12$, and three agents.
		\begin{center}\small
			\begin{tabular}{ccccc}
				\toprule
				& $p_1$ & $p_2$ & $p_3$ & $p_4$ \\
				\midrule
				Cost & 4 & 2 & 5 & 7 \\
				\midrule
				$A_1$ & \xmark & \xmark & \xmark & \cmark \\
				$A_2$ & \cmark & \cmark & \cmark & \xmark \\
				$A_2$ & \cmark & \cmark & \cmark & \xmark \\
				\bottomrule
			\end{tabular}
		\end{center}
		Allocation $\pi = \{p_1, p_4\}$ satisfies FS-1 but fails Local-EJS: the $\{p_2, p_3\}$-cohesive group $\{2, 3\}$ deserves a share of 3.5, but adding $p_2$ to $\pi$ would not meet this requirement.
	\end{example}
	
	\begin{example}[Neither Strong-EJS nor FS-1 implying Local-FS]
		Consider the following instance with five projects, a budget limit of $b = 16$, and two agents.
		\begin{center}\small
			\begin{tabular}{cccccc}
				\toprule
				& $p_1$ & $p_2$ & $p_3$ & $p_4$ & $p_5$ \\
				\midrule
				Cost & 12 & 12 & 1 & 1 & 4\\
				\midrule
				$A_1$ & \cmark & \cmark & \cmark & \cmark & \xmark \\
				$A_2$ & \cmark & \xmark & \xmark & \xmark & \cmark \\
				\bottomrule
			\end{tabular}
		\end{center}
		In this instance, the cohesive groups are: $\{1, 2\}$ is $\{p_1\}$-cohesive, $\{1\}$ is $\{p_3\}$-cohesive, $\{p_4\}$-cohesive and $\{p_3, p_4\}$-cohesive, and $\{2\}$ is $\{p_5\}$-cohesive. Overall, to satisfy Strong-EJS, a budget allocation should provide a share of at least 6 to agents 1 and 2. The budget allocation $\pi = \{p_1, p_5\}$ thus satisfies Strong-EJS (note that is is exhaustive). However, one can easily check that $\pi$ does not satisfy the conditions of Local-FS as adding $p_3$ to $\pi$ only provides a share of $7$ to agent 1 while the lower bound for their fair share is $8$.
		
		Note that budget allocation $\pi = \{p_1, p_5\}$ does satisfy FS-1: the fair share of the voters is 8; voter 2 already has a share of 10 in $\pi$, and voter 1 would get a share of 18 in $\pi \cup \{p_2\}$.
	\end{example}
	
	\noindent
	Example~\ref{ex:Local-FS does not imply EJS-1} also shows that neither EJS, Local-EJS, nor EJS-1 imply Local-FS. This completes the picture.
	
	We already know that Local-FS does not imply FS-1 (Local-FS allocation always exists (as a consequence of Theorem~\ref{thm:Rule-X-Local-FS}) while FS-1 is not always satisfiable (Proposition~\ref{prop:FS-1-may-not-exist})).
	The following counterexample shows that even assuming the existence of an FS-1 allocation, Local-FS does not imply FS-1.
	
	\begin{example}[Local-FS not implying FS-1, even when an FS-1 allocation exists]
		Consider the following instance with three projects, a budget limit of $b = 10$, and four agents.
		\begin{center}\small
			\begin{tabular}{cccc}
				\toprule
				& $p_1$ & $p_2$ & $p_3$  \\
				\midrule
				Cost & 3 & 4 & 7 \\
				\midrule
				$A_1$ & \xmark & \xmark & \cmark  \\
				$A_2$ & \xmark & \xmark & \cmark  \\
				$A_3$ & \cmark & \cmark & \xmark  \\				
				$A_4$ & \cmark & \cmark & \cmark  	\\
				\bottomrule
			\end{tabular}
		\end{center}
		First, note that the $\fairshare(1)=\fairshare(2)=\nicefrac 7 3$ and $\fairshare(3)=\fairshare(4)=2.5$.
		The budget allocation $\{p_1,p_2\}$ satisfies FS-1, since adding project $p_3$. gives each agent at least their fair share.
		The budget allocation $\{p_3\}$ does not satisfy FS-1, as voters~$3$ and~$4$ do not achieve their fair share by adding only $p_1$ or $p_2$. It does, however, satisfy Local-FS, since adding $p_1$ or $p_2$ leads to too large a share for voter~$4$.
		Concretely, $\share(\{p_1,p_3\}, 4)=2+\nicefrac 7 3>2.5$ and $\share(\{p_2,p_3\}, 4)=\nicefrac 3 2+\nicefrac 7 3>2.5$.
		Consequently, Local-FS does not imply FS-1 even if an FS-1 allocation exists.
		
		Note, however, that $\{p_1,p_2\}$ satisfies both Local-FS and FS-1. So we have not ruled out the possibility that the existence of an FS-1 allocation implies the existence of an  allocation satisfying both FS-1 and Local-FS.
	\end{example}
	
	\section{Additional Details for the Experimental Analysis}
	
	In the following, we provide additional details regarding the experimental analysis.

	\subsection{Optimal Distance to Fair Share}
	
	Let us first focus on the first experiment.
	
	We first provide more information about the pre-processing step.
	To that end, Figure~\ref{fig:num_proj_preprocessed} presents the proportion of projects that have been removed by
	the different pre-processing operations.
	
	\begin{figure}
		\includegraphics[width=\linewidth]{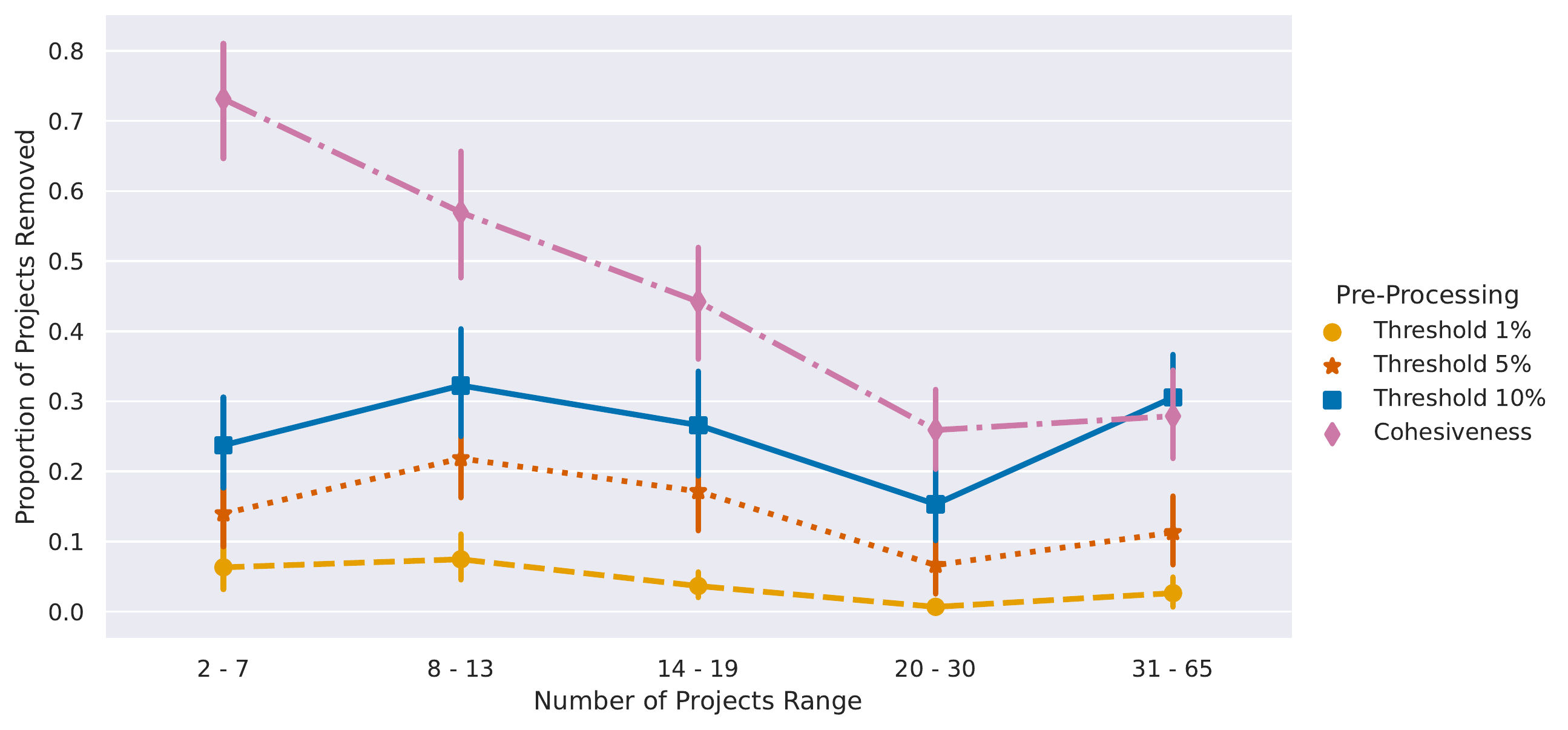}
		\caption{Proportion of projects removed during the pre-processing stage.}
		\label{fig:num_proj_preprocessed}
	\end{figure}
	
	We also document the how sensitive to the approximation ratio our results for the average capped fair share ratio
	are.
	Figure~\ref{fig:approx_ratio_capped_fs_ratio} presents the average capped fair share ratio achieved for different
	approximation ratios.
	We can see that one needs to signigicantly reduce the approximation ratio to obtain significant changes in the
	value of the average capped fair share ratio.
	
	\begin{figure}
		\includegraphics[width=\linewidth]{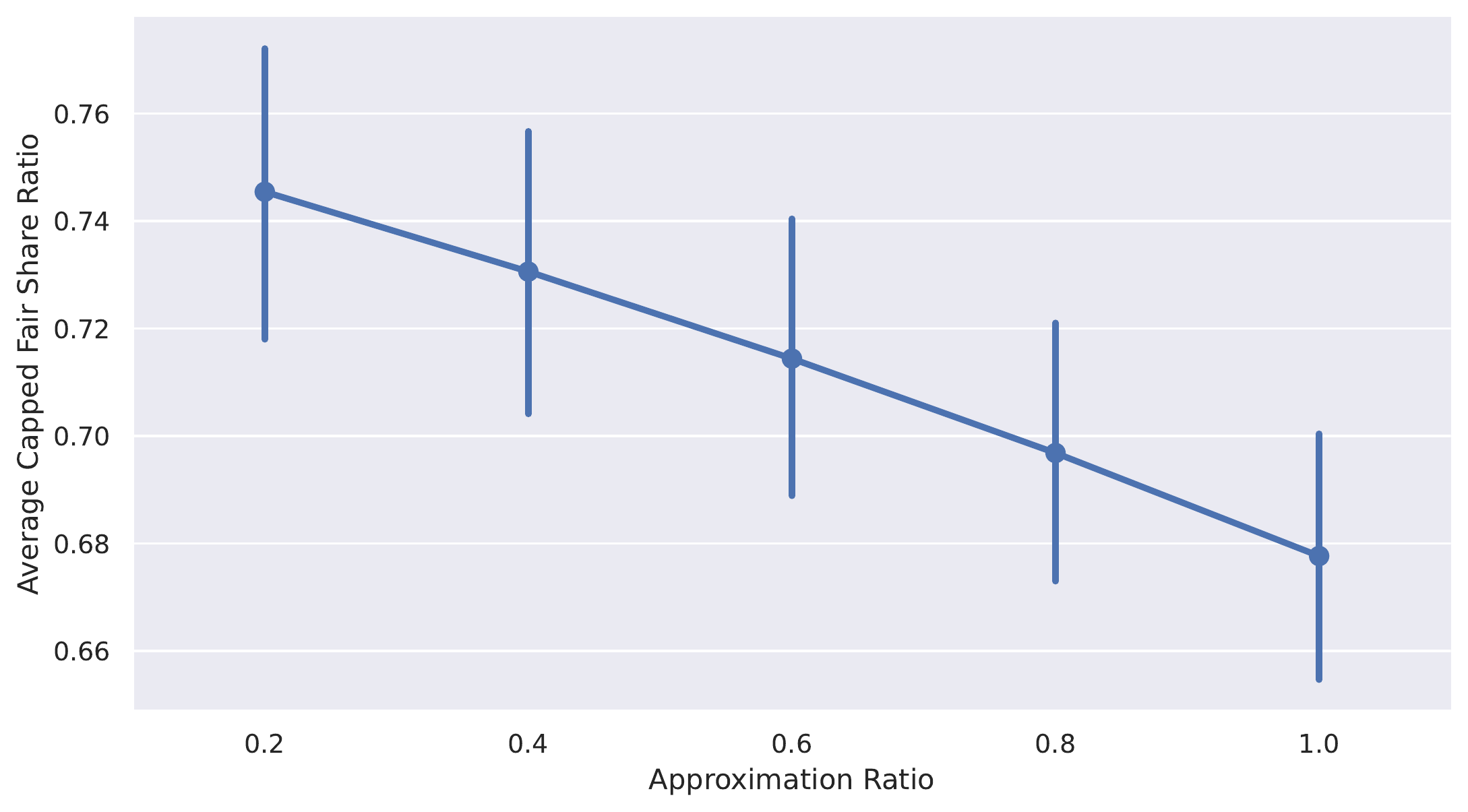}
		\caption{Average capped fair share ratio for different approximation ratio. For a given approximation ratio
			$\alpha \in (0, 1]$, the measure of interest is $\frac{1}{n} \cdot \sum_{i \in \agentSet} \min \left(
			\frac{\share(\pi, i)}{\fairshare(i) \cdot \alpha}, 1 \right)$.}
		\label{fig:approx_ratio_capped_fs_ratio}
	\end{figure}

	\subsection{Distance to Fair Share of Common PB Rules}

	For our second experiment, we first define formally all the rules that are considered.
	
	First, the two new MES rules, namely MES$_{\mathit{card}}$ and MES$_{\mathit{cost}}$, are defined similarly as
	\rulex{} but with different contribution functions.
	
	\begin{definition}[MES$_{\mathit{card}}$]
		Given an instance $I$ and a profile $\profile$, MES$_{\mathit{card}}$ constructs a budget allocation $\pi$,
		initially empty, iteratively as follows.
		A load $\ell_i: 2^\projSet \rightarrow \Rplus$, is associated with every agent $i \in \agentSet$, initialised
		as $\ell_i(\emptyset) = 0$ for all $i \in \agentSet$.
		Given $\pi$ and a scalar $\alpha \geq 0$, the card-contribution of agent $i \in \agentSet$ for project
		$p \in \projSet \setminus \pi$ is defined by:
		\[\gamma_i^{\mathit{card}}(\pi, \alpha, p) = \min\left(\nicefrac{b}{n} - \ell_i(\pi), \alpha \cdot
		\mathds{1}_{p \in A_i}\right).\]
		Given a budget allocation $\pi$, a project $p \in \projSet \setminus \pi$ is said to be
		$\alpha$-card-affordable, for $\alpha \geq 0$, if $\sum_{i \in \agentSet} \gamma_i^{\mathit{card}}
		(\pi, \alpha, p) \cdot \mathds{1}_{p \in A_i} = c(p).$
		
		At a given round with current budget allocation $\pi$, if no project is $\alpha$-card-affordable for any
		$\alpha$, MES$_{\mathit{card}}$ terminates.
		Otherwise, it selects a project $p \in \projSet \setminus \pi$ that is $\alpha^*$-card-affordable where
		$\alpha^*$ is the smallest $\alpha$ such that one project is $\alpha$-card-affordable ($\pi$ is updated to
		$\pi \cup \{p\}$). The agents' loads are then updated: If $p \notin A_i$, then $\ell_i(\pi \cup \{p\}) =
		\ell_i(\pi)$, and otherwise $\ell_i(\pi \cup \{p\}) = \ell_i(\pi) + \gamma_i^{\mathit{card}}(\pi, \alpha, p)$.
	\end{definition}

	\begin{definition}[MES$_{\mathit{cost}}$]
		Given an instance $I$ and a profile $\profile$, MES$_{\mathit{cost}}$ constructs a budget allocation $\pi$,
		initially empty, iteratively as follows.
		A load $\ell_i: 2^\projSet \rightarrow \Rplus$, is associated with every agent $i \in \agentSet$, initialised
		as $\ell_i(\emptyset) = 0$ for all $i \in \agentSet$.
		Given $\pi$ and a scalar $\alpha \geq 0$, the cost-contribution of agent $i \in \agentSet$ for project
		$p \in \projSet \setminus \pi$ is defined by:
		\[\gamma_i^{\mathit{cost}}(\pi, \alpha, p) = \min\left(\nicefrac{b}{n} - \ell_i(\pi), \alpha \cdot
		\mathds{1}_{p \in A_i} \cdot c(p)\right).\]
		Given a budget allocation $\pi$, a project $p \in \projSet \setminus \pi$ is said to be
		$\alpha$-cost-affordable, for $\alpha \geq 0$, if $\sum_{i \in \agentSet} \gamma_i^{\mathit{cost}}
		(\pi, \alpha, p) \cdot \mathds{1}_{p \in A_i} = c(p).$
		
		At a given round with current budget allocation $\pi$, if no project is $\alpha$-cost-affordable for any
		$\alpha$, MES$_{\mathit{cost}}$ terminates.
		Otherwise, it selects a project $p \in \projSet \setminus \pi$ that is $\alpha^*$-cost-affordable where $\alpha^*$ is the
		smallest $\alpha$ such that one project is $\alpha$-cost-affordable ($\pi$ is updated to $\pi \cup \{p\}$).
		The agents' loads are then updated: If $p \notin A_i$, then $\ell_i(\pi \cup \{p\}) = \ell_i(\pi)$, and
		otherwise $\ell_i(\pi \cup \{p\}) = \ell_i(\pi) + \gamma_i^{\mathit{cost}}(\pi, \alpha, p)$.
	\end{definition}
	
	\noindent We now introduce the sequential Phragmén rule.
	Note that we defined it here in its discrete formulation.
	For the continuous formulation, we refer the reader to \citet{LCG22}.
	
	\begin{definition}[Sequential Phragmén]
		Given an instance $I$ and a profile $\profile$, the sequential Phragmén rule constructs a budget allocation $\pi$,
		initially empty, iteratively as follows.
		A load $\ell_i: 2^\projSet \rightarrow \Rplus$, is associated with every agent $i \in \agentSet$, initialised
		as $\ell_i(\emptyset) = 0$ for all $i \in \agentSet$.
		Given $\pi$, the new maximum load for selecting project $p \in \projSet \setminus \pi$ is defined as:
		\[\ell^\star(\pi, p) = \frac{c(p) + \sum_{i \in \agentSet} \mathds{1}_{p \in A_i} \cdot  \ell_i(\pi)}{|\{i \in
			\agentSet \mid p \in A_i\}|}.\]
		At a given round with current budget allocation $\pi$, let $P^\star \subseteq \projSet$ be such that:
		\[P^\star = \argmin_{p \in \projSet \setminus \pi} \ell^\star(\pi, p).\]
		If there exists $p \in P^\star$ such that $c(\pi \cup \{p\}) > b$, sequential Phragmén terminates and outputs $\pi$.
		Otherwise, a project $p \in P^\star$ is selected ($\pi$ is updated to $\pi \cup \{p\}$) and the agents' load are
		updated: If $p \notin A_i$, then $\ell_i(\pi \cup \{p\}) = \ell_i(\pi)$, and otherwise $\ell_i(\pi \cup
		\{p\}) = \ell^\star(\pi, p)$.
	\end{definition}
	
	\noindent We finally define the greedy approval rule.
	
	\begin{definition}[Greedy Approval]
		Given an instance $I$ and a profile $\profile$, the greedy approval rule constructs a budget allocation $\pi$,
		initially empty, iteratively as follows.
		Projects are ordered according on their approval score (numbers of agents approving of it).
		At a given round with current budget allocation $\pi$, greedy approval consider the next project in the
		ordering, call it $p$. Project $p$ is then selected if and only if $c(\pi \cup \{p\}) \leq b$.
		The next project in the ordering, if any, is then considered.
	\end{definition}
	
	\noindent Note that for all of these rules, and \rulex{}, to return a single budget allocation, we need to use some
	tie-breaking mechanism (to distinguish between projects with same new load, same number of approvers etc\ldots).
	For our experiments, we always break ties in favour of the projects with the maximum number of approvers (and then
	lexicographically if ties persist).
	
	\begin{figure}
		\includegraphics[width=\linewidth]{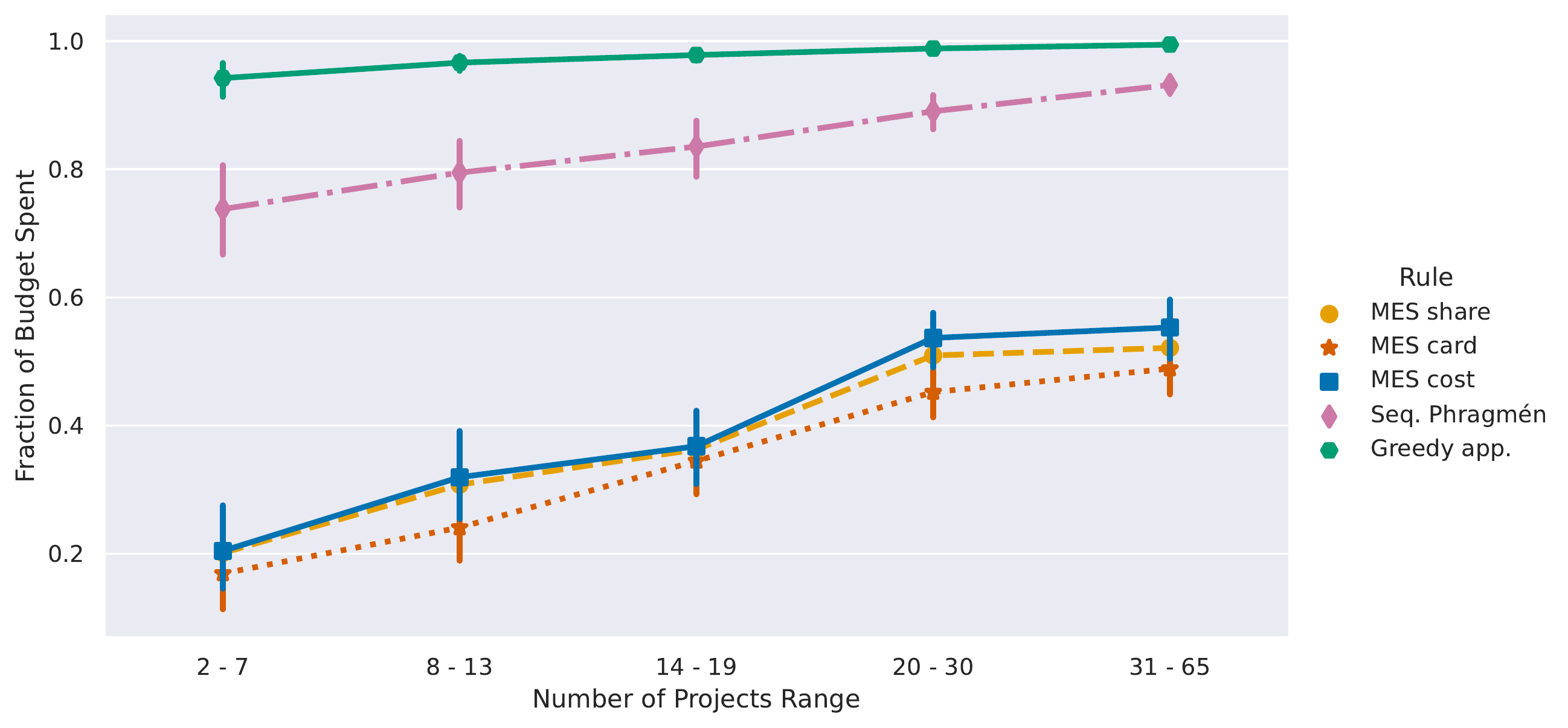}
		\caption{Fraction of the budget spent by the different rules.}
		\label{fig:fraction_spent}
	\end{figure}
	
	In the main text of the paper, we mention the difference in the budget used by the different rules.
	Figure~\ref{fig:fraction_spent} presents the cost of the outcome of the rules, expressed in fraction of the
	budget limit.
\end{document}
